\pgfplotsset{compat=1.7}    
\definecolor{lightgreen}{rgb}{0.56, 0.93, 0.56}  
\definecolor{lightblue}{rgb}{0.68, 0.85, 0.90}
\newlist{Ren}{enumerate}{3}          
\setlist[Ren,1]{label=\textbf{(R\arabic{Reni})},
                ref=R\arabic{Reni},
                leftmargin=*, itemsep=4pt}
\setlist[Ren,2]{label=\textbf{(R\arabic{Reni}\alph{Renii})},
                ref=R\arabic{Reni}\alph{Renii},
                leftmargin=*, labelindent=-1em, itemsep=4pt}
\setlist[Ren,3]{label=\textbf{(R\arabic{Reni}\alph{Renii}-\roman{Reniii})},
                ref=R\arabic{Reni}\alph{Renii}-\roman{Reniii},
                leftmargin=*, labelindent=-2em, itemsep=4pt}
\def\BibTeX{{\rm B\kern-.05em{\sc i\kern-.025em b}\kern-.08em
    T\kern-.1667em\lower.7ex\hbox{E}\kern-.125emX}}
\newcommand{\shortOnly}[1]{\ifthenelse{\boolean{short}}{#1}{}}
\newcommand{\onlyShort}[1]{\ifthenelse{\boolean{short}}{#1}{}}
\newcommand{\longOnly}[1]{\ifthenelse{\boolean{short}}{}{#1}}
\newcommand{\onlyLong}[1]{\ifthenelse{\boolean{short}}{}{#1}}
\newif\ifshowtikz
\newcommand{\sync}{$\mathcal {SYNC}$}
\newcommand{\async}{$\mathcal {ASYNC}$}
\newcommand{\dis}{\textsc{dispersion}}
\newtheorem{definition}{Definition}
\newtheorem{observation}{Observation}
\newtheorem{remark}{Remark}
\begin{document}
\title{\bf Optimal Dispersion Under Asynchrony}

\author{
Debasish Pattanayak$^1$, Ajay D. Kshemkalyani$^2$, Manish Kumar$^3$, Anisur Rahaman Molla$^4$, \\
Gokarna Sharma$^5$ \\\\
   $^1$Indian Institute of Technology Indore, India\\
   $^2$University of Illinois Chicago, USA\\
   $^3$Indian Institute of Technology Madras, India\\
   $^4$Indian Statistical Institute, Kolkata, India\\
 $^5$Kent State University, USA
}



\date{}

\maketitle
\begin{abstract}
We study the dispersion problem in anonymous port-labeled graphs: $k \leq n$ mobile agents, each with a unique ID and initially located arbitrarily on the nodes of an $n$-node graph with maximum degree $\Delta$, must autonomously relocate so that no node hosts more than one agent. Dispersion serves as a fundamental task in distributed computing of mobile agents, and its complexity stems from key challenges in local coordination under anonymity and limited memory. 

The goal is to minimize both the time to achieve dispersion and the memory required per agent. It is known that any algorithm requires $\Omega(k)$ time in the worst case, and $\Omega(\log k)$ bits of memory per agent. A recent result [SPAA'25] gives an optimal $O(k)$-time algorithm in the synchronous setting and an $O(k \log k)$-time algorithm in the asynchronous setting, both using $O(\log(k+\Delta))$ bits.

In this paper, we close the complexity gap in the asynchronous setting by presenting the first dispersion algorithm that runs in optimal $O(k)$ time using $O(\log(k+\Delta))$ bits of memory per agent. Our solution is based on a novel technique we develop in this paper that constructs a port-one tree in anonymous graphs, which may be of independent interest.

\end{abstract}

\section{Introduction}
\label{section:intro}
The problem of dispersion of mobile agents studied extensively in recent distributed computing literature not only takes its inspiration from biological phenomena, such as damselfish establishing non-overlapping territories on coral reefs \cite{brawley1977territorial}, or neural crest cells migrating and distributing themselves across the developing embryo \cite{le1999neural}; but also with practical applications such as placing a fleet of small autonomous robots (agents) under shelves (nodes) in fulfillment centers~\cite{wurman2008coordinating}.
It is also closely connected to other coordination tasks such as
exploration, scattering, load balancing, and self-deployment~\cite{Barriere2009,Cybenko:1989,Das18,Dereniowski:2015,Fraigniaud:2005,Subramanian:1994}.  
The {\em dispersion} problem, denoted as {\dis}, involves $k\leq n$ mobile
 agents placed initially arbitrarily on the nodes of an $n$-node anonymous graph of maximum degree $\Delta$. The goal for the agents is to autonomously relocate such that each agent is on a distinct node of the graph (see Fig.~\ref{fig:dispersion}).
 The objective is to design algorithms that simultaneously optimize time and memory complexities. 
 Time complexity is the total time required to achieve dispersion starting from any initial configuration. Memory complexity is the maximum number of bits stored in the persistent memory at each agent throughout the execution. We stress that graph nodes are memory-less and cannot store any information. Fundamental performance limits exist: certain graph topologies (e.g., line graphs) necessitate $\Omega(k)$ time for dispersion. Concurrently, $\Omega(\log k)$ memory bits per agent is required, at minimum for storing unique identifiers.

{\dis} has been studied in a series of papers, e.g.,  \cite{Augustine:2018,DasCALDAM21,GorainSSS22,ItalianoPS22,KshemkalyaniICDCN19,kshemkalyani2025dispersion,KshemkalyaniALGOSENSORS19,KshemkalyaniWALCOM20,KshemkalyaniICDCS20,KshemkalyaniICDCN20,KshemkalyaniOPODIS21,tamc19,MollaIPDPS21,MollaMM21} (see Table  \ref{table:comparision}).
The state-of-the-art is the two recent results due to Kshemkalyani {\it et al.} \cite{kshemkalyani2025dispersion}, one in the synchronous setting and another in the asynchronous setting. In the {\em synchronous} setting ({\sync}), 
all agents perform their operations simultaneously in synchronized rounds (or steps), and hence time complexity (of the algorithm) is measured in rounds. However, in the {\em asynchronous} setting ({\async}), agents become active at arbitrary times and perform their operations in arbitrary duration, and hence time complexity is measured in epochs -- an {\em epoch} represents the time interval in which each agent becomes finishes at least one cycle of execution. In {\sync}, an epoch is a round.  
In particular, the {\sync} algorithm of Kshemkalyani {\it et al.} \cite{kshemkalyani2025dispersion} has time complexity optimal $O(k)$ rounds and memory complexity $O(\log (k+\Delta))$ bits per agent.  Their {\async} algorithm \cite{kshemkalyani2025dispersion} has time complexity $O(k\log k)$ epochs and memory complexity $O(\log (k+\Delta))$ bits per agent. Time and memory complexities of Kshemkalyani {\it et al.} \cite{kshemkalyani2025dispersion} apply to both rooted (all $k$ agents initially at a node) and general initial configurations ($k$ agents initially on multiple nodes). 

\begin{figure}[!t]
\centering
\ifshowtikz
\resizebox{0.45\textwidth}{!}{%
\begin{tikzpicture}[
    scale=3.2,
    every node/.style={circle, draw=black, fill=white, minimum size=12mm, text=black, inner sep=1pt},
    port/.style={circle, fill=white, draw=black, text=black, font=\footnotesize, inner sep=1pt, minimum size=4mm},
    inner_circle/.style={draw=black, minimum size=3.5mm, inner sep=0pt},
    edge/.style={draw=black, thick}
  ]
  \definecolor{c1}{rgb}{1,0,0}      
  \definecolor{c2}{rgb}{0,0.8,0}    
  \definecolor{c3}{rgb}{0,0,1}      
  \definecolor{c4}{rgb}{0,1,1}      
  \definecolor{c5}{rgb}{1,0,1}      
  \definecolor{c6}{rgb}{0.5,0.5,0.5}      
  \definecolor{c7}{rgb}{1,0.65,0}   
  \definecolor{c8}{rgb}{0.54,0.087,0.89} 

  \node (n0)  at (0,0)    {};
  \node (n1)  at (1,1)    {};
  \node (n2)  at (1,-1)   {};
  \node (n3)  at (2, 0.3) {};
  \node (n4)  at (3, 1.5) {};
  \node (n5)  at (3.5, 0) {};
  \node (n6)  at (4.5, 1.2) {};
  \node (n7)  at (2.8, -1) {};
  \node (n8)  at (3.5, -1.8){};
  \node (n9)  at (4.5, -0.8){};

  \foreach \source/\dest/\psrc/\pdst in {
      n0/n1/1/1, n0/n2/2/1, n0/n3/3/1,
      n1/n3/2/2, n2/n3/2/3, n3/n4/4/1,
      n4/n6/2/1, n4/n5/3/1, n5/n6/2/2,
      n5/n7/3/1, n5/n9/4/1, n5/n8/5/1,
      n7/n8/2/2, n9/n8/2/3
  } {
      \draw[edge] (\source) -- (\dest);
      \node[port] at ($(\source)!3mm!(\dest)$) {\psrc};
      \node[port] at ($(\dest)!3mm!(\source)$) {\pdst};
  }
  \node[inner_circle, fill=c1, rectangle] at (n0.center) {};
  \node[inner_circle, fill=c2, signal, signal to=west] at ($(n3.center)+(-0.08, 0.08)$) {};
  \node[inner_circle, fill=c5, star] at ($(n3.center)+(0.08, 0.08)$) {};
  \node[inner_circle, fill=c3, cylinder, minimum height=2pt] at ($(n3.center)+(-0.08, -0.08)$) {};
  \node[inner_circle, fill=c4, regular polygon, regular polygon sides = 3] at ($(n3.center)+(0.08, -0.08)$) {};
  \node[inner_circle, fill=c7, diamond] at ($(n5.center)+(0.0, 0.08)$) {};
  \node[inner_circle, fill=c8, signal, signal to=east] at ($(n5.center)+(-0.08, -0.08)$) {};
  \node[inner_circle, fill=c6, circle] at ($(n5.center)+(0.08, -0.08)$) {};

\end{tikzpicture}%
}
\hspace{1em} 
\resizebox{0.45\textwidth}{!}{%
\begin{tikzpicture}[
    scale=3,
    every node/.style={circle, draw=black, fill=white, minimum size=12mm, text=black, inner sep=1pt},
    port/.style={circle, fill=white, draw=black, text=black, font=\footnotesize, inner sep=1pt, minimum size=4mm},
    inner_circle/.style={draw=black, minimum size=3.5mm, inner sep=0pt},
    edge/.style={draw=black, thick}
  ]
  \definecolor{c1}{rgb}{1,0,0}      
  \definecolor{c2}{rgb}{0,0.8,0}    
  \definecolor{c3}{rgb}{0,0,1}      
  \definecolor{c4}{rgb}{0,1,1}      
  \definecolor{c5}{rgb}{1,0,1}      
  \definecolor{c6}{rgb}{0.5,0.5,0.5}      
  \definecolor{c7}{rgb}{1,0.65,0}   
  \definecolor{c8}{rgb}{0.54,0.087,0.89} 

  \node (n0)  at (0,0)    {};
  \node (n1)  at (1,1)    {};
  \node (n2)  at (1,-1)   {};
  \node (n3)  at (2, 0.3) {};
  \node (n4)  at (3, 1.5) {};
  \node (n5)  at (3.5, 0) {};
  \node (n6)  at (4.5, 1.2) {};
  \node (n7)  at (2.8, -1) {};
  \node (n8)  at (3.5, -1.8){};
  \node (n9)  at (4.5, -0.8){};

   \foreach \source/\dest/\psrc/\pdst in {
      n0/n1/1/1, n0/n2/2/1, n0/n3/3/1,
      n1/n3/2/2, n2/n3/2/3, n3/n4/4/1,
      n4/n6/2/1, n4/n5/3/1, n5/n6/2/2,
      n5/n7/3/1, n5/n9/4/1, n5/n8/5/1,
      n7/n8/2/2, n9/n8/2/3
  } {
      \draw[edge] (\source) -- (\dest);
      \node[port] at ($(\source)!3mm!(\dest)$) {\psrc};
      \node[port] at ($(\dest)!3mm!(\source)$) {\pdst};
  }

  \node[inner_circle, fill=c1, rectangle] at (n0.center) {};
  \node[inner_circle, fill=c2, signal, signal to=west] at (n1.center) {};
  \node[inner_circle, fill=c3, cylinder, minimum height=2pt] at (n2.center) {};
  \node[inner_circle, fill=c4, regular polygon, regular polygon sides = 3] at (n3.center) {};
  \node[inner_circle, fill=c5, star] at (n4.center) {};
  \node[inner_circle, fill=c6, circle] at (n5.center) {};
  \node[inner_circle, fill=c7, diamond] at (n6.center) {};
  \node[inner_circle, fill=c8, signal, signal to=east] at (n7.center) {};

\end{tikzpicture}%
}
\fi
\caption{{\dis} of 8 mobile agents in a 10-node graph. On the left, 8 agents are initially located at three different nodes. On the right, agents are dispersed to occupy one node each.}
\label{fig:dispersion}
\end{figure}
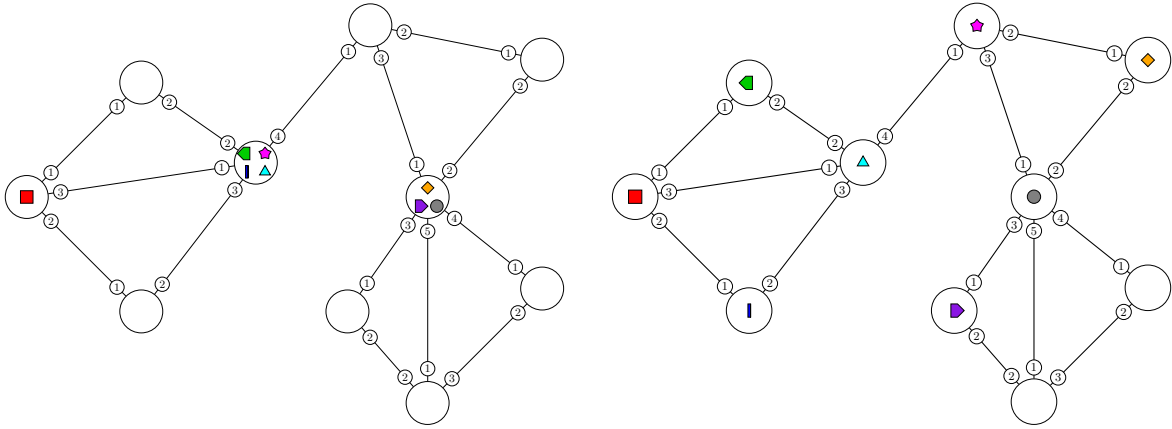

\begin{table*}[!t]
\centering
\resizebox{\textwidth}{!}{  
\begin{tabular}{cccc}
\toprule
\textbf{Algorithm} & \textbf{Memory/agent (in bits)} & \textbf{Time (in rounds/epochs)} & \textbf{Rooted/General} \\
\toprule
Lower bound & $\Omega(\log k)$ \cite{KshemkalyaniICDCN19} & $\Omega(k)$ \cite{KshemkalyaniICDCN19} & any \\
\midrule
\multicolumn{4}{c}{\textbf{Synchronous Algorithms}} \\
\midrule
\cite{KshemkalyaniALGOSENSORS19}$^\dagger$ & \cellcolor{lightgreen}$O(\log(k+\Delta))$ & $O(\min\{m,k\Delta\} \cdot \log k)$ & any \\
\cite{ShintakuSKM20} & \cellcolor{lightgreen}$O(\log(k+\Delta))$ & $O(\min\{m,k\Delta\} \cdot \log k)$ & any \\
\cite{KshemkalyaniICDCN20} & $O(D + \Delta \log k)$ & $O(D\Delta(D + \Delta))$ & rooted \\
\cite{sudo24} & \cellcolor{lightgreen}$O(\log \Delta)$ & $O(k \cdot \log k)$ & rooted \\
\cite{sudo24} & $O(\Delta)$ & \cellcolor{lightblue}$O(k)$ & rooted \\
\cite{sudo24} & \cellcolor{lightgreen}$O(\log(k+\Delta))$ & $O(k \cdot \log^2 k)$ & general \\
\cite{kshemkalyani2025dispersion} & \cellcolor{lightgreen}$O(\log(k+\Delta))$ & \cellcolor{lightblue}$O(k)$ & any \\
\midrule
\multicolumn{4}{c}{\textbf{Asynchronous Algorithms}} \\
\midrule
\cite{Augustine:2018} & $O(k \log(k+\Delta))$ & $O(\min\{m,k\Delta\})$ & any \\
\cite{KshemkalyaniICDCN19} & \cellcolor{lightgreen}$O(\log(k+\Delta))$ & $O(\min\{m,k\Delta\} \cdot k)$ & any \\
\cite{KshemkalyaniOPODIS21} & \cellcolor{lightgreen}$O(\log(k+\Delta))$ & $O(\min\{m,k\Delta\})$ & any \\
\cite{kshemkalyani2025dispersion} & \cellcolor{lightgreen}$O(\log(k+\Delta))$ & $O(k \cdot \log k)$ & any \\
\textbf{This paper} & \cellcolor{lightgreen}$\bm{O(\log(k+\Delta))}$ & \cellcolor{lightblue}$\bm{O(k)}$ & \textbf{any} \\
\bottomrule
\end{tabular}
}
\caption{{\dis} of $k \leq n$ agents on an anonymous $n$-node $m$-edge graph of diameter $D$ and maximum degree $\Delta$. $^\dagger$Requires knowledge of $m, k, n, \Delta$. Optimal memory cells are highlighted in green, and optimal time cells are highlighted in blue.}
\label{table:comparision}
\vspace{-4mm}
\end{table*}

\vspace{2mm}
\noindent{\bf Contributions.} In light of the state-of-the-art results from  Kshemkalyani {\it et al.} \cite{kshemkalyani2025dispersion}, the following question naturally arises: {\em Can optimal $O(k)$-epoch solution be designed for {\dis} in {\async}?}  Such  a contribution would complete the picture of optimal solutions to {\dis}.  
In this paper, we  answer this question in the affirmative by providing an optimal $O(k)$-epoch solution in {\async} with $O(\log(k+\Delta))$ bits per agent.  
Our result shows that synchrony assumption is irrelevant for time optimal dispersion. 

The result is possible from a novel construction of a special tree, which we call a {\em Port-One tree} (denoted as {\sc P1Tree}), that we introduce in this paper. {\sc P1Tree} prioritizes the edges with port-1 at either of its endpoints. 
This prioritization helps to find the  \emph{empty} neighbor nodes of a node (if exist) to settle agents in $O(1)$ epochs even in {\async} with $O(\log(k+\Delta))$ memory. Kshemkalyani {\it et al.} \cite{kshemkalyani2025dispersion} were able to do so only in {\sync} and their approach does not extend to {\async}.  Since $k$ agents can settle solving dispersion after visiting $k$ empty nodes and visiting each such empty nodes takes $O(1)$ epochs, the whole algorithm finishes in optimal $O(k)$ epochs.  The {\sc P1Tree} concept may be useful in solving many other coordination problems in anonymous port-labeled graphs optimizing time and memory complexities. 

\vspace{2mm}
\noindent{\bf Challenges.} 
Existing {\dis} algorithms largely relied on breadth-first-search (BFS) and depth-first-search (DFS) techniques, with preference for DFS due to its advantage for optimizing memory complexity along with time complexity.
Given $k$ agents in a rooted initial configuration, DFS starts in the forward phase and works alternating between forward and backtrack phases until $k-1$ empty nodes are visited to solve {\dis}. 
During each forward phase, one such empty node becomes settled. To visit $k$ different empty nodes, a DFS must perform at least $k-1$ forward phases, and at most $k-1$ backtrack phases. Therefore, the best DFS time complexity is $2(k-1)=O(k)$, which is asymptotically optimal since in graphs (such as a line graph) exactly $k-1$ forward phases are needed in the worst-case (consider the case of all $k$ agents are on the either end node of the line graph).
Therefore, the challenge is to limit the traversal to exactly $k-1$ forward phases and finish each in $O(1)$ time to obtain $O(k)$ time bound.
Suppose DFS is currently at a node. 
To finish a forward phase at a node in $O(1)$ time, 
an empty neighbor node (if exists) of that node need to be found in $O(1)$ time; an empty node is the one that has no agent positioned on it yet.  
The state-of-the-art result of Kshemkalyani {\it et al.} \cite{kshemkalyani2025dispersion} developed a technique to find an empty neighbor of a node in $O(1)$ rounds in {\sync} and $O(\log \min\{k,\Delta\}{})$ epochs in {\async}. 
This technique allowed Kshemkalyani {\it et al.} \cite{kshemkalyani2025dispersion} 
 to achieve $O(k)$-round solution in {\sync} and $O(k\log k)$-epoch solution in {\async} in rooted initial configurations. 

 In general initial configurations, let $\ell$ be the number of nodes with two or more agents on them, which we call {\em multiplicity nodes} 
(for the rooted case, $\ell=1$). There will be $\ell$ DFSs initiated from those $\ell$ multiplicity nodes. It may be the case that two or more DFSs {\em meet}. The meeting situation needs to be handled in a way that ensures it does not increase the time required to find empty neighbor nodes. Kshemkalyani {\it et al.} \cite{kshemkalyani2025dispersion}  handled such meetings in additional time proportional to $O(k)$ in {\sync} and $O(k\log k)$ in {\async} using the size-based subsumption technique of Kshemkalyani and Sharma \cite{KshemkalyaniOPODIS21}. This allowed  Kshemkalyani {\it et al.} \cite{kshemkalyani2025dispersion} 
 to achieve an $O(k)$-round solution in {\sync} and an $O(k\log k)$-epoch solution in {\async} even starting from general initial configurations.

To have an $O(k)$-epoch solution in {\async} (for rooted and general initial configurations), we need a technique to find an empty neighbor node (if exists) of a node in $O(1)$ epochs. A natural direction is to explore whether Kshemkalyani {\it et al.} \cite{kshemkalyani2025dispersion}'s technique can be extended to {\async}. The major obstacle on doing so is the technique of {\em oscillation} used in Kshemkalyani {\it et al.} \cite{kshemkalyani2025dispersion}.
To have sufficient agents to explore neighbors in $O(1)$ time, Kshemkalyani {\it et al.} \cite{kshemkalyani2025dispersion} leave $\lceil k/3\rceil$ nodes visited by DFS empty, which we call {\em vacated} nodes. The $\lceil k/3\rceil$ agents that were supposed to settle at those vacated nodes are then used in neighborhood search. This approach created one problem: while probing at a node, how to differentiate an empty neighbor node from a vacated neighbor.  Kshemkalyani {\it et al.} \cite{kshemkalyani2025dispersion} overcame this difficulty as follows. They selected vacated nodes in such a way that there is an occupied node (with an agent on it) that makes an oscillation trip of length $6$ in time $6$ rounds to cover the vacated nodes assigned to it. The probing agent waits at the (possibly empty) neighbor for 6 rounds before returning. 
While waiting at a vacated node, it is guaranteed that an oscillating agent reaches that vacated node during those 6 rounds. For empty node, no such agent reaches that node.      
It is easy to see that why this approach does not extend to  {\async}: Making decision by probing agents on how long to wait at an empty neighbor for an oscillating agent to arrive (or not arrive) is difficult since agents do not have agreement on duration and the start/end of each computational step (i.e., no agreed-upon round definition) due to asynchrony.

The novel construction of a Port-One Tree ({\sc P1Tree}) in this paper allowed us to obviate the need of oscillation to differentiate empty neighbor nodes (if exist) from the vacated neighbor nodes and hence making the technique suitable for {\async}, and additionally,  guaranteeing that the forward phase at a node can finish in $O(1)$ epochs.  
A naive implementation of {\sc P1Tree}, however, needs $O(\Delta)$ bits, which we reduce to $O(\log (k+\Delta))$ through a clever approach.  Although the {\sc P1Tree} construction sounds straightforward, making everything work together needed a careful treatment of several ideas, which we describe in the following. 
%

\vspace{2mm}
\noindent{\bf Techniques.}
While Kshemkalyani {\it et al.} \cite{kshemkalyani2025dispersion} guaranteed $\lceil k/3\rceil$ agents for probing, we guarantee at least $\lceil (k-2)/3\rceil$ agents 
for probing neighbor nodes  (which we call {\em parallel probing}); and show that it is sufficient for $O(1)$ time neighborhood search.
We make $\lceil (k-2)/3\rceil$ agents available by selectively vacating certain nodes of {\sc P1Tree} after an agent settles. This selection typically vacates nodes for which the port-1 neighbor or the port-1 neighbor of a port-1 neighbor is not vacated. 
%
While having previously settled agents helps in $O(1)$ time parallel probing, guaranteeing their availability and not using oscillation create five major challenges Q1--Q5 below. 
We need some notations. Each agent has state from \emph{settled, unsettled, settledScout}. Initially, all $k$ agents are {\em unsettled}, and they travel with the DFShead. At every new node, one agent settles. For a node $v$, we denote the settled agent at that node by $\psi(v)$. The settled agent $\psi(v)$ may not always remain at $v$.


 \begin{itemize}
\item [{\bf Q1.}] {\bf How to run DFS?} 
We run DFS such that it constructs a {\sc P1Tree}, which primarily consists of edges containing port number 1 at their either or both ends (which we call {\em port-1-incident} edges). Therefore, the DFS at a node prioritizes visiting empty neighbors reached via port-1-incident edges.  This priority may create a cycle when  a port-1-incident edge takes the DFS to a node which is already part of a {\sc P1Tree} built so far.  We avoid such cycles by adding an edge which is a non-port-1-incident edge. Additionally, we guarantee that the DFS will never add two consecutive non-port-1-incident edges. It is easy to see that any {\sc P1Tree} has  at least $\lceil n/2\rceil$ port-1-incident edges on it and at most $n-1-\lceil n/2\rceil$ non-port-1-incident edges.   
 
    \item [{\bf Q2.}] {\bf Which {\sc P1Tree} nodes to leave vacant?} 
     We guarantee that we can leave at least $\lfloor l/3\rfloor$ nodes of {\sc P1Tree} of size $l$ vacant. The settled agents at these  $\lfloor l/3\rfloor$ vacant nodes travel with the DFShead and help with parallel probing until DFS ends. The challenge is how to meet such requirement. The general rule of thumb for this decision is as follows. Consider a {\sc P1Tree} node $v$. If $v$ has a port-1-neighbor or port-1 neighbor of port-1 neighbor that is occupied (is not vacant/empty), leave $v$ vacant and collect the agent $\psi(v)$ as a scout (we call $v$ a {\em vacated} node). 

     \item [{\bf Q3.}] {\bf How and where to keep information about vacated nodes of the {\sc P1Tree}?} There are two options: (i) Store the information of vacant node at its occupied port-1 neighbor or (ii) Store information about the port-1 neighbor at the agent $\psi(v)$. Option (i) is problematic since port-1 neighbor of $v$ may also be the port-1 neighbor of multiple nodes and hence the memory need becomes at least $O(\Delta)$ bits. We use Option (ii) such that each agent only keeps track of one port-1 neighbor, using $O(\log (k+\Delta))$ bits.  
    $\psi(v)$ stores the ID of port-1 neighbor, and the port at port-1 neighbor, so that later in parallel probing, $v$ can be correctly identified as vacated.
    
    \item [{\bf Q4.}] {\bf How to successfully run DFS despite some of the {\sc P1Tree} nodes vacant?} 
    Suppose a node $v\in G$. If $v$ is in {\sc P1Tree}, it is either occupied or vacated. If $v$ is not in {\sc P1Tree}, it is empty.
    Suppose a scout agent $a_s$ doing parallel probing from $x$ reaches $x$'s neighbor node $y$. If $a_s$ finds $y$ empty, it checks the port-1 neighbor $z$ of $y$. If $z$ is occupied, $a_s$ returns to $x$. If $z$ is not occupied, it visits port-1 neighbor $w$ of $z$. After visiting $w$, $a_s$ returns to $x$. 
    Due to our strategy of choosing vacant nodes, if $y$ is not empty, then $z$ or $w$ (or both) must be occupied. Even if $y$ and $z$ are vacant, the scout agent can always determine the settled agent at $y$ and $z$ by checking if such agents exist in the scout pool at $x$. Notice that scout $a_s$ visits a (at most) 3 hop neighbor of $w$ in parallel probe starting from $w$, and hence each parallel probe finishes in $6=O(1)$ epochs. 
    The probe needs to search at most $k-2$ ports (excluding parent port) at a non-root node, and the DFShead has at least one unsettled agent. 
    Thus $\lceil (k-2)/3\rceil$ scouts, probing at a node finishes searching in 3 iterations taking at most $18 = O(1)$ epochs. Furthermore, when the degree is more than $k-2$, all unsettled agents can settle by finding empty neighbors with one instance of parallel probing.
    
    \item [{\bf Q5.}] {\bf How to return scout agents to the vacated nodes of the {\sc P1Tree} after DFS finishes?} After having $k$ nodes in {\sc P1Tree}, the DFS finishes. Notice that each scout is associated with a node of {\sc P1Tree}. We ask each scout to carry information about parent/child/sibling details (both ID and port). This information allows the scouts to re-traverse {\sc P1Tree} in post-order of DFS and settle at their associated node when reached. 
    We prove that this re-traversal process finish in $O(k)$ time with $O(\log (k+\Delta))$ memory at each scout.     
\end{itemize}

\noindent{\bf Handling general initial configurations.}
So far we discussed techniques to achieve $O(k)$ time complexity for rooted initial configurations. In general initial configurations, 
there will be $\ell$ DFSs initiated from $\ell$ multiplicity nodes ($\ell$ not known). Each DFS follows the approach as in the rooted case. Let a node has $k_1$ agents running DFS $D_1$. We show that $D_1$ finishes in $O(k_1)$ epochs if $D_1$ does not meet any other DFS, say $D_2$.
In case of a meeting, we develop an approach that handles the meeting of two DFSs $D_1$ and $D_2$ with overhead the size of the larger DFS between the two. In other words, $k_1+k_2$ agents that belong to $D_1$ and $D_2$ disperse in $O(k_1+k_2)$ epochs. If a meeting with the third DFS $D_3$  occurs, we show that it is handled with time complexity $O(k_1+k_2+k_3)$ epochs. Therefore, the worst-case time complexity starting from any $\ell$ multiplicity  nodes becomes $O(k)$. 
Specifically, to achieve this runtime, we extend the {\em size-based subsumption} technique of Kshemkalyani and Sharma \cite{KshemkalyaniOPODIS21} which was also used in the state-of-the-art result of Kshemkalyani {\it et al.} \cite{kshemkalyani2025dispersion}.  The subsumption technique works as follows. 
Suppose DFS $D_1$ meets DFS $D_2$ at node $w$ (notice that $w$ belongs to $D_2$). Let $|D_i|$ denote the number of agents settled from DFS $D_i$ (i.e., the number of nodes in {\sc P1Tree} $T_{D_i}$ built by $D_i$ so far). $D_1$ subsumes $D_2$ if and only if $|D_2|<|D_1|$, otherwise $D_2$ subsumes $D_1$. The agents settled from subsumed DFS (as well as scouts) are collected and given to the subsuming DFS to continue with its DFS, which essentially means that the subsumed DFS does not exist anymore. This subsumption technique guarantees that one DFS out of $\ell'$ met DFSs (from $\ell$ nodes) always remains subsuming and grows monotonically until all agents settle forming a single {\sc P1Tree}.

\vspace{2mm}
\noindent{\bf Related work.}
Table~\ref{table:comparision} reviews the state-of-the-art time– and memory-efficient solutions for {\dis}.  
The current best algorithms are due to Kshemkalyani {\it et al.}~\cite{kshemkalyani2025dispersion}, who give an optimal $O(k)$-round solution with $O(\Delta+\log k)$ memory in the synchronous model and an $O(k\log k)$-epoch solution with $O(\log(k+\Delta))$ memory in the asynchronous model.  
Our contribution attains the same optimal $O(k)$ bound in the asynchronous model while matching the $O(\log(k+\Delta))$ memory footprint.

These advances extend a long research line~\cite{Augustine:2018,BKM24,BKM25,ChandKMS23,DasCALDAM21,GorainSSS22,ItalianoPS22,KaurD2D23,KshemkalyaniICDCN19,KshemkalyaniALGOSENSORS19,KshemkalyaniWALCOM20,KshemkalyaniICDCS20,KshemkalyaniOPODIS21,tamc19,SaxenaK025,Saxena025,ShintakuSKM20}.  
Most papers study the fault-free setting; notable exceptions handle Byzantine agents~\cite{BKM25,MollaIPDPS21,MollaMM21} or crash faults~\cite{BKM24,BKM25,ChandKMS23,Pattanayak-WDALFR20}.  
The prevailing communication model is {\em local} (only co-located agents interact); the sole {\em global} variant appears in~\cite{KshemkalyaniICDCN20}.  
While nearly all works assume static graphs, dynamic topologies were explored in~\cite{KshemkalyaniICDCS20,SaxenaK025,Saxena025}.  
Deterministic algorithms dominate; randomization was used mainly to optimize memory~\cite{DasCALDAM21,tamc19}; however several deterministic solutions attained the same memory complexity.  
Variants with restricted communication~\cite{GorainSSS22} or constrained final configurations (e.g., independent-set dispersion~\cite{KaurD2D23}) have also been considered.

Although the general problem is tackled on arbitrary graphs, specialized studies address grids~\cite{BKM24,BKM25,KshemkalyaniWALCOM20}, rings~\cite{Augustine:2018,MollaMM21}, and trees~\cite{Augustine:2018,KshemkalyaniICDCN20}.  
Some algorithms further assume a priori knowledge of parameters such as $n$, $m$, $\Delta$, or $k$~\cite{ChandKMS23,KshemkalyaniALGOSENSORS19}.

{\dis} is closely related to graph exploration~\cite{Bampas:2009,Cohen:2008,Dereniowski:2015,Fraigniaud:2005,MencPU17}, scattering on rings and grids~\cite{ElorB11,Shibata:2016,Barriere2009,Das16,Poudel18}, and load balancing~\cite{Cybenko:1989,Subramanian:1994}.  
Most recently, a {\dis} routine has been leveraged to elect a leader and to compute graph-level structures such as MST, MIS, and MDS~\cite{KshemkalyaniKMS24}.

\onlyLong{%
Exploration with limited memory has been intensely studied as well.  
Fraigniaud {\it et al.}~\cite{Fraigniaud:2005} showed that a single agent with $\Theta(D\log\Delta)$ bits can explore any anonymous graph in $O(\Delta^{D+1})$ time.  
Allowing nodes to store a few bits, Cohen {\it et al.}~\cite{Cohen:2008} proposed two trade-offs:  
(i) $O(1)$ bits at the agent and $2$ bits per node, and  
(ii) $O(\log\Delta)$ bits at the agent and $1$ bit per node; both algorithms run in $O(m)$ time after an $O(mD)$ preprocessing phase.  
The time/agent-count trade-off is analyzed in~\cite{MencPU17}, and rotor-router exploration bounds are given in~\cite{Bampas:2009}.  
Collective exploration of trees appears in~\cite{FraigniaudGKP06}.  
To our knowledge, we are the first to handle both crash and Byzantine faults for {\dis} on arbitrary graphs within the local communication model.}

\vspace{2mm}
\noindent{\bf Roadmap.} 
The model  details and preliminaries are given in Section \ref{section:model}.  
In Section~\ref{sec:spanning-tree}, we build some techniques, including a {\sc P1Tree} construction via a DFS traversal. We then discuss in Section \ref{sec:agentp1tree} how to construct a {\sc P1Tree} with mobile agents, which is crucial for our algorithm in {\async}. 
The $O(k)$-time {\async} algorithm for the rooted case is described in Section \ref{sec:rooted}. An extension to the general case keeping $O(k)$ time is described in Section \ref{sec:general}.  
Finally, Section \ref{sec:conclusion} concludes the paper with a  short discussion. Pseudocodes and some proofs are deferred to Appendix. 

\section{Model and Preliminaries}
\label{section:model}

\noindent{\bf Graph.} We consider a simple, undirected, connected graph $G = (V, E)$, where $n = |V|$ is the number of nodes and $m = |E|$ is the number of edges. For any node $v \in V$, let $N(v)$ denote the set of its neighbors and let $\delta_v = |N(v)|$ denote its degree. The maximum degree of the graph is $\Delta = \max_{v \in V} \delta_v$.
Graph nodes are \emph{anonymous} (i.e., they lack unique identifiers). However, the graph is \emph{port-labeled}: at each node $v$, the incident edges are assigned distinct local labels (port numbers) from $1$ to $\delta_v$, enabling agents at $v$ to distinguish between the outgoing edges.
The port number at $u$ for an edge $\{u, v\}$ is denoted by $p_{uv}$. An edge $\{u, v\}$ is associated with two port numbers: $p_{uv}$ at node $u$ and $p_{vu}$ at node $v$. These port numbers are assigned locally and independently at each endpoint; hence, it is possible that $p_{uv} \ne p_{vu}$, and there is no inherent correlation between port assignments at different nodes. Each node $u\in V$ is memory-less.

\vspace{2mm}
\noindent{\bf Agents.} 
The system comprises $k \leq n$ mobile agents, $A = \{a_1, \dots, a_k\}$. Each agent $a_i$ is endowed with a unique positive integer identifier, $a_i.\mathsf{ID}$, drawn from the range $[1,\, k^{O(1)}]$.
Since agents are assumed to be positioned arbitrarily initially, there may be the case that all $k\leq n$ agents are at the same node, which we denote as {\em rooted initial configuration}. Any initial configuration that is not rooted is denoted as {\em general}. In any general initial configuration, agents are on at least two nodes. A special case of general configuration is a {\em dispersion configuration} in which $k$ agents are on $k$ different nodes.
We consider the {\em local} model in which a agent at a node can only communicate with other agents co-located at that node.

\vspace{2mm}
\noindent{\bf Time cycle.} An agent $a_i$
 could be active at any time. Upon activation, $a_i$ executes the ``Communicate-Compute-Move'' (CCM) cycle as follows. 
 \begin{itemize}
     \item[]
     {\bf Communicate:} Agent $a_i$ positioned at node $u$ can observe the memory of another agent $a_j$ positioned at node $u$. Agent $a_i$ can also observe its own memory.
     \item[] {\bf Compute:} 
 Agent $a_i$ may perform an arbitrary computation using the information observed during the ``Communicate'' portion of that cycle. This includes the determination of a port to use to exit $u$
 and the information to store in the agent $a_j$
 that is at $u$.
     \item[] {\bf Move:} At the end of the cycle, 
$a_i$ writes new information in its memory as well as in the memory of an agent $a_j$
 at 
$u$, and exits $u$  %
using the computed port to reach a neighbor. 
$u$. 
 \end{itemize}

\vspace{1mm}
\noindent{\bf Round, epoch, time, and memory complexity.}
In {\sync}, all agents have common notion of time and activate in discrete intervals called \emph{rounds}. 
In {\async}, agents can have arbitrary activation times and can activate at arbitrary frequency. The restriction is that every agent is active infinitely often, and each cycle finishes in finite time. 
An {\em epoch} is a minimal interval within which each agent finishes at least one CCM cycle \cite{Cord-LandwehrDFHKKKKMHRSWWW11}. Formally, let $t_0$ denote the start time. Epoch $i\geq 1$ is the time interval from $t_i-1$ to $t_i$ where $t_i$ is the first time instant after $t_i-1$ when each agent has finished at least one complete CCM cycle. Therefore, for {\sync}, a round is an epoch. 
We will use the term ``time'' to mean rounds for {\sync} and epochs for {\async}.
Memory complexity is the number of bits stored at any agent over one CCM cycle to the next. The temporary memory needed during the Compute phase is considered free.

\vspace{2mm}
\noindent{\bf Some terminologies.}
Throughout the paper,  we encode an edge $\{u, v\}$  by the 4-tuple
\[
e \;=\;[\,u,\,p_{uv},\,p_{vu},\,v\,],\qquad 
1\le p_{uv}\le\delta_{u},\;1\le p_{vu}\le\delta_{v},
\]
where $u,v$ are the (anonymous) end-nodes and
$p_{uv}$ (resp.\ $p_{vu}$) is the port number of $e$ at $u$ (resp.\ $v$).
We define type of edge $\{u,v\}$ at $u$ as follows:
\begin{itemize}
    \item \texttt{t11}, if $p_{uv}=1=p_{vu}$
    \item \texttt{tp1}, if $p_{uv}\neq 1=p_{vu}$
    \item \texttt{t1q}, if $p_{uv}=1\neq p_{vu}$
    \item \texttt{tpq}, if $p_{uv}\neq 1\neq p_{vu}$.
\end{itemize}
We denote the type of an edge $\{u,v\}$ by $\texttt{type}(\{u,v\}) = \texttt{type}(p_{uv},p_{vu}) 
\in \{\texttt{tp1},\texttt{t11},\texttt{t1q},\texttt{tpq}\}$.
\section{Port-One Tree and its Construction}
\label{sec:spanning-tree}

In this section, we first define port-one tree ({\sc P1Tree}) and discuss its centralized and distributed construction. Then we describe a method of selecting nodes to be ``vacated'' on the constructed {\sc P1Tree}. 
We finally discuss parallel probing technique. 

\vspace{2mm}
\noindent{\bf Port-One Tree ({\sc P1Tree}).}
Intuitively, every vertex in a {\sc P1Tree} $\mathcal{T}$ is incident to at least one tree edge carrying port~$1$ at one of its end-points. Formally,
\begin{definition}[Port-One Tree (\textsc{P1Tree})]
\label{definition:p1tree}
Let $G=(V,E)$ be an anonymous port-labeled graph.
A tree $\mathcal{T}\subseteq E$ is a {\sc P1Tree} 
if each vertex $v\in \mathcal{T}$ has  (at least) one incident edge leading to, say node w, such that  edge $\{v,w\}\in \mathcal{T}$ and
 $\texttt{type}(\{v,w\}) \in \{\texttt{tp1}, \texttt{t11}, \texttt{t1q}\}.$
%
%
\end{definition}

There may be multiple trees $\mathcal{T}$ that satisfy Definition \ref{definition:p1tree} of a {\sc P1Tree}. Let that set of trees $\mathcal{T}$ be denoted as $\mathbf{T}$. In this paper, we are interested in constructing a {\sc P1Tree} $\mathcal{T}
\in \mathbf{T}$. 
We will prove later there exists at least a  \textsc{P1Tree} $\mathcal{T}\in \mathbf{T}$ for any graph $G$. 
%
%
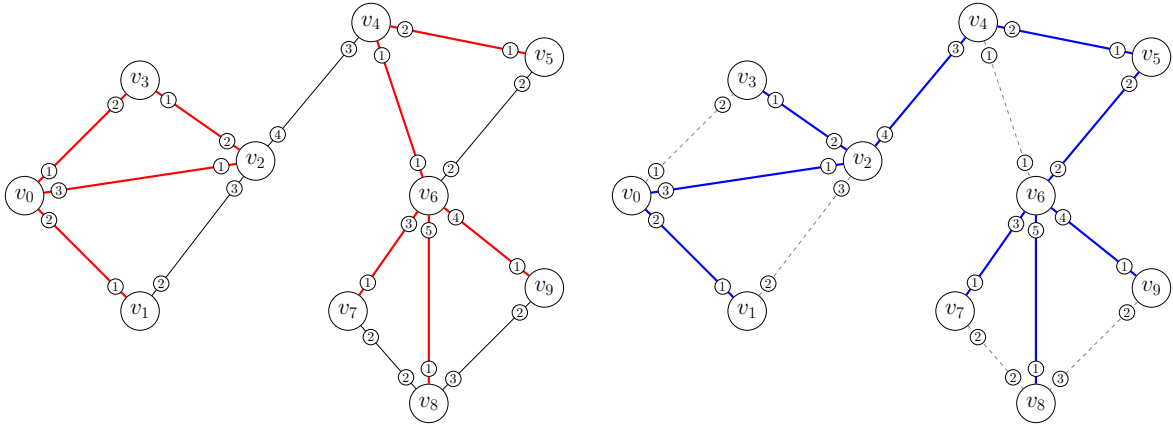
\begin{figure}[htbp] 
    \centering
\ifshowtikz

    \resizebox{0.45\textwidth}{!}{%
    \begin{tikzpicture}[
        scale=3,
        every node/.style={circle, draw=black, fill=white, minimum size=10mm, text=black, inner sep=1pt, font=\LARGE},
        port/.style={circle, fill=white, draw=black, text=black, font=\small, inner sep=1pt, minimum size=4mm},
        highlight_port1_style/.style={draw=red, ultra thick}, 
        default_edge_style/.style={draw=black}   
      ]

      \node (n0)  at (0,0)    {$v_0$};
      \node (n1)  at (1,1)    {$v_3$};
      \node (n2)  at (1,-1)   {$v_1$};
      \node (n3)  at (2, 0.3) {$v_2$};
      \node (n4)  at (3, 1.5) {$v_4$};
      \node (n5)  at (3.5, 0) {$v_6$};
      \node (n6)  at (4.5, 1.2) {$v_5$};
      \node (n7)  at (2.8, -1) {$v_7$};
      \node (n8)  at (3.5, -1.8){$v_8$};
      \node (n9)  at (4.5, -0.8){$v_9$};

      \foreach \source/\dest/\psrc/\pdst/\elabel/\isTree in {
          n0/n1/1/2/e1/1,   
          n0/n2/2/1/e2/1,   
          n0/n3/3/1/e3/1,   
          n1/n3/1/2/e4/0, 
          n2/n3/2/3/e5/0, 
          n3/n4/4/3/e6/1,   
          n4/n6/2/1/e7/0,   
          n4/n5/1/1/e8/1,   
          n5/n6/2/2/e9/1, 
          n5/n7/3/1/e10/1,  
          n5/n9/4/1/e11/1,  
          n5/n8/5/1/e12/0,  
          n7/n8/2/2/e13/1,
          n9/n8/2/3/e14/0 
      } {
          \def\currentedgestyle{default_edge_style} 
          \ifnum \psrc = 1
              \def\currentedgestyle{highlight_port1_style} 
          \else
              \ifnum \pdst = 1
                  \def\currentedgestyle{highlight_port1_style} 
              \fi
          \fi

          \draw[\currentedgestyle] (\source) -- (\dest);

          \node[port] at ($(\source)!3mm!(\dest)$) {\psrc};
          \node[port] at ($(\dest)!3mm!(\source)$) {\pdst};
      }

    \end{tikzpicture}%
    }
    \hspace{1em} 
    \resizebox{0.45\textwidth}{!}{%
    \begin{tikzpicture}[
        scale=3,
        every node/.style={circle, draw=black, fill=white, minimum size=10mm, text=black, inner sep=1pt, font=\LARGE},
        port/.style={circle, fill=white, draw=black, text=black, font=\small, inner sep=1pt, minimum size=4mm},
        tree_edge_style/.style={draw=blue, ultra thick}, 
        non_tree_edge_style/.style={draw=gray, dashed, thin} 
      ]

      \node (n0)  at (0,0)    {$v_0$};
      \node (n1)  at (1,1)    {$v_3$};
      \node (n2)  at (1,-1)   {$v_1$};
      \node (n3)  at (2, 0.3) {$v_2$};
      \node (n4)  at (3, 1.5) {$v_4$};
      \node (n5)  at (3.5, 0) {$v_6$};
      \node (n6)  at (4.5, 1.2) {$v_5$};
      \node (n7)  at (2.8, -1) {$v_7$};
      \node (n8)  at (3.5, -1.8){$v_8$};
      \node (n9)  at (4.5, -0.8){$v_9$};

       \foreach \source/\dest/\psrc/\pdst/\elabel/\isTree in {
          n0/n1/1/2/e1/0,
          n0/n2/2/1/e2/1,
          n0/n3/3/1/e3/1,
          n1/n3/1/2/e4/1,
          n2/n3/2/3/e5/0,
          n3/n4/4/3/e6/1,
          n4/n6/2/1/e7/1,
          n4/n5/1/1/e8/0,
          n5/n6/2/2/e9/1,
          n5/n7/3/1/e10/1,
          n5/n9/4/1/e11/1,
          n5/n8/5/1/e12/1,
          n7/n8/2/2/e13/0,
          n9/n8/2/3/e14/0
      } {
          \ifnum \isTree = 1 
              \draw[tree_edge_style] (\source) -- (\dest);
              \node[port] at ($(\source)!3mm!(\dest)$) {\psrc};
              \node[port] at ($(\dest)!3mm!(\source)$) {\pdst};
          \else 
              \draw[non_tree_edge_style] (\source) -- (\dest);
              \node[port] at ($(\source)!3mm!(\dest)$) {\psrc};
              \node[port] at ($(\dest)!3mm!(\source)$) {\pdst};
          \fi 
      }

    \end{tikzpicture}%
    }
\fi
    \caption{An example of a Port-One Tree. Left: Edges incident to a port 1 are highlighted in red. Right: Tree edges (blue, solid) shown with non-tree edges (gray, dashed).}
    \label{fig:examplepost}
\end{figure}

%
A \textsc{P1Tree} $\mathcal{T}$ is shown in Fig.~\ref{fig:examplepost}. Notice that, a \textsc{P1Tree} $\mathcal{T}$ may not contain all the edges of type \texttt{tp1}, \texttt{t11}, or \texttt{t1q} because doing so may create cycles. We avoid cycles by adding edges of type \texttt{tpq}. 

Since each node $v\in G$ has degree $\delta_v\geq 1$, Observation \ref{obs:t11t1q} follows immediately.
\begin{observation}\label{obs:t11t1q}
    Each node in a port-labeled graph $G$ has an edge of type {\em \texttt{t11} \emph{or} \texttt{t1q}}.
\end{observation}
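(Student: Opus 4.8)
The plan is to observe that membership in $\{\texttt{t11},\texttt{t1q}\}$ at a node is exactly the condition that the edge carries port $1$ at that node, and then to invoke the port-labeling convention to produce such an edge. First I would fix an arbitrary node $v\in V$ and recall that, by the definition of edge types at $v$, we have $\texttt{type}(\{v,w\})\in\{\texttt{t11},\texttt{t1q}\}$ precisely when $p_{vw}=1$: indeed \texttt{t11} corresponds to $p_{vw}=1=p_{wv}$ and \texttt{t1q} corresponds to $p_{vw}=1\neq p_{wv}$, and these are the only two types with $p_{vw}=1$. So it suffices to exhibit one incident edge whose port at $v$ equals $1$.

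This is immediate from the model: since $G$ is connected (and in any case every node has $\delta_v\ge 1$), the incident edges at $v$ are labeled with the distinct port numbers $1,\dots,\delta_v$. In particular the label $1$ is assigned to exactly one incident edge, say $\{v,w\}$, so $p_{vw}=1$. By the reduction above, $\texttt{type}(\{v,w\})$ is \texttt{t11} if the far endpoint also uses port $1$ (i.e.\ $p_{wv}=1$) and \texttt{t1q} otherwise; in either case it lies in $\{\texttt{t11},\texttt{t1q}\}$, which establishes the claim for $v$. Since $v$ was arbitrary, every node of $G$ has such an incident edge.

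There is essentially no hard step here: the statement is a direct consequence of the labeling convention together with the bookkeeping fact that exactly one of the four edge types is missing when the local port is $1$ (namely the two types \texttt{tp1} and \texttt{tpq}, which require $p_{vw}\neq 1$, are ruled out). The only point that deserves a line of care is making explicit that the port numbering guarantees the presence of label $1$ at each node, which relies on $\delta_v\ge 1$; this is why the statement is phrased for the degree-$\ge 1$ setting. The value of the observation lies less in its difficulty than in its role downstream: it certifies that each vertex possesses a \emph{port-1-incident} edge, which is the structural ingredient that will later let us argue the existence of a \textsc{P1Tree} $\mathcal{T}\in\mathbf{T}$ for every graph $G$.
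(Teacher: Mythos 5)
Your proposal is correct and matches the paper's reasoning: the paper dispenses with the observation in one line (``since each node $v\in G$ has degree $\delta_v\geq 1$, the observation follows immediately''), which is exactly your argument that port $1$ must appear among the labels $1,\dots,\delta_v$ at $v$, and the edge carrying it has type \texttt{t11} or \texttt{t1q} at $v$. You simply spell out the bookkeeping that the paper leaves implicit.
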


\begin{lemmarep}\label{lemma:p1tree}
    For any port-labeled graph $G$, there exists at least one {\sc P1Tree} $\mathcal{T}$. 
\end{lemmarep}
\begin{proof}
    For the sake of contradiction, assume that no {\sc P1Tree} exists for a connected graph $G = (V,E)$. Consider any spanning tree $\mathcal{T}' \subseteq E$ of $G$. Since $\mathcal{T}'$ is not a {\sc P1Tree}, there is at least one vertex $v \in V$ that does not satisfy port-one property. If $\delta_v = 1$, then the edge connecting $v$ must have port 1, hence $\delta_v > 1$. 
    For such a node $v$, consider the port-one neighbor $w$ of $v$. Consider node $z$ such that $\{v,z\} \in \mathcal{T}'$ and the path from $v$ to $w$ contains $z$. Now, we substitute the edge $\{v,z\}$ with edge $\{v,w\}$. The tree preserves all the properties, and additionally now $v$ satisfies the port-one property. 
    Repeating this process for every node not satisfying the port-one property, we arrive at a tree $\mathcal{T}$, where all the nodes satisfy port-one property and hence the assumption is false.
\end{proof}
The proof is deferred to the Appendix.


\vspace{2mm}
\noindent{\bf Centralized construction.}
The pseudocode for the centralized construction 
is given in Algorithm \ref{alg:portonecentral} in Appendix \ref{app:centralized} (we name our algorithm \texttt{Centralized\_P1Tree()}). Initially,  $\mathcal{T}$ is empty, i.e., $\mathcal{T}\leftarrow \emptyset$.  Let $\mathcal{C}_i$ be a tree component initially empty, i.e., $\mathcal{C}_i\leftarrow \emptyset$. The goal is construct $\kappa\geq 1$ ($\kappa$ not known) tree components $\mathcal{C}_1,\ldots,\mathcal{C}_\kappa$ such that $\mathcal{C}_1\cap\ldots\cap \mathcal{C}_\kappa=\emptyset$ and $\mathcal{C}_1\cup\ldots\cup\mathcal{C}_\kappa=V$.  The $\kappa$ components are then connected via $\kappa-1$ edges to obtain $\mathcal{T}$. 
The algorithm starts from an arbitrary node $v\in G$. It adds in $\mathcal{C}$ (initially empty) all the incident edges of $v$ of types $\texttt{tp1},\texttt{t11}$, and $\texttt{t1q}$ one by one in the priority order. While doing so, each neighboring node is added in a stack. If all such edges at $v$ are exhausted, then it goes to a neighbor (top of stack) and repeats the procedure. 
If adding an edge $\{w,z\}$ of type $\texttt{tp1},\texttt{t11}$, or $\texttt{t1q}$ at node $w$ creates a cycle, we set $\mathcal{C}_1\leftarrow \mathcal{C}$. 
The algorithm then continues constructing $\mathcal{C}$ on $G\backslash \mathcal{C}_1$ until a cycle is detected. It then sets $\mathcal{C}_2\leftarrow \mathcal{C}$. The component construction stops as soon as stack goes empty, which also means that $G\backslash \{\mathcal{C}_1\cup\ldots\cup\mathcal{C}_\kappa\}=\emptyset.$ 
We then include all these components in $\mathcal{T}$. We then connect these $\kappa$ tree components adding $k-1$ edges as follows: 
we sort the edges of $G$ that are not yet considered to construct $C_1,\ldots,C_\kappa$ in lexicographical order. These edges must be of type \texttt{tpq}. We then add these edges to $\mathcal{T}$ as long as adding the edge does not create cycle. The algorithm terminates, when no more edges can be added. This also means that there is only one component containing all the nodes of $G$.

\begin{lemmarep}\label{lemma:centralized}
    Given a port-labeled graph $G$, Algorithm \ref{alg:portonecentral} correctly constructs a {\sc P1Tree} $\mathcal{T}$. 
\end{lemmarep}
\begin{proof}
Suppose, to the contrary, that the subgraph ~$\mathcal{T}$ is not a \textsc{P1Tree}.  Then either
\begin{enumerate}[label=(\arabic*)]
  \item $\mathcal{T}$ is not a tree, or
  \item some vertex~$x\in\mathcal{T}$ has no incident edge of type \texttt{tp1}, \texttt{t11}, or \texttt{t1q}.
\end{enumerate}
But in Algorithm~\ref{alg:portonecentral} (\texttt{Centralized\_P1Tree()}) every inserted edge either
\begin{itemize}[nosep]
  \item joins an \textbf{unvisited} vertex (so it cannot create a cycle) or
  \item is a \texttt{tpq} edge that connects two previously disjoint port-one components $C_i,C_j$.
\end{itemize}
In either of the cases, no cycle is created.
Hence $\mathcal{T}$ is a tree, contradicting (1).  
Furthermore, every node~$v\in G$ has at least one port-1 edge (\texttt{t11} or \texttt{t1q}). 
Algorithm~\ref{alg:portonecentral} would fail to add such an edge for node $v$ if it creates a cycle. In that case, there already exists an edge that connects to $v$. Since the port~1 edge at $v$ has not been considered yet, the edge that connects $v$ must contain a port~1 at one of its end-points. Thus $v$ still ends up with some port~1 incident edge in~$\mathcal{T}$, contradicting (2).  Therefore, $\mathcal{T}$ must be a {\sc P1Tree}. 
\end{proof}
The proof is deferred to the Appendix.

\vspace{2mm}
\noindent{\bf DFS-based construction.}\label{subsec:dfspost}
DFS explores $G$ through forward and backtrack phases, switching between them as needed.  The forward phase takes it as deeply as possible along each branch and the backtrack phase helps finding nodes from which forward phase can continue again. Let {\em DFShead} be the node where DFS is currently performing forward or backtrack phase.  Initially, the starting node of DFS acts as the DFShead. 

We need some terminologies. We say that each node $v\in G$  has two states:
\begin{itemize}
    \item \textsc{empty}: $v$ has not been visited by the DFShead yet.
    \item \textsc{occupied}: $v$ has been visited by the DFShead already. 
\end{itemize}
We denote by {\em parent edge} of  $v\in \mathcal{T}$ the edge in $\mathcal{T}$ from which DFS first visited $v$ doing forward phase. 
We call the associated type of the edge as {\em parent edge type}. 
Moreover, we categorize each node $v\in G$ into the following four types:  

\begin{itemize}
    \item \textbf{unvisited}: $v$ is not yet visited by the DFShead.
    \item \textbf{fullyVisited}: $v$ is visited by the DFShead and $v$ has no empty neighbors. 
    \item \textbf{partiallyVisited}: $v$ has parent edge of type \texttt{tpq} and each empty neighbor is reached by an edge of type \texttt{tpq}.
    \item \textbf{visited}: $v$ is visited by the DFShead and $v$ is not \textbf{partiallyVisited} or \textbf{fullyVisited}.
\end{itemize}
We overload the $\texttt{type}()$ function to indicate the type of a node $u$ in $\{$\textbf{unvisited}, \textbf{partiallyVisited}, \textbf{fullyVisited}, \textbf{visited}$\}$.

\vspace{1em}
\noindent{\bf Algorithm.}
We now describe the algorithm to construct a {\sc P1Tree} $\mathcal{T}\in \mathbf{T}$ (we call our algorithm \texttt{DFS\_P1Tree()}; the pseudocode is given in Algorithm \ref{alg:portoneDFS} in Appendix \ref{app:distributed}).  \texttt{DFS\_P1Tree()} prioritizes the edge types at any node $v\in V$ in the following order: 
$\texttt{tp1}\succ\texttt{t11}\sim\texttt{t1q}\succ\texttt{tpq}$ 
(for multiple edges of same type at $v$, they are prioritized in the increasing order of the port number at $v$). 

Suppose DFShead reaches node $u$. We determine the type of node $u$ by checking $N(u)$. Based on the type of $u$, the DFShead does the following:
\begin{itemize}[leftmargin=3em]
    \item[(D0)] all nodes are initially \textbf{unvisited}.
    \item[(D1)] if $u$ is \textbf{fullyVisited}, then DFShead backtracks to parent of $u$.
    \item[(D2)] if $u$ is \textbf{visited}, then DFShead continues along the highest priority edge to an empty neighbor of $u$.
\end{itemize}
\texttt{DFS\_P1Tree()} terminates when all nodes of $G$ become \textbf{fullyVisited}.
The rules (D0), (D1), and (D2) are analogous to the standard DFS traversal with a major change that introduces 
\textbf{partiallyVisited} node types to 
ensure that each node of $\mathcal{T}$ satisfy the {\sc P1Tree} property. In particular, \texttt{DFS\_P1Tree()} 
has following additional rules:
\begin{itemize}[leftmargin=3em]
    \item[(D3)] if $u$ is \textbf{partiallyVisited}, then DFShead backtracks to parent of $u$.
    \item[(D4)] a \textbf{partiallyVisited} node $u$ has state \textsc{empty} when DFShead is at a node $w$ such that $\texttt{type}(\{w,u\} \in \{\texttt{tp1},\texttt{t11}\}$; otherwise, \textsc{occupied}.
\end{itemize}
\texttt{DFS\_P1Tree()} converts a \textbf{partiallyVisited} node $u$ to a \textbf{visited} node when DFShead visits $u$ from $w$ such that $p_{uw}=1$ (i.e., $w$ is the port-1 neighbor of $u$). Additionally, the parent edge of $u$ now {\em swapped} to make $w$ the parent of $u$ in $\mathcal{T}$. 
We prove later that this parent swap does not create a cycle. 
We call this process the  \emph{reconfiguration} of a \textbf{partiallyVisited} node.  

Fig.~\ref{fig:reconfig} illustrates these ideas. As shown in Fig.~\ref{fig:post}, the DFS backtracks at \textcircled{\scriptsize 6} marking \textcircled{\scriptsize 6} \textbf{partiallyVisited}, since the parent edge $\{$\textcircled{\scriptsize 6},\textcircled{\scriptsize 5}$\}$ is of type \texttt{tpq} and its only \textbf{unvisited} neighbor \textcircled{\scriptsize 9} is connected via a \texttt{tpq} edge $\{$\textcircled{\scriptsize 6},\textcircled{\scriptsize 9}$\}$. As shown in Fig.~\ref{fig:reconfiguredpost}, when DFS reaches again to \textcircled{\scriptsize 6} in the forward phase via edge $\{$\textcircled{\scriptsize 0},\textcircled{\scriptsize 6}$\}$, parent  of \textcircled{\scriptsize 6} is now swapped to \textcircled{\scriptsize 0} (i.e.,  edge $\{$\textcircled{\scriptsize 6},\textcircled{\scriptsize 5}$\}$ is removed and edge $\{$\textcircled{\scriptsize 0},\textcircled{\scriptsize 6}$\}$ is added), which makes \textcircled{\scriptsize 6} \textbf{visited}. 

\begin{figure}
    \centering
    \begin{subfigure}[t]{0.48\textwidth}
        \centering
        \resizebox{\textwidth}{!}{%
\begin{tikzpicture}[
    every node/.style={circle, draw=black, fill=gray!70, minimum size=10mm, text=white, font=\LARGE},
    port/.style={circle, fill=white, draw=black, text=black, font=\footnotesize, inner sep=1pt, minimum size=12pt},
    edge/.style={draw=gray!70, thick, dashed}, 
    tree_edge/.style={draw=gray!70, very thick} 
]
\node[fill=red!70] (n0) at (0,0) {0};
\node (n1) at (-10,-1) {2};
\node[fill=blue!70] (n2) at (3,-3) {4};
\node[fill=blue!70] (n3) at (-5,-3) {9};
\node[fill=blue!70] (n4) at (-6,3) {1};
\node (n5) at (2,4) {5};
\node[fill=blue!70] (n6) at (-9,-5) {3};
\node[fill=blue!70] (n7) at (6,0) {7};
\node[fill=cyan!70] (n8) at (0,-5) {10};
\node[fill=cyan!70] (n9) at (4,2) {8};
\node[fill=green,text=black] (n10) at (-2,6) {6};

\foreach \src/\dst/\pu/\qv/\tree/\pone in {
    n0/n3/1/1/0/0,
    n0/n4/2/1/1/1,
    n0/n7/4/1/0/0,
    n0/n8/5/2/0/0,
    n0/n10/6/1/0/0,
    n1/n3/2/2/1/0,
    n1/n4/1/2/1/1,
    n1/n6/3/1/1/1,
    n2/n5/1/2/1/1,
    n2/n6/2/2/1/0,
    n2/n7/3/2/1/0,
    n3/n8/4/1/1/1,
    n3/n10/5/2/0/0,
    n4/n5/3/1/0/0,
    n4/n10/4/4/0/0,
    n5/n10/3/3/1/0,
    n7/n9/3/1/1/1
} {
    \ifnum\tree=1
        \ifnum\pone=1
            \draw[tree_edge, blue] (\src) -- (\dst);
        \else
            \draw[tree_edge, red] (\src) -- (\dst);
        \fi
    \else
        \draw[edge] (\src) -- (\dst);
    \fi
    
    \node[port] at ($(\src)!0.8cm!(\dst)$) {\pu};
    \node[port] at ($(\dst)!0.8cm!(\src)$) {\qv};
}
\end{tikzpicture}}
        \caption{DFS backtracks at \textcircled{\scriptsize 6} marking it \textbf{partiallyVisited} since parent edge $\{\textcircled{\scriptsize 6},\textcircled{\scriptsize 5}\}$ is of type \texttt{tpq} and its only \textbf{unvisited} neighbor $\textcircled{\scriptsize 9}$ is connected via a \texttt{tpq} edge $\{\textcircled{\scriptsize 6},\textcircled{\scriptsize 9}\}$.}
        \label{fig:post}
    \end{subfigure}
    \hfill
    \begin{subfigure}[t]{0.48\textwidth}
        \centering
        \resizebox{\textwidth}{!}{%
\begin{tikzpicture}[
    every node/.style={circle, draw=black, fill=gray!70, minimum size=10mm, text=white, font=\LARGE},
    port/.style={circle, fill=white, draw=black, text=black, font=\footnotesize, inner sep=1pt, minimum size=12pt},
    edge/.style={draw=gray!70, thick, dashed}, 
    tree_edge/.style={draw=gray!70, very thick} 
]
\node[fill=red!70] (n0) at (0,0) {0};
\node (n1) at (-10,-1) {2};
\node[fill=blue!70] (n2) at (3,-3) {4};
\node[fill=blue!70] (n3) at (-5,-3) {9};
\node[fill=blue!70] (n4) at (-6,3) {1};
\node (n5) at (2,4) {5};
\node[fill=blue!70] (n6) at (-9,-5) {3};
\node[fill=blue!70] (n7) at (6,0) {7};
\node[fill=cyan!70] (n8) at (0,-5) {10};
\node[fill=cyan!70] (n9) at (4,2) {8};
\node[fill=blue!70] (n10) at (-2,6) {6};

\foreach \src/\dst/\pu/\qv/\tree/\pone in {
    n0/n3/1/1/0/0,
    n0/n4/2/1/1/1,
    n0/n7/4/1/0/0,
    n0/n8/5/2/0/0,
    n0/n10/6/1/1/1,
    n1/n3/2/2/1/0,
    n1/n4/1/2/1/1,
    n1/n6/3/1/1/1,
    n2/n5/1/2/1/1,
    n2/n6/2/2/1/0,
    n2/n7/3/2/1/0,
    n3/n8/4/1/1/1,
    n3/n10/5/2/0/0,
    n4/n5/3/1/0/0,
    n4/n10/4/4/0/0,
    n5/n10/3/3/0/0,
    n7/n9/3/1/1/1
} {
    \ifnum\tree=1
        \ifnum\pone=1
            \draw[tree_edge, blue] (\src) -- (\dst);
        \else
            \draw[tree_edge, red] (\src) -- (\dst);
        \fi
    \else
        \draw[edge] (\src) -- (\dst);
    \fi
    
    \node[port] at ($(\src)!0.8cm!(\dst)$) {\pu};
    \node[port] at ($(\dst)!0.8cm!(\src)$) {\qv};
}
\end{tikzpicture}}
        \caption{Reconfiguration of the P1Tree $\mathcal{T}$ for a \textbf{partiallyVisited} node removing the $\texttt{tpq}$ edge $\{\textcircled{\scriptsize 6},\textcircled{\scriptsize 5}\}$ and adding the edge  $\{\textcircled{\scriptsize 0},\textcircled{\scriptsize 6}\}$ of type \texttt{tp1}.}
        \label{fig:reconfiguredpost}
    \end{subfigure}
    \caption{An illustration of reconfiguration on a {\sc P1Tree} $\mathcal{T}$ constructed so far, swapping  a $\texttt{tpq}$ edge by an edge of type \texttt{tp1} or \texttt{t11}.}
    \label{fig:reconfig}
\end{figure}
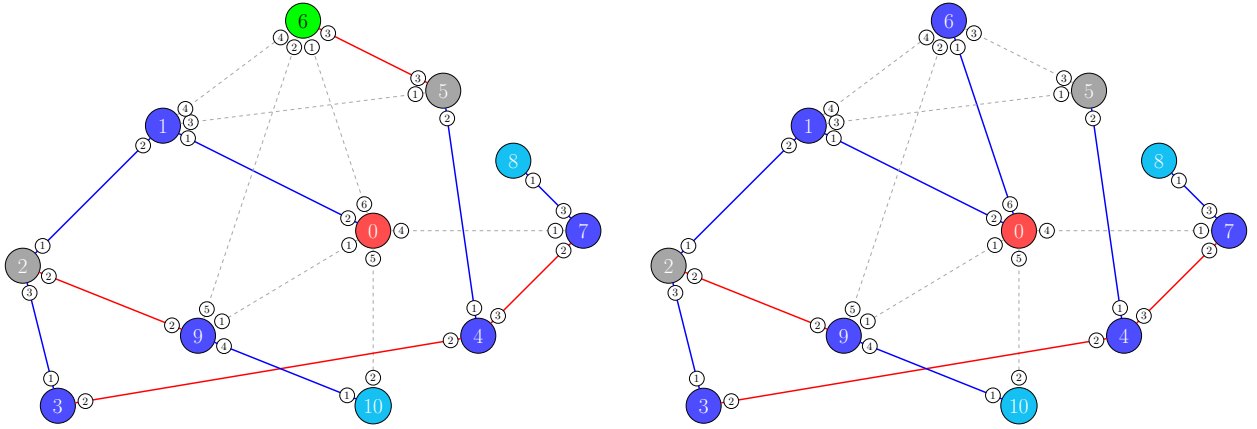


\begin{theorem}\label{thm:portoneDFS}
    \texttt{DFS\_P1Tree()} produces a {\sc P1Tree} $\mathcal{T}$ of a port-labeled graph $G$.
\end{theorem}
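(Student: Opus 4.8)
The plan is to verify the two defining requirements of a {\sc P1Tree} separately: that $\mathcal{T}$ is a spanning tree of $G$, and that every vertex of $\mathcal{T}$ has an incident tree edge of type \texttt{tp1}, \texttt{t11}, or \texttt{t1q}. I will use the equivalent reformulation that a node satisfies the port-one property iff it has an incident tree edge that is \emph{not} of type \texttt{tpq}, since \texttt{tpq} is the unique type with no port-$1$ endpoint and is perspective-independent.

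For the spanning-tree part, I would first observe that \texttt{DFS\_P1Tree()} is an ordinary depth-first traversal augmented with the \textbf{partiallyVisited} state, with $\mathcal{T}$ being the set of parent edges. Since $G$ is connected and the traversal halts only when every node is \textbf{fullyVisited}, all nodes are eventually visited and $\mathcal{T}$ spans $V$. The only non-standard feature is the reconfiguration step, which swaps a \texttt{tpq} parent edge of a \textbf{partiallyVisited} node $v$ for the edge $\{w,v\}$ to its port-$1$ neighbor $w$. The crucial lemma here is that \emph{at the moment of reconfiguration $v$ has no children in $\mathcal{T}$}: when $v$ is first declared \textbf{partiallyVisited} the traversal backtracks immediately by rule (D3), so $v$ has no children then; and by rule (D4) a \textbf{partiallyVisited} node looks \textsc{empty} only to its unique port-$1$ neighbor, so the DFShead can never re-enter $v$ (hence never push a forward edge out of $v$) until precisely the reconfiguration move. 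Therefore removing $v$'s old parent edge detaches exactly the singleton $\{v\}$, and re-attaching it to $w\neq v$ keeps the structure acyclic and connected, so $\mathcal{T}$ remains a tree throughout.

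For the port-one property, I would classify each node $v$ by its parent edge in the final tree. If $v$'s parent edge is non-\texttt{tpq}, we are done, and since reconfiguration only ever removes \texttt{tpq} parent edges, such an edge is never destroyed later. If $v$'s parent edge is \texttt{tpq}, then when the DFShead first sat on $v$ it was either \textbf{visited} or \textbf{partiallyVisited}. In the \textbf{visited} case, $v$ had an empty neighbor reachable by a non-\texttt{tpq} edge, so the highest-priority forward move (using $\texttt{tp1}\succ\texttt{t11}\sim\texttt{t1q}\succ\texttt{tpq}$) created a non-\texttt{tpq} child edge at $v$, which survives to the end. In the \textbf{partiallyVisited} case, I would show $v$ is necessarily reconfigured, after which its parent edge is the port-$1$ edge $\{w,v\}$, which is \texttt{t11} or \texttt{t1q} at $v$ and hence non-\texttt{tpq}. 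The root is handled separately: by Observation~\ref{obs:t11t1q} it has a \texttt{t11} or \texttt{t1q} edge of higher priority than any \texttt{tpq} edge, so its first forward move yields a non-\texttt{tpq} child edge.

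The main obstacle is establishing that every \textbf{partiallyVisited} node is in fact reconfigured, which simultaneously guarantees that the traversal terminates with all nodes \textbf{fullyVisited}. The key observation I would use is that if $v$ is \textbf{partiallyVisited} then its unique port-$1$ neighbor $w$ can be neither \textbf{partiallyVisited} nor \textbf{fullyVisited}: the edge $\{w,v\}$ is \texttt{tp1} or \texttt{t11} at $w$, so by (D4) $v$ is an \textsc{empty}, non-\texttt{tpq} neighbor of $w$, which both prevents $w$ from being \textbf{fullyVisited} and violates the \textbf{partiallyVisited} condition for $w$. Hence $w$ is a \textbf{visited} node carrying a standing empty neighbor, so the DFShead must keep returning to $w$ and, once its higher-priority empty neighbors are exhausted, take the forward edge to $v$ and reconfigure it. Making this ``eventually'' precise---i.e., that the traversal can neither terminate nor cycle while some \textbf{partiallyVisited} node remains---is the delicate step, and I would pin it down with a termination argument over the strictly shrinking set of \textsc{empty}-or-\textbf{partiallyVisited} nodes.
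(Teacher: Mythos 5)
Your proposal follows the same skeleton as the paper's proof (tree invariant under reconfiguration, port-one property via eventual reconfiguration of \textbf{partiallyVisited} nodes, plus a termination argument), and in two places it is sharper: your singleton-subtree lemma (a \textbf{partiallyVisited} node never acquires children, so the parent swap detaches exactly one vertex) makes the acyclicity of reconfiguration rigorous where the paper merely asserts it, and your observation that the port-1 neighbor of a \textbf{partiallyVisited} node can never become \textbf{fullyVisited} or \textbf{partiallyVisited} is exactly the right engine for the ``eventually reconfigured'' step that the paper waves through. Nevertheless, there are two genuine gaps.

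The serious one is that your case analysis for a node $v$ whose parent edge is \texttt{tpq} --- ``when the DFShead first sat on $v$ it was either \textbf{visited} or \textbf{partiallyVisited}'' --- is not exhaustive. The pseudocode has a third outcome: if $v$ has \emph{no} forward edge at its first processing ($e_{next}=\varnothing$), then $v$ is marked \textbf{fullyVisited}, not \textbf{partiallyVisited}, and rule (D4) never lets the DFShead re-enter it. This case can occur, and then the conclusion actually fails for the pseudocode as written. Concretely, take five nodes $r,z,w,u,v$ with edges $\{r,z\}$ carrying ports $(1,1)$, $\{z,w\}$ ports $(2,1)$, $\{w,u\}$ ports $(2,1)$, $\{z,v\}$ ports $(3,1)$, and $\{u,v\}$ ports $(2,2)$. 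Starting at $r$, the priority order drives the DFS along $r\to z\to w\to u$ (each step a port-1-incident edge; $z$ prefers its port~2 over its port~3), then $u$ takes the \texttt{tpq} edge to $v$, which is legal since $u$'s parent edge is \texttt{t1q}. At $v$, both neighbors $z$ and $u$ are already visited, so $v$ is popped as \textbf{fullyVisited} and finishes the run with the \texttt{tpq} edge $\{u,v\}$ as its only tree edge: not a \textsc{P1Tree}. A complete proof must therefore adopt (and state) the convention that a node with a \texttt{tpq} parent edge and no non-\texttt{tpq} tree edge is classified \textbf{partiallyVisited} even when it currently has no empty neighbors --- i.e., the vacuous reading of the node-type definition, with \textbf{partiallyVisited} taking precedence over \textbf{fullyVisited} --- after which your \textbf{partiallyVisited} branch absorbs this case and the reconfiguration argument applies. (To be fair, the paper's own Claim~3 has the same hole: it asserts every node discovered over a \texttt{tpq} edge is ``immediately marked \textbf{partiallyVisited},'' which the pseudocode contradicts. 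But since you made the case split explicit, the missing case is squarely yours to close.)

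The second, smaller gap: the termination measure you propose, the ``strictly shrinking set of \textsc{empty}-or-\textbf{partiallyVisited} nodes,'' is not monotone. When a node is declared \textbf{partiallyVisited} it was already non-\textsc{empty} (it had been \textbf{visited}), so that event strictly \emph{grows} your set. The clean replacement is the paper's Claim~1 bookkeeping: every vertex is pushed at most twice --- once at discovery, and at most once more at reconfiguration, since afterwards its parent edge is non-\texttt{tpq} and the \textbf{partiallyVisited} condition can never hold for it again --- so the stack empties after $O(n)$ iterations; your port-1-neighbor observation, combined with the standard connectivity argument, then rules out any \textbf{partiallyVisited} node surviving to the moment the stack empties.
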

\begin{proof}
We prove the theorem via three claims.

\textbf{Claim 1.}
{\em Every vertex of $G$ is popped from the stack at most twice and is eventually marked \textbf{fullyVisited}:} A vertex is pushed onto the stack when it is discovered for the first time, i.e., as an \textbf{unvisited} vertex. 
A \textbf{partiallyVisited} vertex may be pushed a second time, but only when it is visited from its port-1 (rule (D4)). At this second push, the vertex immediately becomes \textbf{visited} and, popped only when it is declared \textbf{fullyVisited}. Thus, every vertex is popped at most twice, so the stack becomes empty.

\textbf{Claim 2.}
{\em At every step, $\mathcal{T}$ is a tree; when the algorithm halts, it is a spanning tree of $G$:} Edges are inserted into $\mathcal{T}$ in two ways:

\begin{itemize}
    \item When DFS follows an edge to an \textbf{unvisited} vertex, $e=[u,p_{uv},p_{vu},v]$ joins $u$ (already in the tree) to a new vertex $v$. No cycle is created.
    \item During a reconfiguration, the current parent edge $e_{\uparrow}=[w,p_{wu},p_{uw},u]$ (necessarily of type \texttt{tpq}) is removed and replaced with a port-1 edge $e=[u,p_{uv'},p_{v'u},v']$. Since the swap deletes the unique path between $u$ and $w$ before adding $e$, no cycle is produced and the tree built so far remains connected.
\end{itemize}

Thus, $\mathcal{T}$ is always a tree. Since DFS starts at the root $v_0$ and eventually discovers every vertex (by Claim~1), the final tree spans $V$.

\textbf{Claim 3.}
{\em When the algorithm terminates, every vertex of $\mathcal{T}$ is incident to at least one edge of type \texttt{tp1}, \texttt{t11}, or \texttt{t1q}:} Consider a vertex $x$ when it is popped for the last time.

\begin{itemize}
    \item $x$ was discovered through an edge of type \texttt{tp1}, \texttt{t11}, or \texttt{t1q}. That edge remains in $\mathcal{T}$ (never swapped out), so the property holds for $x$.
    \item $x$ was first discovered through a \texttt{tpq} edge. By definition, $x$ is immediately marked \textbf{partiallyVisited} and DFS backtracks (rule (D3)). Observation~\ref{obs:t11t1q} guarantees that $x$ has a neighbor $y$ such that the edge $\{x,y\}$ is port-1 at one-end. Due to the edge priority ($\texttt{tp1} \succ \texttt{t11} \sim \texttt{t1q} \succ \texttt{tpq}$, Line~\ref{alg:portoneDFS:priority}), DFS will eventually reach $y$ and then revisit $x$ via this port-1 edge. The reconfiguration at that moment (rule (D4)) replaces the old \texttt{tpq} parent by a port-1 edge, after which $x$ is \textbf{visited}. From then on, the incident port-1 edge is never removed, so the property holds when $x$ finally becomes \textbf{fullyVisited}.
\end{itemize}

Thus, every vertex ends with an incident port-1 tree edge.
The algorithm terminates when all vertices are \textbf{fullyVisited} (Claim~1). By Claim~2, $\mathcal{T}$ is a spanning tree. By Claim~3, every vertex in $\mathcal{T}$ has an incident edge of type \texttt{tp1}, \texttt{t11}, or \texttt{t1q}. Thus, $\mathcal{T}$ is a {\sc P1Tree}.
\end{proof}

\section{Constructing {\sc P1Tree} with Agents}\label{sec:agentp1tree}
Now, we describe how Algorithm~\ref{alg:portoneDFS} (\texttt{DFS\_P1Tree()}) can be executed by agents. We first describe a straightforward (non-optimal) construction of the {\sc P1Tree}, and then show the utilization of certain structural properties of {\sc P1Tree} to construct it in optimal time. 

\vspace{2mm}
\noindent\textbf{High-level overview.}
Suppose agents are initially located at a node $v_0$. The highest ID agent settles at $v_0$, which keeps track of the state and type of node $v_0$. Now the agents perform a neighborhood search to determine the edge types and choose the highest priority edge leading to an empty node. 
This edge is chosen as the next edge to be included in the DFS tree. The neighborhood search means that the neighbors of the current node is visited one-by-one, and the settled agents in the neighborhood nodes indicate the state and type of the nodes.
On reaching an unvisited node, a new agent (highest ID among the unsettled agents) settles. It sets its parent to the node where the DFShead arrived from. The result of neighborhood search determines the type and state of the node. 
When no empty neighbor is available, the DFShead travels back to the parent node of the current node. Analogous to Algorithm~\ref{alg:portoneDFS} (\texttt{DFS\_P1Tree()}), DFShead at $x$ moves in the forward phase when the neighbor $y$ is \textbf{partiallyVisited} and $p_{yx} = 1$. The settled agent at $y$, changes its type, and parent.  

\vspace{2mm}
\noindent\textbf{Preliminaries.}
Consider there are $n$ agents initially located at a node $v_0$ of the $n$-node graph $G$. We call $v_0$ the root node. We draw analogy from the node states and types to describe the agent states and variables. Then we show how we maintain the same information using agents.
Initially, all agents are in state \emph{unsettled}.
We say an agent $a$ is settled at node $x$, by setting the state of agent $a$ to \emph{settled}. 
Recall that the nodes are anonymous and hence there is no identifier associated with them. We associate the ID of the agent settled at a node as a proxy for the ID of the node. We represent the settled agent at node $x$ as $\psi(x)$.
Similarly, an agent $a$ at $x$ can identify the type of the edge $e_{xy}$ after traversing it, since it can only know the port number $p_{yx}$ after reaching $y$.
When agent $a$ gets settled at node $x$, it also stores the type of the node $x$ in variable $a.\textsf{nodeType}$. Initially, $a.\textsf{nodeType}$ is \textbf{unvisited}, but the agent must obtain its correct node type by doing a neighborhood search. 

\vspace{2mm}
\noindent\textbf{Initialization.}
The DFShead is initially located at $v_0$.
The highest ID agent among all the agents present settles at $v_0$. Let $a_0$ be that agent. We say $\psi(v_0) = a_0$. The agent $a_0$ changes its state $a_0.\textsf{state}$ to \emph{settled}.
The agents construct the tree by keeping track of their parent node. 
Initially, $a_0.\textsf{parentID} = \bot$, $a_0.\textsf{parentPort} = \bot$, and  $a_0.\textsf{portAtParent} = \bot$.

\vspace{2mm}
\noindent\textbf{Neighborhood search.}
Similar to Algorithm~\ref{alg:portoneDFS} (\texttt{DFS\_P1Tree()}), the DFShead must find the next edge to travel according to the priority order. 
To determine the next edge in priority order, the agents at $x$ traverse the ports in the increasing order. 
They set their phase to \emph{probe}, and traverse an edge to determine the state and type. On traversing an edge corresponding to port $p_{xy}$, the agents determine the result as a 4-tuple $\langle p_{xy}, \texttt{type}(\{x,y\}), \texttt{type}(y), \psi(y) \rangle$. Notice that, all of the constituents of the 4-tuple can be determined locally, since when $y$ is \textbf{unvisited}, then $\psi(y) = \bot$. After determining the results for all the ports incident at $x$, the agents choose the next edge to traverse accordingly.

\vspace{2mm}
\noindent\textbf{Handling partiallyVisited nodes.} 
In Algorithm~\ref{alg:portoneDFS} (\texttt{DFS\_P1Tree()}), a node is \textbf{partiallyVisited} when its parent edge type is \texttt{tpq}, and there are empty neighbors connected by \texttt{tpq} edges. The agents can determine this locally after performing a neighborhood search and then a node $w$ is marked \textbf{partiallyVisited} by retaining this information at $\psi(w).\textsf{nodeType}$. Then the unsettled agents leave via $\psi(w).\textsf{parentPort}$ to reach the parent of $w$ in the tree.

On the other hand, when the highest priority result during neighborhood search is the tuple $\langle p_{xw}, \texttt{tp1}/\texttt{t11}, \textbf{partiallyVisited}, \psi(w) \rangle$, the agents move to $w$ via $p_{wx} = 1$, and thus $\psi(w)$ updates its parameters to $\psi(w).\textsf{parent} = (\psi(x).\textsf{ID}, p_{xw})$, $\psi(w).\textsf{nodeType} = \textbf{visited}$ and $\psi(w).\textsf{parentPort} = 1$. 

\vspace{2mm}
\noindent\textbf{Termination.}
Similar to Algorithm~\ref{alg:portoneDFS} (\texttt{DFS\_P1Tree()}), the termination happens when all nodes are of type \textbf{fullyVisited}. The last agent that settles has to perform the neighborhood search by itself and determine that it has no empty neighbors left, marking it \textbf{fullyVisited}. Then the DFShead traverses the tree back to the root performing neighborhood search at each node to ensure that it becomes \textbf{fullyVisited}, and reconfigure \textbf{partiallyVisited} neighbors. Note that, the port-one neighbor of a \textbf{partiallyVisited} node is marked \textbf{visited}. It can only be marked \textbf{fullyVisited} when the \textbf{partiallyVisited} node is reconfigured to become a \textbf{visited} node. Hence, on termination, no \textbf{partiallyVisited} nodes remain.

\vspace{2mm}
\noindent\textbf{Time complexity optimization.} 
This straight-forward algorithm takes $2\delta_x$ epochs to perform a neighborhood search at $x$. However, when multiple agents are present at $x$, they can always visit the ports at $x$ in parallel to determine the result corresponding to a port. In Sections~\ref{subsec:vacant} and~\ref{subsec:probe}, we present two methods that go hand-in-hand for performing $O(1)$-epoch neighborhood search at a node $x$. First, some of the nodes are chosen to be \textsc{vacated} such that the settled agents at those nodes could travel with the DFShead to perform the neighborhood search. Second, even in the absence of settled agents at \textsc{vacated} nodes, the parallel probe can correctly determine that it is \textsc{vacated}. 

\subsection{Selecting Vacant Nodes}\label{subsec:vacant}
 Here we describe how vacant nodes are chosen. 
 Once a vacant node $w$ is chosen, the state of $w$ becomes \textsc{vacated}, and the settled agent $\psi(w)$ travels with the DFShead instead of remaining at $w$. 
 However, the agent $\psi(w)$ needs to collect certain information before it can travel with the DFShead. Informally, the condition for designating a \textbf{visited} node $x$ to be \textsc{vacated} is that,
 the port-1 neighbor $w$ must be \textsc{occupied} in {\sc P1Tree} $\mathcal{T}$ constructed so far. The \textbf{fullyVisited} and \textbf{partiallyVisited} nodes are always \textsc{vacated}.

 \vspace{2mm}
 \noindent\textbf{Detailed description.}
 Consider an execution of  Algorithm \ref{alg:portoneDFS} (\texttt{DFS\_P1Tree()}).
 Suppose the DFShead is located at $x$. Let $z$ be the parent of  $x$. On arrival at $x$, the DFShead determines the state of node $x$. Let $w$ be the port-1 neighbor of $x$, i.e, $p_{xw} = 1$. 
 The settled agent $\psi(x)$ stores the information about its port-1 neighbor using the variables: $\psi(x).\textsf{P1Neighbor}  = \psi(w)$ ($\psi(w)$ is $\bot$ when $w$ is \textsc{empty}) and $\psi(x).\textsf{portAtP1Neighbor}  = p_{wx}$. 
 Formally,
 \begin{itemize}[leftmargin=3em]
     \item[(V1)] the root node is always \textsc{occupied}.
     \item[(V2)] $x$ is \textsc{vacated}, if $w$ is \textsc{occupied} and $x$ is \textbf{visited}.
     \item[(V3)] $x$ is \textsc{vacated}, if it is \textbf{fullyVisited} and $\psi(x).\textsf{vacatedNeighbor} = false$.
     \item[(V4)] $x$ is \textsc{vacated}, if it is \textbf{partiallyVisited}.
     \item[(V5)] $z$ is \textsc{vacated}, if $\psi(z).\textsf{vacatedNeighbor} = false$, and  $p_{zx} = 1$.
 \end{itemize}
 When $x$ is \textsc{vacated}, the DFShead moves to $w$ to assign $\psi(w).\textsf{vacatedNeighbor} = true$.
Notice that, $x$ is vacated when $p_{xz} = 1$ (by (V2)), and thus making $\psi(z).\textsf{vacatedNeighbor} = true$. Then, even if $p_{zx} = 1$, rule (V5) is not applicable anymore. This shows the priority order among the rules, and they are applicable in that priority order. The pseudocode is provided in Algorithm~\ref{alg:vacant} \texttt{Can\_Vacate()}, which returns the state \emph{settledScout} for the agent if the node has state \textsc{vacated}.

Fig.~\ref{fig:post} illustrates these ideas. The blue/green nodes are vacant, and gray nodes are occupied. 
Node \textcircled{\scriptsize 0} is \textsc{occupied} since it is the root (by (V1)). Nodes \textcircled{\scriptsize 1}, \textcircled{\scriptsize 3}, and \textcircled{\scriptsize 9} are vacant since their port-1 neighbor \textcircled{\scriptsize 0} is \textsc{occupied} (by (V2)).
Nodes \textcircled{\scriptsize 8}, and \textcircled{\scriptsize 10} are \textsc{vacated} since they are \textbf{fullyVisited} and do not have any dependent \textsc{vacated} neighbors (by (V3)). 
Node \textcircled{\scriptsize 6} is \textsc{vacated} since it is \textbf{partiallyVisited}  (by (V4)).
Node \textcircled{\scriptsize 4} is \textsc{vacated} only after DFShead reaches \textcircled{\scriptsize 5}, and \textcircled{\scriptsize 5} cannot be vacant since port-1 neighbor of \textcircled{\scriptsize 5} (node \textcircled{\scriptsize 1}) is \textsc{vacated} already (by (V5)).

\begin{lemma}\label{lem:threeconsecutive}
    Consider three consecutive nodes $v_1, v_2,$ and $v_3$ visited by the DFShead. Suppose $v_1$ was visited for the first time. Then at least one of $v_1, v_2,$ and $v_3$ is \textsc{vacated}.
\end{lemma}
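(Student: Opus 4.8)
The plan is to argue by contradiction: assume that none of $v_1,v_2,v_3$ is ever \textsc{vacated} (all three stay \textsc{occupied}) and case-split on the type of $v_1$ at the moment it is first visited. Two cases are immediate. If $v_1$ is \textbf{partiallyVisited}, rule~(V4) vacates it outright. If $v_1$ is \textbf{fullyVisited}, then for it to remain \textsc{occupied} rule~(V3) must be blocked, i.e.\ $\psi(v_1).\textsf{vacatedNeighbor}=true$; since $v_1$ has just been discovered, the only way this flag can already be set is that rule~(V5) fired upon arrival at $v_1$ and vacated its parent $z_1$ (which forces $p_{z_1 v_1}=1$). Because a \textbf{fullyVisited} node triggers a backtrack, the next visited node is exactly $v_2=z_1$, which is then \textsc{vacated} --- a contradiction.

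The substantive case is $v_1$ \textbf{visited}. Let $w_1$ be its port-$1$ neighbour. If $w_1$ is \textsc{occupied}, rule~(V2) vacates $v_1$, so I may assume $w_1$ is \textsc{empty}. Since the unique port-$1$ edge at $v_1$ goes to $w_1$ and \texttt{tpq} edges have the lowest priority, the forward move from $v_1$ must land on either an empty \texttt{tp1}-neighbour or on $w_1$ itself. In the \texttt{tp1} branch $p_{v_2 v_1}=1$, so $v_1$ is the occupied port-$1$ neighbour of $v_2$ and the parent edge of $v_2$ is not \texttt{tpq}; hence $v_2$ is \textbf{visited} or \textbf{fullyVisited} and is vacated by~(V2) or~(V3). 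In the remaining branch $v_2=w_1$ with $p_{v_1 v_2}=1$, so on arrival at $v_2$ rule~(V5) would vacate $v_1$; for $v_1$ to survive I must have $\psi(v_1).\textsf{vacatedNeighbor}=true$ (or $v_1$ is the root, shielded by~(V1)).

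This is where $v_3$ enters. I would first note that $\psi(v_2).\textsf{vacatedNeighbor}=false$, because the only node having $v_2$ as its port-$1$ neighbour within this short trajectory is $v_1$, and $v_1$ is occupied. Re-running the \textbf{visited} analysis at $v_2$: if its port-$1$ neighbour $w_2$ is \textsc{occupied}, one of~(V2)/(V3)/(V4) vacates $v_2$; otherwise $v_2$ must be \textbf{visited} (it cannot be \textbf{fullyVisited} or \textbf{partiallyVisited}, as the $false$ flag would enable~(V3), resp.\ (V4) would fire), and its forward move reaches either an empty \texttt{tp1}-neighbour --- giving $v_3$ the occupied port-$1$ neighbour $v_2$ so that~(V2)/(V3) vacates $v_3$ --- or $v_3=w_2$ with $p_{v_2 v_3}=1$, in which case~(V5) vacates $v_2$ since $\psi(v_2).\textsf{vacatedNeighbor}=false$. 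Every branch contradicts the assumption, proving the lemma.

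The step I expect to require the most care is the bookkeeping of the $\textsf{vacatedNeighbor}$ flags: I must certify that along $v_1\to v_2\to v_3$ no spurious invocation of~(V5) turns a relevant flag $true$, so that (V3) and~(V5) are genuinely enabled precisely when the contradiction needs them. This rests on two facts I would isolate as small sub-claims --- a freshly settled node carries the flag $false$, and between consecutive first-visits the only parent that~(V5) can inspect is the immediately preceding node of the trajectory --- after which every branch of the case analysis closes cleanly.
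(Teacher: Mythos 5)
Your overall strategy (contradiction plus a case analysis on node types and port-one neighbours) is in the spirit of the paper's proof, but it contains a genuine gap: you treat the port-one neighbour $w_1$ of $v_1$ (and likewise $w_2$ of $v_2$) as being either \textsc{occupied} or \textsc{empty}, when in fact it can be \textsc{vacated}. This third state is not exotic --- it arises, for instance, whenever the DFShead leaves a node $a$ that was just vacated by (V2) and moves to a fresh node $b$ along a \texttt{tp1} edge; then $b$'s port-one neighbour is exactly the vacated node $a$. In that situation (V2) does not fire at $v_1$ (the agent of $w_1$ is travelling with the DFShead, so $\xi(w_1)=\bot$), yet $w_1$ is not \textsc{empty} either, so your key structural claim --- ``the forward move from $v_1$ must land on either an empty \texttt{tp1}-neighbour or on $w_1$ itself'' --- fails: when $w_1$ is \textsc{vacated} and no empty \texttt{tp1}-neighbour exists, the DFShead advances along a \texttt{tpq} edge. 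Your case analysis never considers forward moves along \texttt{tpq} edges, and the same omission recurs verbatim in your treatment of $v_2$ and $w_2$.

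This missing branch is precisely where the paper's proof does its hardest work: there the analysis is organized around the movement pattern (backtrack, $v_1=v_3$, or a forward path $v_1\to v_2\to v_3$) and then around the port-one neighbour $u_2$ of $v_2$, with explicit subcases for ``$u_2$ \textsc{vacated}'' and for \texttt{tpq} parent edges; it is there that the definition of \textbf{partiallyVisited} (a \texttt{tpq} parent edge with only \texttt{tpq} routes to empty neighbours forces (V4)) and the edge-priority argument combine to push the vacated node onto $v_2$ or $v_3$. Your flag-bookkeeping sub-claims (a freshly settled node carries $\textsf{vacatedNeighbor}=false$; only the immediately preceding node's (V5) invocation can flip a relevant flag) are sound and match what the paper uses, so the parts of your argument you flagged as delicate are actually fine; what needs repair is the state trichotomy \textsc{empty}/\textsc{occupied}/\textsc{vacated} for port-one neighbours and the ensuing \texttt{tpq}-forward case, without which the proof does not cover all executions.
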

\begin{proof}
    We prove this by contradiction. For the sake of contradiction, assume that $v_1$, $v_2$, and $v_3$ are \textsc{occupied}. The edges traversed by the DFShead are $\{v_1, v_2\}$ and $\{v_2, v_3\}$. 
    We have the following cases.
    \begin{itemize}
        \item If $v_2$ is the parent of $v_1$, then DFShead is backtracking from $v_1$, which implies, $v_1$ is either \textbf{fullyVisited} or \textbf{partiallyVisited}. Then $v_1$ is \textsc{vacated} by rules (V3) or (V4).
        \item If $v_1 = v_3$, then when the DFShead is at $v_2$, it backtracks, which implies that $v_2$ is either \textbf{fullyVisited} or \textbf{partiallyVisited}. Then $v_2$ is \textsc{vacated} by rules (V3) or (V4).
        \item Otherwise, there is a parent-child path $v_1 \to v_2 \to v_3$, such that $v_2$ is a child of $v_1$. Now, $v_2$ must be \textbf{visited}. Let $u_2$ be the port-1 neighbor of $v_2$. 
        \begin{itemize}
            \item If $u_2 = v_1$, then $v_2$ is \textsc{vacated} by rule (V2) when DFShead reaches $v_2$.
            \item If $u_2 = v_3$, then $u_2$ is \textsc{empty} when DFShead reaches $v_2$ and \texttt{type}$(\{v_2,v_3\})$ is either \texttt{t11} or \texttt{t1q}. Since $v_2$ was \textsc{empty} before the first visit of DFShead, thus $\psi(v_2).\textsf{vacatedNeighbor}$ must be $false$. Thus, $v_2$ is \textsc{vacated} by rule (V5) once DFShead reaches $v_3$ if  \texttt{type}$(\{v_2,v_3\}) =$ \texttt{t1q} or $v_3$ is \textsc{vacated} by rule (V2) if \texttt{type}$(\{v_2,v_3\}) =$ \texttt{t11}. 
            \item If $u_2 (\neq v_3)$ is \textsc{empty}, and since the DFShead moves according to the edge priority, \texttt{type}$(\{v_2,v_3\})$ must be \texttt{tp1}, i.e., $p_{v_3v_2} =1$. Then $v_3$ would be \textsc{vacated} by rule (V2).
            \item If $v_1$ is the root, then $v_1$ remains \textsc{occupied} by rule (V1), however, $u_2$ must be \textsc{empty} since $u_2 \neq v_1$. This falls in the case of $u_2$ is \textsc{empty}.
            \item Consider $u_2\neq v_1$ and $u_2\neq v_3$. If $u_2$ is \textsc{vacated}, we have the following cases.
            \begin{itemize}
                \item If \texttt{type}$(\{v_1,v_2\})$ is \texttt{t1q}, 
                then once DFShead reaches $v_2$, rule (V5) is applicable to $v_1$ and is \textsc{vacated}. This is possible since $v_1$ was \textsc{empty} before the first visit of DFShead and thus $\psi(v_1).\textsf{vacatedNeighbor}$ is $false$.
                \item If \texttt{type}$(\{v_1,v_2\})$ is \texttt{tpq}, then \texttt{type}$(\{v_2,v_3\})$ cannot be \texttt{tpq}, because then \texttt{type}$(v_2)$ would be \textbf{partiallyVisited}. When \texttt{type}$(\{v_2,v_3\})$ is either \texttt{t11} or \texttt{t1q}, it is the case $u_2 = v_3$. If \texttt{type}$(\{v_2,v_3\})$ is \texttt{tp1}, then $v_3$ is \textsc{vacated} once DFShead reaches $v_3$ by rule (V2).
            \end{itemize}
        \end{itemize}
    \end{itemize}
    In each of the cases, we obtain a contradiction by showing at least one of $v_1$, $v_2$ or $v_3$ is \textsc{vacated}. Hence proved.
\end{proof}

\begin{lemma}\label{lem:vacated}
Let $\mathcal{T}$ be the partial tree constructed by Algorithm~\ref{alg:portoneDFS} (\texttt{DFS\_P1Tree()}) at a given moment, and let $k = |\mathcal{T}|$ be the number of vertices in $\mathcal{T}$. Then at least $\lfloor k/3\rfloor$ of these $k$ vertices are in state \textsc{vacated}.
\end{lemma}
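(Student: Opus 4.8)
The plan is to use Lemma~\ref{lem:threeconsecutive} as the sole combinatorial engine and convert its ``one-in-three'' guarantee into a counting bound by a disjoint-grouping argument. First I would order the $k$ vertices of $\mathcal{T}$ as $u_1,u_2,\dots,u_k$ in the order in which the DFShead visits them for the \emph{first} time; since each vertex is first-visited exactly once, this order is well defined. I would then split this list into the $\lfloor k/3\rfloor$ consecutive, pairwise-disjoint triples $(u_1,u_2,u_3),(u_4,u_5,u_6),\dots$, discarding the at most two leftover vertices. If every such triple is shown to contain at least one \textsc{vacated} vertex, then the disjointness of the triples forces the vacated vertices they contribute to be distinct, giving at least $\lfloor k/3\rfloor$ vacated vertices, which is precisely the claim.

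The heart of the argument is the local sub-claim: \emph{among any three consecutively first-visited vertices $u_i,u_{i+1},u_{i+2}$, at least one is \textsc{vacated}}. I would establish it by a short case analysis on the DFShead's behaviour right after each first visit. Upon first-visiting $u_i$ the DFShead either moves forward (first-visiting $u_{i+1}$) or backtracks. If it backtracks at $u_i$, then $u_i$ is \textbf{fullyVisited} or \textbf{partiallyVisited} at that instant and is \textsc{vacated} by rule (V3) or (V4); symmetrically, if it moves forward to $u_{i+1}$ and then backtracks there, $u_{i+1}$ is \textsc{vacated}. In the only remaining case the DFShead takes two forward steps $u_i\to u_{i+1}\to u_{i+2}$, so these are three consecutive nodes visited by the DFShead with $u_i$ first-visited, and Lemma~\ref{lem:threeconsecutive} applies verbatim to produce a \textsc{vacated} vertex inside the triple.

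The step I expect to be the main obstacle is justifying ``a vertex that backtracks at its first visit is \textsc{vacated},'' because rule (V3) vacates a \textbf{fullyVisited} node only when its flag $\textsf{vacatedNeighbor}$ equals $\mathit{false}$. The observation that closes this gap is that $\textsf{vacatedNeighbor}$ is written only into the settled agent of an \textsc{occupied} node, by the vacation action that targets an \textsc{occupied} port-1 neighbor. Since $u_i$ was \textsc{empty} immediately before its first visit, no such write to $u_i$ could have occurred, so its flag is still $\mathit{false}$ at the moment of first visit and (V3) does fire. A second, smaller subtlety is that the statement concerns a snapshot ``at a given moment'': here I would note that every complete triple lies entirely inside $\mathcal{T}$, so the DFShead has already reached all three of its vertices, and the vacation decisions used above---together with those invoked in the proof of Lemma~\ref{lem:threeconsecutive}, which fire no later than when the DFShead reaches the last of $v_1,v_2,v_3$---have all taken effect by that moment. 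With both points settled, the sub-claim holds at the snapshot and the grouping argument delivers the bound.
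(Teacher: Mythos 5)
Your proof is correct and rests on the same two pillars as the paper's own proof: the first-visit ordering of the vertices of $\mathcal{T}$, and Lemma~\ref{lem:threeconsecutive} combined with the observation that a vertex from which the DFShead backtracks immediately after its first visit is \textsc{vacated} by rules (V3)/(V4). The difference is in the bookkeeping, and yours is the cleaner of the two: the paper greedily selects walk-anchored ``blocks'' $B(\bar v_j)$ of size at most $3$ and counts vacated nodes block by block, which forces it to treat degenerate blocks separately and leads to a slightly garbled final count, whereas you partition the first-visit order into fixed, pairwise disjoint triples and prove the uniform sub-claim that any three consecutively first-visited vertices contain a vacated one, from which $\lfloor k/3\rfloor$ follows immediately. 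You also explicitly patch a point the paper leaves implicit: that (V3) really fires at a first visit because $\psi(u_i).\textsf{vacatedNeighbor}$ cannot have been written before $u_i$ ever hosted a settled agent. One caveat, which applies equally to the paper's proof: both arguments tacitly assume that a forward move made immediately after a first visit lands on an \textbf{unvisited} vertex, i.e., it is never a reconfiguration move into a \textbf{partiallyVisited} vertex. If such a move could occur there, your trichotomy (``forward, first-visiting $u_{i+1}$, or backtrack'') would not be exhaustive, and the paper's ``degenerate block $\Rightarrow$ backtrack'' claim would fail for the same reason. The assumption is in fact valid: a vertex is marked \textbf{partiallyVisited} only if its port-1 neighbor was already visited at that time, so a reconfiguration move always originates at a vertex first-visited strictly earlier than the vertex being reconfigured, never at a vertex the head has just discovered. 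Neither you nor the paper states this; adding that one-line justification would make your argument airtight.
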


\begin{proof}
We define $v_i \prec v_j$, if $v_i$ is visited before $v_j$ by the DFShead for the first time.
Consider the vertices in this order as $D_v = (v_1,v_2,\dots,v_k)$, where $v_i \prec v_{i+1}$ for $1 \leq i <k$.  
Let $v_i'$ and $v_i''$ be the two subsequent nodes visited by DFShead immediately after $v_i$.
Now, we define $B(v_i)$ be the corresponding block of $v_i$ comprising of nodes in $(v_i,v_i',v_i'')$ if they appear after $v_i$ in the DFS order.
Specifically, if $v_i' \prec v_i$, then $v_i'$ does not belong to $B(v_i)$; likewise for $v_i''$. 
We say $B(v_i)$ is degenerate if $|B(v_i)| < 3$. By default, $|B(v_i)| \ge 1$ for all $1 \le i \le k$, since $v_i \in B(v_i)$.
If $B(v_i)$ is not degenerate, then it contains $v_i, v_{i+1}$, and  $v_{i+2}$.

We determine a subsequence $S_v = (\bar{v}_0,\dots, \bar{v}_l)$ of $D_v$ iteratively, 
(i) $v_0 = \bar{v}_0$; (ii) $\bar{v}_j = v_i$ where $v_i$ is the first element in $D_v \setminus \bigcup_{\alpha= 0}^{j-1}  B(\bar{v}_\alpha)$.
We have $l \geq \lceil k/3\rceil$, since $B(v_i) \leq 3$.
When $B(\bar{v}_j) < 3$ for $j < l$, then the DFShead backtracks from $\bar{v}_j$, i.e., $\bar{v}_j$ is \textsc{vacated}. Also, when $B(\bar{v}_j) = 3$, at least one node in $B(\bar{v}_j)$ is \textsc{vacated} by Lemma~\ref{lem:threeconsecutive}.

Finally, consider $B(\bar{v}_l)$. If $B(\bar{v}_l) = 3$, then we have a \textsc{vacated} node in $B(\bar{v}_l) = 3$. 
If $B(\bar{v}_l) < 3$, then there may not be \textsc{vacated} nodes in $B(\bar{v}_l)$.
Overall, we have at least $l-1$ \textsc{vacated} nodes, where $l - 1 = \lfloor k/3\rfloor$. Hence proved.
\end{proof}

\subsection{Parallel Probing}
\label{subsec:probe}
Now, we present the core technique on which our {\dis} algorithm hinges. 
We have selected the nodes that are designated as \textsc{vacated}. The agents settled at those nodes travel with the DFShead to help with the neighborhood search. 
Here, we show to use the agents at DFShead to perform a neighborhood search, called {\em Parallel Probing}, to determine the state of neighbors from \textsc{empty}, \textsc{vacated} and \textsc{occupied}; and obtain the scout result corresponding to a port $p_{xy}$ as the 4-tuple $\langle p_{xy}, \texttt{type}(\{x,y\}), \texttt{type}(y), \psi(y) \rangle$. The pseudocode of the algorithm is given in Algorithm \ref{proc:probe} which we call \texttt{Parallel\_Probe()}.

\vspace{2mm}
\noindent{\bf Highlevel overview.}
The core idea of parallel probing stems from the fundamental observation that each node has a port-1 neighbor. As we have seen in Section~\ref{subsec:vacant}, a \textbf{visited} node is \textsc{vacated} when its port-1 neighbor is occupied or when a node is \textbf{partiallyVisited} or \textbf{fullyVisited}; we call these agents settled scouts. In \textsc{occupied} nodes the settled agent remains, but in \textsc{vacated} nodes the settled agent travels with the DFShead.

Consider a node $x$ where the DFShead is doing a neighborhood search. DFShead first settles an agent at $x$. 
The unsettled agents and settled scouts perform the neighborhood search.
The ports at $x$ are assigned to agents in the increasing order of their IDs until all ports are probed.
An agent visiting a neighbor $y$ (i.e., probing port $p_{xy}$), determines whether $y$ is \textsc{empty}, \textsc{occupied} or \textsc{vacated}.
By default, if $y$ is \textsc{occupied}, then the probing agent returns directly.
The challenge arises when the probing agent must distinguish between a node that is \textsc{empty} vs. \textsc{vacated}.
When a \textbf{visited} node is \textsc{vacated}, its port-1 neighbor is \textsc{occupied} in the tree.
Exploiting this fact, the probing agent visits the port-1 neighbor $z$ of $y$. If $z$ is occupied, then it returns to $x$. At $x$, it checks among the other agents present, if there exists an agent $b$ such that port-1 neighbor of $b$ is $\psi(z)$. 
However, the \textsc{vacated} node $y$ can be \textbf{partiallyVisited} or \textbf{fullyVisited}.
In that case, $z$ may be a \textbf{visited} node that is \textsc{vacated}. Then the probing agent visits port-1 neighbor $w$ of $z$, and returns to $x$ if $\psi(w)$ is present. 
The probing agent can then transitively check for an agent $c$ such that $\psi(w)$ is $c$'s port-1 neighbor; and $b$ such that $c$ is $b$'s port-1 neighbor.

\vspace{2mm}
\noindent{\bf Detailed description.}
Let $x$ be the DFShead with settled agent $\psi(x)$. Let its parent port be \(\psi(x).\textsf{parentPort}\) ($\bot$
at the root).
For every
\(
p_{xy}\in\{1,\dots,\delta_x\}\setminus\{\psi(x).\textsf{parentPort}\}
\)
the head chooses a scout agent \(a\in A_{scout}\) in the increasing order of their ID to scout the ports leading to $N(v)$. 
To describe the rules in a simple manner, we consider the neighbor $y$ to be a variable.
Scout $a$ learns the edge type \(\texttt{type}(\{x,y\}) \in \{\texttt{tpq, tp1, t11, t1q}\}\) when it reaches the neighbor $y\in N(v)$, based on the arrival port $p_{yx}$.
Upon arrival the scout sets
$a.\textsf{scoutEdgeType}\;\gets\;\texttt{type}(\{x,y\})$. 
The agent $a$ can also determine $\xi(y)$ as the settled node present at $y$. 
The node $y$ can be either \textsc{empty}, \textsc{vacated} or \textsc{occupied}.
Note that, when $\psi(y)$ is \textsc{vacated} or \textsc{empty}, $\xi(y)$ is $\bot$.
The node type of $y$ is stored at $\psi(y).\textsf{nodeType}$ (invalid when $y$ is \textbf{unvisited}, and assigned $\bot$). 
At node $x$, the probe rules are as follows to determine $\psi(y)$. Once $\psi(y)$ is determined, the scout result is stored as $a.\textsf{scoutResult} \gets \langle p_{xy}, a.\textsf{scoutEdgeType}, \psi(y).\textsf{nodeType}, \psi(y) \rangle$.
\begin{Ren}\sloppy
\item If $\xi(y)\neq\bot$, then $\psi(y)=\xi(y)$ and return to $x$.

\item If $\xi(y)=\bot$ and $p_{yx}=1$, then $\psi(y)=\bot$ and return to $x$.

\item If $\xi(y)=\bot$ and $p_{yx}\neq1$, let $z$ be the port-1 neighbor of $y$. Go to $z$. Store $a.\textsf{scoutP1Neighbor}=\xi(z)$ and
            $a.\textsf{scoutPortAtP1Neighbor}=p_{zy}$.
      \begin{Ren}
      \item If $\xi(z)\neq\bot$, return to $x$ and check whether
            $\exists\,b\in A_{scout}$ with
            $b.\textsf{P1Neighbor}=\xi(z)$ and
            $b.\textsf{portAtP1Neighbor}=p_{zy}$.
            \begin{Ren}
            \item If such $b$ exists, then $\psi(y)=b$.
            \item Otherwise $\psi(y)=\bot$.
            \end{Ren}

      \item
            If $\xi(z)=\bot$ and $p_{zy}=1$, then $\psi(y)=\bot$ and return to $x$.

      \item
            If $\xi(z)=\bot$ and $p_{zy}\neq1$, visit the port-1 neighbor $w$ of $z$.
            Store $a.\textsf{scoutP1P1Neighbor}=\xi(w)$ and
            $a.\textsf{scoutPortAtP1P1Neighbor}=p_{wz}$.
            \begin{Ren}
            \item
                  If $\xi(w)=\bot$, then $\psi(y)=\bot$.
            \item
                  If $\xi(w)\neq\bot$, return to $x$ and check  
                  (i) $\exists\,c\in A_{scout}$ with
                  $c.\textsf{port1Neighbor}=\xi(w)$ and
                  $c.\textsf{portAtP1Neighbor}=p_{wz}$, and  
                  (ii) $\exists\,b\in A_{scout}$ with
                  $b.\textsf{P1Neighbor}=c$ and
                  $b.\textsf{portAtP1Neighbor}=p_{zy}$.
                  \begin{itemize}
                  \item[$(\alpha)$] If both $c$ and $b$ exist, then $\psi(y)=b$.
                  \item[$(\beta)$] Otherwise $\psi(y)=\bot$.
                  \end{itemize}
            \end{Ren}
      \end{Ren}
\end{Ren}

Note that, the checking for presence of another agent that was originally \textsc{vacated} happens only after all the agents in $A_{scout}$ return to $x$.

Fig.~\ref{fig:probe} shows one instance for every rule from
\textbf{(R1)} to \textbf{(R3c-ii)}.  
The central red node $x$ is the position of DFShead performing the parallel probe.
Small white circles carry local port numbers.
\begin{figure}[!t]
\centering
\ifshowtikz
\resizebox{\textwidth}{!}{
\begin{tikzpicture}[
  node/.style  = {circle,draw=black,minimum size=7.5mm,inner sep=0},
  port/.style  = {circle,draw=black,fill=white,font=\tiny,inner sep=0pt,minimum size=8pt},
  edge/.style  = {gray!70,thick},
  x node/.style={node,fill=red!60},  occ/.style={node,fill=gray!35},
  vis/.style={node,fill=blue!40},   part/.style={node,fill=green!40},
  full/.style={node,fill=cyan!40}, emp/.style={node,fill=white},
  lbl/.style={font=\small}]
\def\R{3.5cm}

\node[x node] (x) at (0,0) {$x$};

\node[occ]   (y1)    at  ( 90:\R)            {$y_{1}$};
\node[emp]   (y2)    at  ( -90:\R)            {$y_{2}$};
\node[vis]   (y3)    at  ( 30:\R)            {$y_{3a}$};
\node[part]   (y3a1)  at  (  0:\R)            {$y_{3a}'$};
\node[full]  (y3a2)  at  (-30:\R)            {$y_{3a}''$};
\node[emp]  (y3a3)  at  (-60:\R)            {$y_{3a}'''$};
\node[emp]   (y3b)   at  ($(60:\R)$)            {$y_{3b}$};
\node[emp]   (y3c1a) at  (-120:\R)           {$y_{3c}'$};
\node[emp]   (y3c2a) at  (-150:\R)           {$\bar{y}_{3c}$};
\node[emp]  (y3c2b) at  ( 180:\R)           {$\bar{y}_{3c}'$};
\node[part]   (y3c2c) at  ( 150:\R)           {$\bar{y}_{3c}''$};
\node[full]   (y3c2d) at  ( 120:\R)          {$\bar{y}_{3c}'''$};

\node[occ] (z3)      at ($(y3)+(3,0)$)         {$z_{3a}$};
\node[occ] (z3a1)    at ($(y3a1)+(3,0)$)       {$z_{3a}'$};
\node[occ] (z3a2)    at ($(y3a2)+(3,0)$)       {$z_{3a}''$};
\node[occ] (z3a3)    at ($(y3a3)+(3,0)$)       {$z_{3a}'''$};
\node[emp] (z3b)     at ($(y3b)+(3,0)$)        {$z_{3b}$};
\node[emp] (z3c1a)   at ($(y3c1a)+(-3,0)$)     {$z_{3c}$};
\node[emp] (z3c2a)   at ($(y3c2a)+(-3,0)$)     {$\bar{z}_{3c}$};
\node[vis] (z3c2b)   at ($(y3c2b)+(-3,0)$)     {$\bar{z}_{3c}'$};
\node[vis] (z3c2c)   at ($(y3c2c)+(-3,0)$)     {$\bar{z}_{3c}''$};
\node[vis] (z3c2d)   at ($(y3c2d)+(-3,0)$)     {$\bar{z}_{3c}'''$};

\node[emp] (w3c1a)   at ($(z3c1a)+(-3,0)$)     {$w_{3c}'$};
\node[occ] (w3c2a)   at ($(z3c2a)+(-3,0)$)     {$\bar{w}_{3c}$};
\node[occ] (w3c2b)   at ($(z3c2b)+(-3,0)$)     {$\bar{w}_{3c}'$};
\node[occ] (w3c2c)   at ($(z3c2c)+(-3,0)$)     {$\bar{w}_{3c}''$};
\node[occ] (w3c2d)   at ($(z3c2d)+(-3,0)$)     {$\bar{w}_{3c}'''$};

\foreach \i/\p/\q in {1/1/1,2/7/1,3/3/2,3a1/4/2,3a2/5/2,3a3/6/2,3b/2/2,3c1a/8/2,%
                      3c2a/9/2,3c2b/10/2,3c2c/11/2, 3c2d/12/2}
  {
    \draw[edge] (x)--(y\i);
    \node[port] at ($(x)!0.8cm!(y\i)$) {\p};
    \node[port] at ($(y\i)!0.6cm!(x)$) {\q};
  }

\foreach \i/\p/\q in {3/1/2,3a1/1/2,3a2/1/1,3a3/1/1,3b/1/3,3c1a/1/2,
                      3c2a/1/2,3c2b/1/2,3c2c/1/2,3c2d/1/2}
  {
    \draw[edge] (y\i)--(z\i);
    \node[port] at ($(y\i)!0.6cm!(z\i)$) {\p};
    \node[port] at ($(z\i)!0.6cm!(y\i)$) {\q};
  }

\foreach \j in {3c1a,3c2a,3c2b,3c2c,3c2d}
  {
    \draw[edge] (z\j)--(w\j);
    \node[port] at ($(z\j)!0.6cm!(w\j)$) {1};
    \node[port] at ($(w\j)!0.6cm!(z\j)$) {1};
  }

\node[lbl] at ($(y1)+(0,0.55)$)                {\textbf{R1}};
\node[lbl] at ($(y2)+(0,-0.55)$)                {\textbf{R2}};
\node[lbl] at ($(y3)+(0.8,0.4)$)               {\textbf{R3a-i}};
\node[lbl] at ($(y3a1)+(0.8,0.4)$)             {\textbf{R3a-i}};
\node[lbl] at ($(y3a2)+(0.8,0.4)$)             {\textbf{R3a-i}};
\node[lbl] at ($(y3a3)+(0.8,0.4)$)             {\textbf{R3a-ii}};
\node[lbl] at ($(y3b)+(0,0.55)$)              {\textbf{R3b}};
\node[lbl] at ($(y3c1a)+(0,-0.55)$)            {\textbf{R3c-i}};
\node[lbl] at ($(y3c2a)+(0,-0.55)$)            {\textbf{R3c-ii}$(\beta)$};
\node[lbl] at ($(y3c2b)+(-1,-0.35)$)           {\textbf{R3c-ii}$(\beta)$};
\node[lbl] at ($(y3c2c)+(0,0.55)$)             {\textbf{R3c-ii}$(\alpha)$};
\node[lbl] at ($(y3c2d)+(0,0.55)$)             {\textbf{R3c-ii}$(\alpha)$};

\node[vis,node,scale=0.6] (leg3) at ([yshift=1.5cm,xshift=-2.5cm]y1.north) {};
\node[anchor=west] (leg3txt) at (leg3.east) {\textbf{visited} \& \textsc{vacated}};

\node[emp,node,scale=0.6,anchor=east] (leg1) at ([xshift=-6cm, yshift=0.2cm]leg3.west) {};
\node[anchor=west] (leg1txt) at (leg1.east) {\textbf{unvisited} \&  \textsc{empty}};

\node[occ,node,scale=0.6,anchor=west] (leg4) at ([xshift=1cm, yshift=0.2cm]leg3txt.east) {};
\node[anchor=west] (leg4txt) at (leg4.east) {\textbf{visited} \& \textsc{occupied}};

\node[part,node,scale=0.6,anchor=north] (leg2) at ([yshift=-0.2cm]leg1.south) {};
\node[anchor=west] (leg2txt) at (leg2.east) {\textbf{partiallyVisited} \& \textsc{vacated}};

\node[full,node,scale=0.6,anchor=north] (leg5) at ([yshift=-0.2cm]leg4.south) {};
\node[anchor=west] (leg5txt) at (leg5.east) {\textbf{fullyVisited} \& \textsc{vacated}};

\path (leg5txt.east) -- (leg4txt.east) coordinate[midway] (rightalign);
\node[draw,rounded corners,inner sep=4pt,fit=(leg1)(leg2)(leg3)(leg3txt)(leg4)(leg4txt)(leg5)(leg5txt)(rightalign)] {};

\end{tikzpicture}}
\fi
\caption{Examples for rules \textbf{(R1)}–\textbf{(R3c-ii)}.  
Neighbors of $x$ are labeled with the rule that determines whether they are \textsc{empty} or \textsc{occupied}.}
\label{fig:probe}
\end{figure}
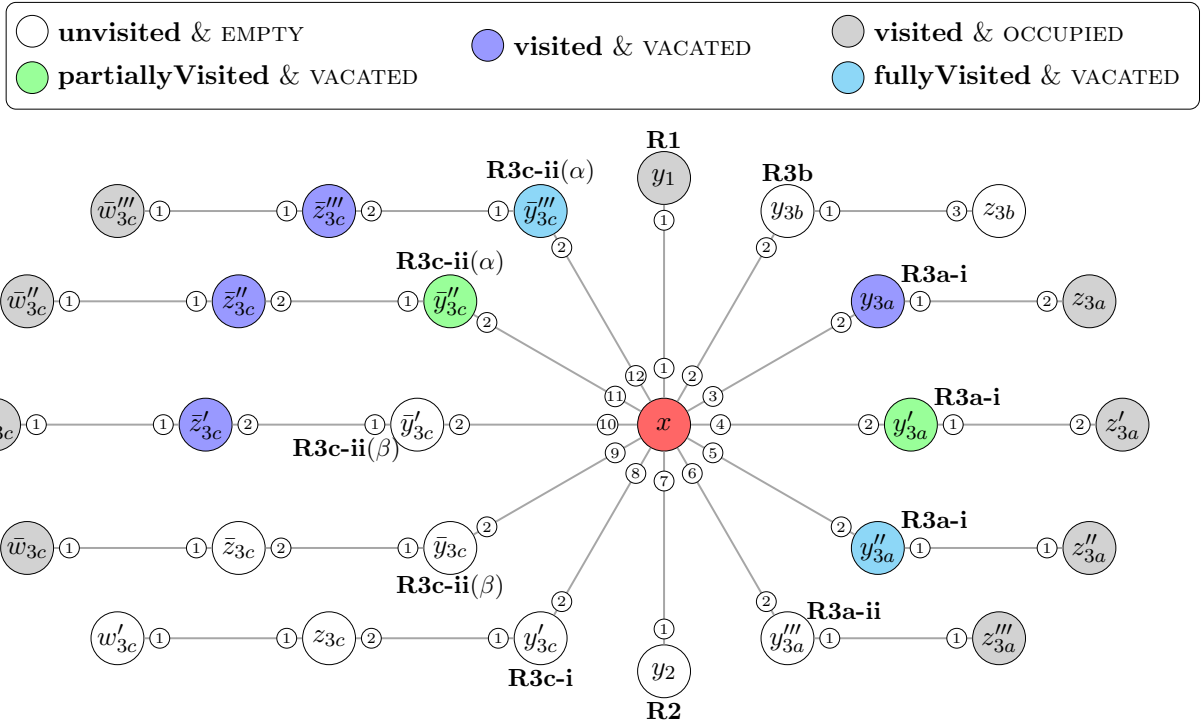

\begin{lemma}
\label{lemma:probeone}
  Algorithm \ref{proc:probe} (\texttt{Parallel\_Probe()}) at a node $x\in V$ correctly determines the state of a neighbor node in $O(1)$ epochs.
\end{lemma}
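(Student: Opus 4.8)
The plan is to prove correctness and the $O(1)$-epoch bound separately, with correctness done by a case analysis mirroring the branching of the probe rules \textbf{(R1)}--\textbf{(R3c-ii)}. The time bound is the easy half: a single probe of port $p_{xy}$ sends a scout along at most the port-$1$ chain $x\to y\to z\to w$ and back to $x$, that is, at most $6$ edge traversals; since in \async{} every agent finishes at least one CCM cycle per epoch, these moves complete within $6=O(1)$ epochs, and the concluding search over $A_{scout}$ is a local (free) computation, deferred until all scouts have returned to $x$.

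For correctness, the engine is one structural invariant that I would extract from the vacating rules \textbf{(V1)}--\textbf{(V5)} and Lemma~\ref{lem:threeconsecutive}: every \textsc{vacated} node $y$ has an \textsc{occupied} node within at most two port-$1$ hops---its port-$1$ neighbor $z$, or $z$'s port-$1$ neighbor $w$---and each settled agent permanently records the pair $(\textsf{P1Neighbor},\textsf{portAtP1Neighbor})$ naming this link. Granting this, I would match every rule to the true state of $y$. Rule \textbf{(R1)} is immediate, since $\xi(y)\neq\bot$ means an agent physically resides at $y$. The \textsc{empty}-concluding branches \textbf{(R3b)} and \textbf{(R3c-i)} exhaust the port-$1$ chain without meeting an \textsc{occupied} node ($\xi(z)=\bot$, and $\xi(w)=\bot$ where applicable); were $y$ \textsc{vacated}, the invariant would force $z$ or $w$ \textsc{occupied}, a contradiction, so $y$ is \textsc{empty}. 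The scout-identifying branches \textbf{(R3a-i)} and \textbf{(R3c-ii)}$(\alpha)$ rely on uniqueness of the search key: a single port $p_{zy}$ at $z$ leads to exactly one neighbor, so at most one scout records port-$1$ neighbor $\xi(z)$ via port $p_{zy}$, whence the returned scout $b$ is precisely $\psi(y)$; the depth-two case chains through $c=\psi(z)$, using persistence of agent IDs so that $\psi(y).\textsf{P1Neighbor}$ still names the now-traveling agent $c$. The symmetric misses \textbf{(R3a-ii)} and \textbf{(R3c-ii)}$(\beta)$ then correctly return \textsc{empty}, since a \textsc{vacated} $y$ would have forced the matching scout to exist.

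The delicate case, which I expect to be the main obstacle (alongside establishing the two-hop invariant itself), is \textbf{(R2)}: $\xi(y)=\bot$ with $p_{yx}=1$ is declared \textsc{empty} with no lookup. Here I would use a timing-and-exclusion argument. Because $p_{yx}=1$, the port-$1$ neighbor of $y$ is exactly the current DFShead $x$, which becomes \textsc{occupied} only on the head's arrival; so at any earlier instant at which $y$ could have been vacated, $x$ was still \textsc{empty}. Then none of \textbf{(V2)}--\textbf{(V4)} can fire---each needs either an \textsc{occupied} port-$1$ neighbor or a \textbf{fullyVisited}/\textbf{partiallyVisited} status incompatible with the empty port-$1$ edge to $x$---and the only way \textbf{(V5)} vacates $y$ is when $x$ is the child of $y$ reached via port $1$, making $y$ the parent of $x$, whose port is excluded from the probe. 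Hence any probed neighbor with $p_{yx}=1$ and $\xi(y)=\bot$ is genuinely \textsc{empty}. With the invariant and this timing argument in hand, the uniqueness of the lookups and the epoch count are routine.
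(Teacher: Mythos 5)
Your skeleton (a two-hop structural invariant, rule-by-rule matching, and the $6$-traversal epoch count) is the right shape, and your time bound and the uniqueness argument behind the lookups in (R3a-i)/(R3c-ii) are sound. The genuine gap is that your ``engine'' is granted, not proven: the claim that every \textsc{vacated} neighbor $y$ has an \textsc{occupied} node within two port-$1$ hops \emph{is} the content of the paper's proof of this lemma, and it does not follow from (V1)--(V5) by inspection, nor from Lemma~\ref{lem:threeconsecutive}, which concerns three nodes consecutive in DFS order and is used only for the counting bound (Lemma~\ref{lem:vacated}). The paper proves the invariant by cases on the type of $y$. If $y$ is \textbf{visited} and \textsc{vacated}, then (V2) forces its port-$1$ neighbor $z$ to be \textsc{occupied}. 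If $y$ is \textbf{partiallyVisited} or \textbf{fullyVisited} and \textsc{vacated}, the crucial step --- absent from your proposal --- is that $z$, when itself \textsc{vacated}, must be of type \textbf{visited}: when the DFShead was at $z$, node $y$ was still \textsc{empty} and joined to $z$ by a \texttt{tp1}/\texttt{t11} edge, so $z$ had an empty neighbor across a non-\texttt{tpq} edge and hence is neither \textbf{partiallyVisited} nor \textbf{fullyVisited}. Only a \textbf{visited}-type node is vacated under (V2), which forces $z$'s port-$1$ neighbor $w$ to be \textsc{occupied}. Without this argument nothing rules out three consecutive \textsc{vacated} nodes along port-$1$ edges, in which case the depth-two probe (and the transitive lookup through $c$ and $b$) would return \textsc{empty} for a tree node, breaking the DFS.

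Second, your timing-and-exclusion argument for (R2) does not hold up. You assert that $x$ ``becomes \textsc{occupied} only on the head's arrival,'' so that any earlier vacation of $y$ took place while $x$ was \textsc{empty}. But $x$ is occupied from the DFShead's \emph{first} arrival onward, while the probe you are analyzing may be executed on a later visit after backtracking; in between, a neighbor $y$ with $p_{yx}=1$ can be reached along a different path and vacated under (V2) precisely \emph{because} its port-$1$ neighbor $x$ is \textsc{occupied}. What makes (R2) safe is not timing but bookkeeping: a port of $x$ is probed only when it has not been checked before (the $\psi(x).\textsf{checked}$ counter), and at $x$'s first probe no neighbor $y$ with $p_{yx}=1$ can yet be \textsc{vacated} --- such a $y$ cannot be \textbf{fullyVisited} or \textbf{partiallyVisited} while its non-\texttt{tpq} neighbor $x$ is empty, and (V2)/(V5) both require $x$ to be occupied or to be $y$'s child through $y$'s port $1$, in which case $p_{xy}$ is the excluded parent port. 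You were right to single out (R2) as delicate --- the paper's own proof is silent on it, implicitly assuming every \textsc{vacated} neighbor is handled in the (R3) branch --- but the justification you propose would not close that hole.
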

\begin{proof}
    On running Algorithm \ref{proc:probe} (\texttt{Parallel\_Probe()}) at $x$, the state of a neighbor $y$ is clearly determined when $\xi(y) \neq \bot$. The settled agent at $y$ remains at $y$, and hence it results in \textsc{occupied}.
    The main challenge of correctly determining state is identifying the \textsc{vacated} neighbors, since by default the result assumes $y$ to be \textsc{empty}.
    When $y$ is \textsc{vacated}, it can be either \textbf{visited}, \textbf{fullyVisited} or \textbf{partiallyVisited}. We handle each of the cases separately.
    \begin{itemize}
        \item When $y$ is \textbf{visited} and \textsc{vacated}, then port-1 neighbor of $y$ must be \textsc{occupied} (rule (V2)). Hence by visiting the port-1 neighbor $z$ of $y$, the agent will find $\xi(z) \neq \bot$. When $y$ was \textsc{vacated}, $\psi(y)$ must have stored $\xi(z)$ as $\psi(y).\textsf{P1Neighbor}$. Thus, after the agent returns to $x$, the agent $\psi(y)$ must be at $x$ since it is travelling with the DFShead.
        \item When $y$ is \textbf{partiallyVisited} and \textsc{vacated}, that means, the port-1 neighbor $z$ of $y$ can either be \textsc{occupied} or \textsc{vacated}. When $\xi(z)\neq \bot$, we can determine the presence $\psi(y)$ in $A_{scout}$ by checking if $\xi(z)$ is the port-1 neighbor of $\psi(y)$. Furthermore, when $z$ is also \textsc{vacated}, then $z$ must be of type \textbf{visited}. Since $y$ can be \textbf{partiallyVisited} if only if there are no empty neighbors of $y$ via \texttt{tp1}, \texttt{t1q} or \texttt{t11} type edges when DFShead first reached $y$. Since $z$ is connected to $y$ via either \texttt{tp1} or \texttt{t11} edge, $z$ must be of type \textbf{visited} since it had an empty neighbor ($y$) when DFShead was at $z$.
        Thus the port-1 neighbor $w$ of $z$ must be \textsc{occupied} when $z$ is \textsc{vacated}. Thus, when $y$ is \textbf{partiallyVisited}, either $z$ or $w$ must be \textsc{occupied}. The agent after returning to $x$ can check for existence of $\psi(z)$ and $\psi(y)$ in $A_{scout}$ and determine $\psi(y)$.
        \item When $y$ is \textbf{fullyVisited}, it is chosen to be \textsc{vacated} only if $\psi(y).\textsf{vacatedNeighbor}$ is $false$ (by (V3)). When DFShead is at $y$, it must have no \textsc{empty} neighbors. The port-1 neighbor $z$ of $y$ must be \textbf{visited} since when DFShead was at $z$, it had an edge of type \texttt{tp1} or \texttt{t11} leading to $y$ (which was \textsc{empty}). Analogous to the previous case, since $z$ is a \textbf{visited} node, if it is \textsc{vacated}, then the port-1 neighbor of $z$ must be \textsc{occupied}. Hence the agent can transitively determine $\psi(y)$ at $x$ from the agents in $A_{scout}$.
    \end{itemize}
    Certainly, the agent from a \textsc{vacated} node maybe probing another port. We can say that all probing agents spend at most 6 epochs before returning to $x$. Hence in $O(1)$ epochs, all probing agents are at $x$. At that time, all agents can determine the existence of other agents in $A_{scout}$.
\end{proof}

\begin{lemma}
\label{lemma:parallelprobe}
  Algorithm \ref{proc:probe} (\texttt{Parallel\_Probe()}) at a node $x\in V$ determines the state of $O(|A_{scout}|)$ neighbor nodes in $O(1)$ epochs.
\end{lemma}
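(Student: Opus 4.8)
The plan is to reduce the parallel case to the single-probe correctness already established in Lemma~\ref{lemma:probeone}, and then argue that the $|A_{scout}|$ probes can be run concurrently without interference and resolved in $O(1)$ epochs. First I would recall that at $x$ the head dispatches the available scouts (together with the unsettled agents) to the unprobed ports $\{1,\dots,\delta_x\}\setminus\{\psi(x).\textsf{parentPort}\}$ in increasing order of their IDs, so that each scout is assigned a \emph{distinct} port. Consequently a single batch of parallel probing covers $\min\{|A_{scout}|,\delta_x-1\}=O(|A_{scout}|)$ neighbors simultaneously, and it suffices to show that the state of every neighbor touched in this batch is correctly determined within a constant number of epochs.

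Next I would argue that the concurrent probes do not interfere. Each scout executing rules \textbf{(R1)}--\textbf{(R3c-ii)} only \emph{reads} the memory of the stationary settled agent it encounters (to obtain $\xi(\cdot)$ together with the associated node type and port) and writes exclusively into its own local fields $a.\textsf{scoutP1Neighbor}$, $a.\textsf{scoutP1P1Neighbor}$, and so on; it never settles an agent, never alters a node's state, and never modifies the partial tree $\mathcal{T}$ during a probe. Hence even when two scouts traverse overlapping portions of the neighborhood (for instance, two different neighbors sharing the same port-1 neighbor $z$), the information each reads is identical to what it would read in isolation, so Lemma~\ref{lemma:probeone} applies verbatim to each scout: its assigned neighbor is classified correctly, and, because every probe makes at most three hops outward and three hops back, its traversal finishes within $6$ epochs.

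The one genuinely asynchronous point, which I expect to be the main obstacle, is the final cross-checking in rules \textbf{(R3a-i)} and \textbf{(R3c-ii)}: to decide that a neighbor $y$ is \textsc{vacated}, a scout must test whether some other scout $b$ (and possibly $c$) with the matching \textsf{P1Neighbor}/\textsf{portAtP1Neighbor} fields is present in $A_{scout}$ at $x$. This determination is sound only \emph{after every dispatched scout has returned}, since the witnessing agent $b$ may itself still be out on its own probe; in \async\ there is no round structure to signal this. I would resolve it exactly as the remark following the rules dictates, namely defer all matching until the pool is complete, and realize the completeness signal through the stationary agent $\psi(x)$, which never leaves $x$ and can count returning probers against the number it dispatched. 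Because each probe costs at most $6$ epochs, all scouts are back within $6$ epochs; $\psi(x)$ then certifies completeness, and every scout performs its (purely local) matching in one further cycle.

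Putting these pieces together, a single parallel-probing batch determines the states of $\min\{|A_{scout}|,\delta_x-1\}=O(|A_{scout}|)$ neighbors, with the entire batch (dispatch, at most $6$ epochs of traversal, and one cycle of local matching) completing in $6+O(1)=O(1)$ epochs, which is the claim. The only care needed beyond Lemma~\ref{lemma:probeone} is the asynchronous synchronization of the scout pool, handled by the non-interference of reads and the counting performed by the immobile $\psi(x)$.
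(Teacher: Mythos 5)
Your proposal is correct and takes essentially the same route as the paper: reduce to Lemma~\ref{lemma:probeone} for the correctness and $O(1)$-epoch cost of each individual probe, then observe that the scouts, assigned distinct ports, run in parallel so that $O(|A_{scout}|)$ neighbor states are resolved in $O(1)$ epochs. The extra care you take about non-interference and about deferring the membership matching until all scouts have returned is precisely what the paper handles (more tersely) via the remark following rules \textbf{(R1)}--\textbf{(R3c-ii)} and inside the proof of Lemma~\ref{lemma:probeone}, so your elaboration fills in the same argument rather than replacing it.
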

\begin{proof}
    As we observe in Lemma~\ref{lemma:probeone}, it takes $O(1)$ edge traversals (thus epochs) to determine the state of a neighbor. Each agent in $A_{scout}$ can probe in parallel to determine the state of $|A_{scout}|$ neighbors in $O(1)$ epochs. Hence $O(|A_{scout}|)$ neighbor states can be identified in $O(1)$ epochs.
\end{proof}
\section{Rooted Dispersion in \async}\label{sec:rooted}
The rooted dispersion of $k \le n$ agents unfolds in two phases. In the first phase, the agents follow Algorithm~\ref{alg:portoneDFS} (\texttt{DFS\_P1tree()}) to construct a {\sc P1Tree}, only until there are unsettled agents. Once the tree size is $k$, the construction is complete. The agents that were settled at \textsc{vacated} nodes have traveled with the DFShead helping with the construction of the {\sc P1Tree}. 
In the second phase, the agents retrace their path along the tree edges to return the settled agents from \textsc{vacated} nodes to their originally settled node.  
We will utilize the techniques described in Sections~\ref{subsec:vacant} and~\ref{subsec:probe} to construct the {\sc P1Tree}. 
Further, we also use Algorithm~\ref{alg:retrace} (\texttt{Retrace()}; pseudocode in Appendix~\ref{app:retrace}) to return the \emph{settledScout} agents to the original node they settled at.
Now, we describe the Algorithm~\ref{alg:rooted} (\texttt{RootedAsync()}; pseudocode in Appendix~\ref{app:rooted}). 

\vspace{2mm}
\noindent\textbf{Description.} Initially, the agents are located at $v_0$ and have state \emph{unsettled}. 
On visiting a node $v$, if there is no settled agent in $v$, the agent with the highest ID among the unsettled agents at $v$ settles and becomes $\psi(v)$. The settled agent sets its \textsf{parentPort} to $\bot$ and its \textsf{arrivalPort} to the port it used to reach $v$. 
It sets \textsf{portAtParent} to the port at the parent node that was used to reach $v$. 
All other agents at $v$ are part of the scout pool, denoted as $A_{scout}$. The settled agent conducts neighborhood search using \texttt{Parallel\_Probe()}. 
Then it checks if it can vacate its position by calling Algorithm~\ref{alg:vacant} \texttt{Can\_Vacate()} (pseudocode in Appendix \ref{app:vacant}). If the settled agent can vacate, it becomes part of $A_{vacated}$ and is added to $A_{scout}$ with state \emph{settledScout}. All agents in $A_{scout}$ move through the next port returned by \texttt{Parallel\_Probe()}. If no port is available, all agents in $A_{scout}$ move through $\psi(v).\textsf{parentPort}$. The settled agent updates its \textsf{recentPort} to the port it used to move. The process continues until no unsettled agents remain.
To keep track of the tree, a settled agent at $v$ also keeps track of its sibling $w$ (if exists) that was visited immediately before at its parent $u$ and the settled agent at $u$ only keeps track of the last child (i.e., $v$). 

\subsection{Retrace Phase}\label{subsec:retrace}
Once the {\sc P1Tree} contains exactly $k$ vertices, each \textsc{vacated} node
has a \emph{settledScout} that travelled alongside the DFShead during the
construction. These agents, collected in the set $A_{\mathit{vacated}}$ and
currently co-located at the last vertex added to the tree, now have to walk
backwards through the tree so that every \textsc{vacated} vertex regains its
original settled agent.  We call this controlled backward traversal
\emph{retrace}.

\vspace{2mm}
\noindent\textbf{Local variables maintained by every settled agent
$\psi(v)$.}
\begin{itemize}\setlength\itemsep{0.1em}
  \item $\psi(v).\textsf{parent}$: ID of agent at parent node $u$ $\psi(u).\textsf{ID}$(is ${\bot}$ for the root), and the port at $u$ that leads to $v$;
  \item $\psi(v).\textsf{parentPort}$: the port of $v$ that leads to $u$;
  \item $\psi(v).\textsf{recentChild}$: the port at $v$ that leads to the
        most recently visited child;
  \item $\psi(v).\textsf{sibling}$: the ID, and the port of a child at $u$ which was visited immediately before $v$. It is $\bot$ when $v$ is the first child.
\end{itemize}
These pointers are sufficient for retrace:  
\emph{recentChild} tells us where the DFS went last,
and \emph{sibling} lets us jump sideways to the child that was explored
immediately before the current one.

\vspace{2mm}
\noindent\textbf{Description.}  
Retrace performs a depth first \emph{post-order} walk of the tree:
whenever it backtracks to an internal vertex~$u$ from node $v$ via the port
$\psi(u).\textsf{recentChild}$, it first tries to move
sideways to previous sibling of $v$ (if any, stored in $\psi(v).\textsf{sibling}$); only when no such sibling
exists, it ascends further to the parent of~$u$.
Each sideways jump also updates $\psi(u).\textsf{recentChild}$ so that the just-visited
leaf is \emph{logically deleted} from the tree.
An agent from $A_{vacated}$ realizes that it has reached its original node from the ID stored at a settled agent that we keep track as the next agent ID. Then the agent changes its state from \emph{settledScout} to \emph{settled}. Once all agents become \emph{settled}, \texttt{Retrace()} ends. {\dis} is achieved.

A small example run of Algorithm~\ref{alg:rooted} is in Section~\ref{sec:example} and Table~\ref{tab:variable} lists all the variables used by the agents in the Appendix.

\subsection{Correctness and Complexity}
\begin{lemma}\label{lem:scoutsize}
    For any parallel probe in Algorithm~\ref{alg:rooted} (\emph{\texttt{RootedAsync()}}), $|A_{scout}| \geq \lceil (k-2)/3\rceil$.
\end{lemma}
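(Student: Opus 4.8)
The statement to prove is that, throughout any execution of \texttt{RootedAsync()}, the number of scout agents available for probing is at least $\lceil (k-2)/3 \rceil$. The key observation is that the scout pool $A_{scout}$ consists of two disjoint sources: the unsettled agents still travelling with the DFShead, and the \emph{settledScout} agents collected from \textsc{vacated} nodes (the set $A_{vacated}$). My plan is to bound each contribution separately as a function of the current tree size and then combine them.

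First I would fix a moment during the construction when the partial {\sc P1Tree} $\mathcal{T}$ has size $l = |\mathcal{T}|$. Since each newly visited node settles exactly one agent, the number of agents that have been removed from the unsettled pool is exactly $l$, so the number of remaining unsettled agents is $k - l$. Next I would invoke Lemma~\ref{lem:vacated}, which guarantees that at least $\lfloor l/3 \rfloor$ of the $l$ tree vertices are in state \textsc{vacated}; the settledScout agent from each such vacated node is in $A_{scout}$. Hence
\[
|A_{scout}| \;\ge\; (k - l) + \left\lfloor \frac{l}{3} \right\rfloor .
\]
The remaining task is purely an optimization: minimize the right-hand side over the range of $l$ that can occur during probing, namely $1 \le l \le k$ (probing is only performed while unsettled agents remain, but I would argue the bound holds across the whole relevant range).

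The core computation is to show $f(l) := (k - l) + \lfloor l/3 \rfloor \ge \lceil (k-2)/3 \rceil$. Since $f$ decreases as $l$ grows (each unit increase in $l$ loses one unsettled agent but gains at most one vacated agent every three steps), the minimum over the valid range is attained at the largest $l$, i.e.\ $l = k$, giving $f(k) = \lfloor k/3 \rfloor$. I would then verify the elementary inequality $\lfloor k/3 \rfloor \ge \lceil (k-2)/3 \rceil$ by checking the three residue classes of $k \bmod 3$: for $k \equiv 0$ both sides equal $k/3$; for $k \equiv 1$ both equal $(k-1)/3$; and for $k \equiv 2$ both equal $(k-2)/3$. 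Thus the floor/ceiling bound matches, confirming the claim.

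The main obstacle I anticipate is not the arithmetic but justifying that the counting argument is valid \emph{at the precise moment a parallel probe is invoked}, under asynchrony. In particular, I must ensure that every agent counted as a settledScout has genuinely completed its vacate decision (via \texttt{Can\_Vacate()}) and has physically joined the DFShead's scout pool before the probe begins, and that none of the $k-l$ unsettled agents has been double-counted or has already settled. This requires appealing to the invariant that vacated agents travel with the DFShead and are co-located at the current node whenever a probe starts, so that Lemma~\ref{lem:vacated}'s structural count translates directly into agents physically present at $x$. Once that synchronization invariant is in place, the rest is the routine bound above.
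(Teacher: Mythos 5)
Your proposal is correct and takes essentially the same route as the paper's own proof: both decompose $A_{scout}$ into the $k-l$ unsettled agents plus the at least $\lfloor l/3\rfloor$ \emph{settledScout} agents guaranteed by Lemma~\ref{lem:vacated}, observe the resulting expression is non-increasing in the tree size, and evaluate it at the extreme point. The only cosmetic difference is in closing the arithmetic: the paper uses the constraint that probing requires an unsettled agent (tree size at most $k-1$) and bounds $\lfloor (k-2)/3\rfloor + 1 \ge \lceil (k-2)/3\rceil$, whereas you conservatively extend the range to $l=k$ and use the identity $\lfloor k/3\rfloor = \lceil (k-2)/3\rceil$; both are valid.
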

\begin{proof}
    Parallel Probe happens only when there are unsettled agents at the current position of DFShead.
    Initially, the size of $\mathcal{T}$ is 1, and there are $k-1$ unsettled agents in $A_{scout}$; thus $k-1 > \lceil (k-2)/3\rceil$.
    Consider a tree $\mathcal{T}$ of size $j$ constructed, and the DFShead is at the $j+1$th vertex. 
    From Lemma~\ref{lem:vacated}, there are at least $\lfloor j/3 \rfloor$ vacated nodes, thus $\lfloor j/3 \rfloor$ agents in $A_{vacated}$. The current position of DFShead has one settled agent and $k - (j+1)$ unsettled agents. Since $k - (j+1) \ge 1$, we have $j \leq k -2$.
    So, $|A_{scout}| = |A_{vacated}| + |A_{unsettled}| =  \lfloor j/3 \rfloor + k - (j +1)$. For $j < k -1$, $\lfloor j/3 \rfloor + k - 1 -j \ge \lceil (k-2)/3 \rceil$.
\end{proof}

We have the following remark due to Lemmas~\ref{lemma:parallelprobe} and \ref{lem:scoutsize}. 
\begin{remark}
    At each node where DFShead performs \texttt{Parallel\_Probe()}, it takes at most $18 = O(1)$ epochs.
\end{remark}
Since DFShead performs \texttt{Parallel\_Probe()} only when there are unsettled agents, and it needs to check at most $k-1$ ports at the root node and $k-2$ ports (excluding the parent port) at a non-root node. Since there are $k-1$ scouts at root node and at least $\lceil (k-2)/3\rceil$ scouts at other nodes, \texttt{Parallel\_Probe()} runs for at most three iterations, each of which is at most 6 epochs.

\begin{lemma}\label{lem:retrace}
    Algorithm~\ref{alg:retrace} (\emph{\texttt{Retrace()}}) takes at most $O(k)$ epochs.
\end{lemma}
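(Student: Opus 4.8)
The plan is to view \texttt{Retrace()} as a single post-order depth-first walk of the completed {\sc P1Tree} $\mathcal{T}$ and to charge the running time to the number of edge traversals performed by the co-located group $A_{vacated}$. Since the first phase stops exactly when $|\mathcal{T}| = k$, the tree has $k$ vertices and therefore $k-1$ edges; bounding the number of traversals by $O(k)$ and the cost of each traversal by $O(1)$ epochs then yields the claim.

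First I would argue that the pointers maintained by the settled agents --- $\textsf{recentChild}$, $\textsf{parentPort}$, and $\textsf{sibling}$ --- realize, for each internal vertex $u$, the reverse-visit-order list of its children as a singly linked list: $\textsf{recentChild}$ at $u$ points to the child visited last by the construction DFS, and each child's $\textsf{sibling}$ field points (via a port at $u$) to the child visited immediately before it, terminating at $\bot$ for the first child. With this structure the walk behaves exactly like an iterative post-order DFS: arriving at $u$ from a freshly finished child $v$, it consults $v.\textsf{sibling}$ to step sideways from $u$ to the previous sibling (a descent of one tree edge), and only when $v.\textsf{sibling}=\bot$ does it ascend to the parent of $u$ through $\psi(u).\textsf{parentPort}$. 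The update of $\textsf{recentChild}$ after each returned child \emph{logically deletes} the finished leaf, so no subtree is ever re-entered.

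Next I would count traversals. Because every child edge $\{u,c\}$ is descended exactly once (either as the initial $\textsf{recentChild}$ step or as a single sideways jump from $u$ to $c$) and ascended exactly once (when $c$'s subtree is finished), and the root is reached only through its unique parent edges, each of the $k-1$ tree edges is traversed at most twice. Hence the whole retrace consists of at most $2(k-1)$ group moves. A settled agent that recognizes its originally settled node (by matching the next expected ID) simply drops out of $A_{vacated}$, which can only shorten the walk, so $2(k-1)$ remains an upper bound even though the group shrinks.

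Finally, each group move is one synchronized edge traversal of the agents currently in $A_{vacated}$, together with a constant amount of \texttt{Communicate}/\texttt{Compute} work at the endpoints; by the definition of an epoch every such move completes within a constant number of epochs in {\async}. Multiplying the $O(k)$ moves by the $O(1)$ per-move cost gives $O(k)$ epochs. The main obstacle is the correctness argument of the second paragraph: one must verify that the $\textsf{recentChild}$/$\textsf{sibling}$ linked structure stays consistent throughout the backward walk --- in particular that logical deletion never strands an unvisited \textsc{vacated} vertex --- so that each edge is indeed traversed only $O(1)$ times; once this invariant is established the $O(k)$ bound is immediate.
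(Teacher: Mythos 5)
Your proposal is correct and matches the paper's own argument: both view \texttt{Retrace()} as a post-order backward walk of the $k$-vertex {\sc P1Tree} driven by the \textsf{recentChild}/\textsf{sibling}/\textsf{parentPort} pointers, and both bound the work by noting that each of the $k-1$ tree edges is traversed at most twice (once sideways/downwards, once while ascending), giving $O(k)$ group moves and hence $O(k)$ epochs. The consistency concern you flag at the end is exactly what the paper's proof dispatches with its observation that edges are revisited in the reverse order of their creation, so no new idea is needed beyond what you already have.
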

\begin{proof}
    Whenever retrace returns to a vertex~$u$ via its most recent child,
either (i) a sibling exists, in which case that sibling becomes the new
\textsf{recentChild} and is visited next, or (ii) no sibling exists, so
\textsf{recentChild} is cleared and retrace ascends to~$u$'s parent.
Thus each edge is visited exactly once in the reverse order of its
creation, guaranteeing that every \textsc{vacated} vertex regains its agent
and that the walk terminates at the root.

Since each tree edge is traversed at most twice (once sideways or downwards
and once while backtracking), the total number of moves made by the agents in
$A_{vacated}$ is $O(k)$.  No additional memory beyond the two local
pointers per settled vertex is required.
\end{proof}

\begin{theorem}
    Algorithm~\ref{alg:rooted} (\emph{\texttt{RootedAsync()}}) takes at most $O(k)$ epochs to achieve {\dis} with $O(\log k+\Delta)$ bits of memory at each agent.
\end{theorem}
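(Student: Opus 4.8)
The plan is to bound the two phases of \texttt{RootedAsync()} separately and then add the costs, while tracking the per-agent memory. The algorithm first builds a {\sc P1Tree} on $k$ vertices (the construction phase) and then runs \texttt{Retrace()} to return every \emph{settledScout} to its originating node (the retrace phase). Since dispersion is exactly the statement that the $k$ agents occupy $k$ distinct vertices, correctness will follow once I argue that the construction settles $k$ agents on $k$ distinct vertices and that retrace relocates the vacated agents back to their own vertices without collision.

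For the construction phase I would first bound the number of DFShead moves. By Claim~1 in the proof of Theorem~\ref{thm:portoneDFS}, every vertex is pushed onto (and popped from) the DFS stack at most twice; consequently the head performs only $O(k)$ forward and backtrack moves in total, including the revisits triggered by reconfiguring \textbf{partiallyVisited} nodes. Whenever a node is declared \textsc{vacated}, the head makes one extra hop to its port-1 neighbor to set the \textsf{vacatedNeighbor} flag, adding at most $O(k)$ further moves. At each position the head performs at most one \texttt{Parallel\_Probe()} before moving on. By Lemma~\ref{lem:scoutsize} the scout pool always satisfies $|A_{scout}|\ge\lceil(k-2)/3\rceil$, so a probe over the at most $k-1$ relevant ports finishes in three parallel iterations (Lemma~\ref{lemma:parallelprobe}), i.e.\ in at most $18=O(1)$ epochs as recorded in the remark following Lemma~\ref{lem:scoutsize}. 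Multiplying $O(k)$ head moves by $O(1)$ epochs per probe gives $O(k)$ epochs for the construction phase.

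For the retrace phase I would invoke Lemma~\ref{lem:retrace} directly, which already certifies $O(k)$ epochs: each tree edge is traversed at most twice during the post-order backward walk, and every \textsc{vacated} vertex is matched with its stored agent identifier and regains its settled agent. Summing the phases yields $O(k)+O(k)=O(k)$ epochs. Correctness is then immediate: Theorem~\ref{thm:portoneDFS} guarantees the constructed tree spans exactly the $k$ visited vertices with one agent settled per vertex, the \textsc{occupied} agents never leave their vertex, and retrace deposits each \emph{settledScout} back onto its unique vacated vertex, so all $k$ agents end on distinct nodes. For the memory bound I would observe that each agent stores only a constant number of fields: its identifier, a state and node-type tag, a handful of port numbers (\textsf{parentPort}, \textsf{portAtParent}, \textsf{arrivalPort}, \textsf{recentChild}, \textsf{recentPort}), one parent/sibling identifier with its port, and the scout bookkeeping of \texttt{Parallel\_Probe()}. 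Crucially, following the design choice of challenge Q3, each agent records information about only \emph{one} port-1 neighbor (its own $\psi(v).\textsf{P1Neighbor}$ and $\psi(v).\textsf{portAtP1Neighbor}$) rather than about all nodes for which it is the port-1 neighbor, which avoids the $\Theta(\Delta)$ blow-up of the alternative. Every identifier costs $O(\log k)$ bits and every port number costs $O(\log\Delta)$ bits, and there are $O(1)$ of each, so the per-agent total is $O(\log k+\log\Delta)=O(\log(k+\Delta))$ bits.

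The main obstacle I anticipate is the construction-phase time bound, specifically ensuring that the $O(1)$-epoch guarantee for \texttt{Parallel\_Probe()} holds at \emph{every} head position and not merely generically. This hinges on the scout pool never dropping below $\lceil(k-2)/3\rceil$, which rests on the vacated-node invariant of Lemma~\ref{lem:vacated} (through Lemma~\ref{lem:scoutsize}), and on the fact that a scout temporarily pulled away to probe one port does not spoil the identification of a different vacated neighbor. Both of these are already discharged by Lemmas~\ref{lemma:probeone}–\ref{lem:scoutsize}, so the present proof should reduce to the bookkeeping of $O(k)$ head moves at $O(1)$ epochs each plus the $O(k)$ retrace bound, with the memory count following from a straightforward field-by-field accounting.
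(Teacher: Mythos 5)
Your proposal is correct and follows essentially the same route as the paper's proof: both decompose the bound into the construction phase (where each DFShead position costs $O(1)$ epochs via Lemma~\ref{lem:scoutsize} and Lemma~\ref{lemma:parallelprobe}, plus the $O(1)$-epoch vacate bookkeeping) and the retrace phase (charged to Lemma~\ref{lem:retrace}), and both close with the same field-by-field memory count of $O(1)$ identifiers at $O(\log k)$ bits and $O(1)$ ports at $O(\log\Delta)$ bits. The only cosmetic difference is that you justify the $O(k)$ bound on head moves via Claim~1 of Theorem~\ref{thm:portoneDFS} (each vertex pushed/popped at most twice), whereas the paper counts at most $k-1$ forward phases directly; your version makes the accounting of backtracks and reconfiguration revisits slightly more explicit, but the argument is the same.
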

\begin{proof}
    Algorithm~\ref{alg:rooted} (\texttt{RootedAsync()}) performs at most $k-1$ forward phases to settle $k-1$ unsettled agents at the root. Each forward phase consists of \texttt{Parallel\_Probe()}, which takes $O(1)$ epochs for determining the next node to visit. Algorithm \texttt{Can\_Vacate()} takes at most $4 = O(1)$ epochs every time an agent is settled to determine if it needs to be vacated; due to DFShead moving to its port-1 neighbor and parent node in the worst case.
    When $\delta_x$ at a node $x$ is larger than $k - 2$, the agents must find sufficient empty neighbors to settle all the unsettled agents. Hence it takes at most $O(1)$ epochs to settle all unsettled agents.
    Finally, the agents that belong to \textsc{vacated} nodes return to their home nodes in $O(k)$ epochs as per Lemma~\ref{lem:retrace}. 
    Hence, overall Algorithm~\ref{alg:rooted} (\texttt{RootedAsync()}) runs in $O(k)$ epochs.

    For the memory complexity, notice that each agent stores a constant number of port numbers (corresponding to probing and retrace) of size $O(\log \Delta)$ bits and IDs (one of each for probe target, port-1 neighbor of probe target, child, sibling and parent) of size $O(\log k)$ bits.
    Other information such as state and types are $O(1)$ bits each.
    Hence in total it needs $O(\log (k+\Delta))$ bits to store all the variables.
\end{proof}
\sloppy
\section{General Dispersion}\label{sec:general}
The idea of general dispersion, i.e., when there are multiple nodes in the initial configuration with more than one agent, broadly follows from the merger strategy of Kshemkalyani and Sharma \cite{KshemkalyaniOPODIS21}. 
Each multiplicity starts its own \texttt{RootedAsync()} with a treelabel that consists of $\langle a_{highest}.\textsf{ID}, a_{highest}.\textsf{level}, a_{highest}.\textsf{weight} \rangle$ to create its {\sc P1Tree}. 
The \textsf{level} parameter starts at $0$. The level increases for every merger between two \textsc{P1Tree} of the same size. Otherwise, the level remains the same and the higher weight \textsc{P1Tree} wins. The weight of the traversal is updated to reflect the total number of agents in the merged \textsc{P1Tree}.

We use the following modifications to make the \texttt{RootedAsync()} compatible with the merger. When doing \texttt{Parallel\_Probe()}, a node $y$ is considered to be \textsc{empty} if the treelabel of $\xi(y)$ is different from the treelabel of the scouting agent.
Analogously for \texttt{Can\_Vacate()} when going to the port-1 neighbor.

Every time a lower priority traversal meets a higher priority traversal, it revisits all the nodes of its own traversal to collect all the settled agents and joins the higher priority traversal, by chasing the DFShead. If a higher priority traversal finds a lower priority traversal head, then it subsumes it completely, otherwise, if it finds settled agents, then it treats those nodes as empty nodes and absorbs the settled agent.
To facilitate the collection of agents by a lower priority DFShead, the agents essentially perform \texttt{Retrace()} to traverse the tree, and unlike retrace, all settled agents move with the DFShead. Once all agents are collected at the root, they follow \textsf{recentPort} to reach the previous position of DFShead where higher priority DFS tree exists.

With these primitives added to \texttt{RootedAsync()}, it can utilize the merger strategy of Kshemkalyani and Sharma \cite{KshemkalyaniOPODIS21} with overhead proportional to size of the tree, which is $O(k)$ in the worst case. Thus we achieve a $O(k)$ time solution for any arbitrary distribution of agents on the graph.

\section{Concluding Remarks}\label{sec:conclusion}
We have considered in this paper a fundamental problem of  {\dis} which asks $k\leq n$ mobile agents with limited memory positioned initially arbitrarily on the nodes (memory-less) of an $n$-node port-labeled anonymous graph of maximum degree $\Delta$ to autonomously relocate to the nodes of the graph such that each node hosts no more than one agent.  This problem has been studied extensively recently focusing on the objective of minimizing time and/or memory complexities. A latest state-of-the-art study provided the optimal time complexity of $O(k)$ in the synchronous setting but only able to show $O(k\log k)$ time complexity in the asynchronous setting. We have closed this complexity gap by providing a $O(k)$ time complexity solution in the asynchronous setting. Our solution is obtained through a novel technique of port-one tree we develop in this paper which prioritizes visiting edges with port-1 at (at least) one end-point. Our result is significant since it shows that synchrony assumption is not a requirement for a time-optimal dispersion solution. For the future work, it would be interesting to explore whether our port-one tree technique could be useful in solving other fundamental problems in port-labeled anonymous graphs.

\bibliographystyle{plain}
\bibliography{references} 

\begin{toappendix}
\section{An Example Run of Algorithm \texttt{RootedAsync()}}
\label{sec:example}
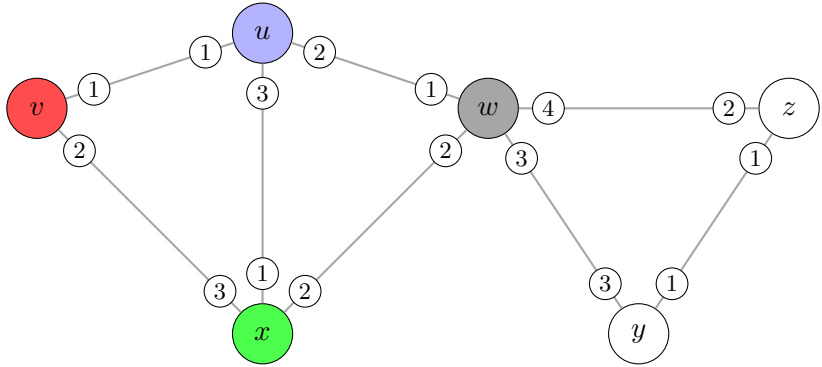
\begin{figure}[ht]
    \centering
    \begin{tikzpicture}[
    every node/.style   = {circle, draw=black, minimum size=8mm, text=black},
    port/.style         = {circle, fill=white, draw=black, text=black,
                           font=\footnotesize, inner sep=1pt, minimum size=12pt},
    edge/.style         = {draw=gray!70, thick}
]

\node[fill=red!70]   (v0) at ( 0, 0)   {$v$};
\node[fill=blue!30]  (v1) at ( 3, 1)   {$u$};
\node[fill=gray!70]  (v2) at ( 6, 0)   {$w$};
\node[fill=green!70] (v3) at ( 3, -3)  {$x$};
\node[fill=white]  (v4) at (8, -3)  {$y$};
\node[fill=white]  (v5) at (10, 0)  {$z$};

\foreach \src/\dst/\psrc/\pdst in {
    v0/v1/1/1, 
    v1/v2/2/1, 
    v2/v3/2/2, 
    v3/v1/1/3, 
    v3/v0/3/2, 
    v2/v4/3/3, 
    v4/v5/1/1, 
    v5/v2/2/4  
}{
    \draw[edge] (\src) -- (\dst);
    \node[port] at ($(\src)!0.8cm!(\dst)$) {\psrc};
    \node[port] at ($(\dst)!0.8cm!(\src)$) {\pdst};
}
\end{tikzpicture}
    \caption{Graph for the example run. Node colors are illustrative from the original image and may not reflect dynamic states in this trace.}
    \label{fig:probing_example_corrected_order}
\end{figure}

Consider $k=6$ agents, $a_1, a_2, a_3, a_4, a_5, a_6$, with IDs sorted in ascending order (i.e., $a_1.\textsf{ID} < a_2.\textsf{ID} < \dots < a_6.\textsf{ID}$). Initially, all agents are at node $v$ (the root) and are in state \textit{unsettled}. The DFS-based P1Tree construction begins. $\psi(N)$ denotes the agent settled at node $N$. $A_{unsettled}$ is the set of unsettled agents. $A_{vacated}$ is the set of agents whose nodes were vacated. $A_{scout} = A_{unsettled} \cup A_{vacated}$. The agent $a_{min}$ is the agent in $A_{scout}$ with the lowest ID.

\begin{enumerate}
    \item \textbf{Settle at $v$ (Root):}
    Agents $\{a_1, \dots, a_6\}$ are at $v$.
    $a_6$ (highest ID) settles: $\psi(v) = a_6$. $a_6.state \gets \textit{settled}$.
    $A_{unsettled} = \{a_1, \dots, a_5\}$. $A_{scout} = \{a_1, \dots, a_5\}$. $a_{min} = a_1$.
    $\psi(v).\textsf{parentPort} = \bot$.
    
    \textbf{\texttt{Parallel\_Probe()} at $v$:} Ports are 1 (to $u$) and 2 (to $x$).
    $a_1$ scouts port 1 (to $u$); $a_2$ scouts port 2 (to $x$).
    \begin{itemize}
        \item $a_1$ (to $u$): Edge $\{v,u\}$ is type \texttt{t11}. Node $u$ is \textbf{unvisited}. Scout $a_1$ finds $\xi(u)=\bot$ and $p_{uv}=1$ (Rule R2). Reports $u$ as \textsc{empty}.
        \item $a_2$ (to $x$): Edge $\{v,x\}$ is type \texttt{tpq}. Node $x$ is \textbf{unvisited}. Scout $a_2$ finds $\xi(x)=\bot, p_{xv} \neq 1$. (Rule R3).
        port-1 neighbor (P1N) of $x$ is $u$. $a_2$ visits $u$. $\xi(u)=\bot, p_{ux} \neq 1$. (Rule R3c).
        P1N of $u$ is $v$. $a_2$ visits $v$. $\xi(v)=a_6$. Reports $x$ as \textsc{empty}.
    \end{itemize}
    Probe results processed: $u$ (\texttt{t11}, \textsc{empty}), $x$ (\texttt{tpq}, \textsc{empty}).
    Based on probe results, node $v$ is \textbf{visited} (it has unvisited neighbors).
    
    \texttt{Can\_Vacate($\psi(v)=a_6$)}: $\psi(v).\textsf{parentPort} = \bot$ (root). $a_6$ remains \textit{settled}. Node $v$ is \textsc{occupied}.
    Next edge is to $u$ (priority). $A_{scout}$ moves to $u$. $a_6.\textsf{recentChild} \gets (\text{port 1 to } u)$. $a_{min}.\textsf{childPort}$ (i.e. $a_1.\textsf{childPort}$) $\gets (\text{port 1 to } u)$.
    
    \item \textbf{Settle at $u$:}
    $a_5$ (highest ID in $A_{unsettled}=\{a_1, \dots, a_5\}$) settles: $\psi(u) = a_5$. $a_5.state \gets \textit{settled}$.
    $a_5.\textsf{parent} \gets (a_6.\textsf{ID}, \text{port 1 at } v)$. $a_5.\textsf{parentPort} \gets (\text{port 1 at } u)$.
    $A_{unsettled} = \{a_1, \dots, a_4\}$.
    
    \textbf{\texttt{Parallel\_Probe()} at $u$:} Ports (excl. parent port 1) are 2 (to $w$), 3 (to $x$).
    $a_1$ scouts port 2 (to $w$); $a_2$ scouts port 3 (to $x$).
    \begin{itemize}
        \item $a_1$ (to $w$): Edge $\{u,w\}$ is type \texttt{tp1}. Node $w$ is \textbf{unvisited}. Rule R2. Reports $w$ as \textsc{empty}.
        \item $a_2$ (to $x$): Edge $\{u,x\}$ is type \texttt{tp1}. Node $x$ is \textbf{unvisited}. Rule R2. Reports $x$ as \textsc{empty}.
    \end{itemize}
    Probe results: $w$ (\texttt{tp1}, \textsc{empty}), $x$ (\texttt{tp1}, \textsc{empty}).
    Node $u$ is \textbf{visited}.

    \texttt{Can\_Vacate($\psi(u)=a_5$)}: $\psi(u).\textsf{nodeType} = \textbf{visited}$. P1N of $u$ is $v$. $\xi(v)=a_6 \neq \bot$ (and $v$ is \textsc{occupied}). Rule (V2) applies. $a_5$ becomes \textit{settledScout}. Node $u$ is \textsc{vacated}.
    $A_{vacated} = \{a_5\}$. $A_{scout} = \{a_1, \dots, a_4, a_5\}$. $a_{min}=a_1$.
    $\psi(u)=a_5$ stores $a_5.\textsf{P1Neighbor} \gets a_6.\textsf{ID}$, $a_5.\textsf{portAtP1Neighbor} \gets p_{vu}=1$.
    Next edge to $w$. $A_{scout}$ moves to $w$. $\psi(u)=a_5$ (now scout) updates $a_5.\textsf{recentChild} \gets (\text{port 2 to } w)$. $a_1.\textsf{childPort} \gets (\text{port 2 to } w)$.
    
    \item \textbf{Settle at $w$:}
    $a_4$ settles: $\psi(w) = a_4$. $a_4.state \gets \textit{settled}$.
    $a_4.\textsf{parent} \gets (a_5.\textsf{ID}, \text{port 2 at } u)$. $a_4.\textsf{parentPort} \gets (\text{port 1 at } w)$.
    $A_{unsettled} = \{a_1, a_2, a_3\}$.
    
    \textbf{\texttt{Parallel\_Probe()} at $w$:} Ports (excl. parent port 1) are 2 (to $x$), 3 (to $y$), 4 (to $z$).
    $a_1$ scouts port 2 (to $x$); $a_2$ scouts port 3 (to $y$); $a_3$ scouts port 4 (to $z$).
    \begin{itemize}
        \item $a_1$ (to $x$): Reports $x$ as \textsc{empty} (as per step 2b logic, after 2-hop check, finds no scout $\psi(x)$ or $\psi(u)$ in $A_{scout}$ that are settled at those nodes for P1N purposes, as $a_5=\psi(u)$ is a scout now, but its P1N info is about $v$).
        \item $a_2$ (to $y$): Reports $y$ as \textsc{empty} (Rule R3b, P1N $z$ is empty, $p_{zy}=1$).
        \item $a_3$ (to $z$): Reports $z$ as \textsc{empty} (Rule R3b, P1N $y$ is empty, $p_{yz}=1$).
    \end{itemize}
    Probe results: $x, y, z$ all (\texttt{tpq}, \textsc{empty}).
    Node $w$ is \textbf{visited}.

    \texttt{Can\_Vacate($\psi(w)=a_4$)}: $\psi(w).\textsf{nodeType} = \textbf{visited}$. P1N of $w$ is $u$. $\xi(u)=\bot$ (since $a_5=\psi(u)$ is a scout). Condition ``P1N is \textsc{occupied}'' is false. $a_4$ remains \textit{settled}. Node $w$ is \textsc{occupied}.
    $A_{vacated} = \{a_5\}$. $A_{scout} = \{a_1, a_2, a_3, a_5\}$. $a_{min}=a_1$.
    Next edge to $x$. $A_{scout}$ moves to $x$. $a_4.\textsf{recentChild} \gets (\text{port 2 to } x)$. $a_1.\textsf{childPort} \gets (\text{port 2 to } x)$.
    
    \item \textbf{Settle at $x$:}
    $a_3$ settles: $\psi(x) = a_3$. $a_3.state \gets \textit{settled}$.
    $a_3.\textsf{parent} \gets (a_4.\textsf{ID}, \text{port 2 at } w)$. $a_3.\textsf{parentPort} \gets (\text{port 2 at } x)$.
    $A_{unsettled} = \{a_1, a_2\}$. $A_{scout} = \{a_1, a_2, a_5\}$. $a_{min}=a_1$.
    
    \textbf{\texttt{Parallel\_Probe()} at $x$:} Ports (excl. parent port 2) are 1 (to $u$), 3 (to $v$).
    \begin{itemize}
        \item $a_1$ (to $u$): Edge $\{x,u\}$ is \texttt{t1p}. $\xi(u)=\bot$. P1N of $u$ is $v$. $\xi(v)=a_6$. At $x$, find $b=\psi(u)$, which is $a_5 \in A_{scout}$, with $a_5.\textsf{P1Neighbor}=a_6.\textsf{ID}$. Yes. Reports $u$ as \textsc{vacated} (\textbf{visited}, Rule R3a-ii).
        \item $a_2$ (to $v$): Edge $\{x,v\}$ is \texttt{tpq}. $\xi(v)=a_6$. Rule R1. Reports $v$ as \textsc{occupied} (\textbf{visited}).
    \end{itemize}
    Probe results: All explorable non-parent neighbors ($u,v$) are not \textsc{empty}. Parent edge $\{w,x\}$ is \texttt{tpq}. Assume no other \textsc{empty} neighbors via non-\texttt{tpq} edges.
    Node $x$ becomes \textbf{partiallyVisited}.

    \texttt{Can\_Vacate($\psi(x)=a_3$)}: $\psi(x).\textsf{nodeType} = \textbf{partiallyVisited}$. Rule (V4). $a_3$ becomes \textit{settledScout}. Node $x$ is \textsc{vacated}.
    $A_{vacated} = \{a_5, a_3\}$. $A_{scout} = \{a_1, a_2, a_5, a_3\}$.
    DFS backtracks (nextPort is $\bot$). $A_{scout}$ moves to $w$. $a_1.\textsf{childDetails} \gets (a_3.\textsf{ID}, \text{port 2 at } w)$.
    
    \item \textbf{At $w$ (after backtrack from $x$):}
    $\psi(w)=a_4$.
    \textbf{\texttt{Parallel\_Probe()} at $w$:} Next ports to probe: 3 (to $y$), 4 (to $z$).
    \begin{itemize}
        \item $a_1$ scouts port 3 (to $y$): reports \textsc{empty}. Same as before.
        \item $a_2$ scouts port 4 (to $z$): reports \textsc{empty}.  Same as before.
    \end{itemize}
    Probe results update: $y$ and $z$ are \textsc{empty}. Node $w$ remains \textbf{visited}.

    \texttt{Can\_Vacate($\psi(w)=a_4$)}: No change, $a_4$ remains \textit{settled}, $w$ is \textsc{occupied}.
    Priority to $y$ (port 3). $A_{scout}$ moves to $y$.
    $a_4.\textsf{recentChild} \gets (\text{port 3 to } y)$. $a_1.\textsf{childPort} \gets (\text{port 3 to } y)$.
    $a_1.\textsf{siblingDetails} \gets (a_3.\textsf{ID}, \text{port 2 at } w)$ (sibling of $y$ is $x$).
    
    \item \textbf{Settle at $y$:}
    $a_2$ settles: $\psi(y) = a_2$. $a_2.state \gets \textit{settled}$.
    $a_2.\textsf{parent} \gets (a_4.\textsf{ID}, \text{port 3 at } w)$. $a_2.\textsf{parentPort} \gets (\text{port 3 at } y)$.
    $a_2.\textsf{sibling} \gets (a_3.\textsf{ID}, \text{port 2 at } w)$.
    $A_{unsettled} = \{a_1\}$. $A_{scout} = \{a_1, a_5, a_3\}$. $a_{min}=a_1$.
    
    \textbf{\texttt{Parallel\_Probe()} at $y$:} Ports (excl. parent port 3) are 1 (to $z$), 2 (to $x$).
    \begin{itemize}
        \item $a_1$ (to $z$): Edge $\{y,z\}$ is \texttt{t11}. Node $z$ is \textbf{unvisited}. Rule R2. Reports $z$ as \textsc{empty}.
        \item $a_5$ (to $x$): Edge $\{y,x\}$ is \texttt{tpq}. $\xi(x)=\bot$. P1N $u$, P1N of $u$ is $v (\xi(v)=a_6)$.
        At $y$: find $c=\psi(u)=a_5 \in A_{scout}$ (yes). Find $b=\psi(x)=a_3 \in A_{scout}$ (yes). Rule R3c-ii($\alpha$). Reports $x$ as \textsc{vacated} (\textbf{partiallyVisited}).
    \end{itemize}
    Probe results: $z$ (\texttt{t11}, \textsc{empty}), $x$ (\texttt{tpq}, \textbf{partiallyVisited}).
    Node $y$ is \textbf{visited}.

    \texttt{Can\_Vacate($\psi(y)=a_2$)}: $\psi(y).\textsf{nodeType} = \textbf{visited}$. P1N of $y$ is $z$. $\xi(z)=\bot$ (it's \textsc{empty}). Returns \textit{settled}. Node $y$ is \textsc{occupied}.
    Next edge to $z$. $A_{scout}$ moves to $z$.
    $a_2.\textsf{recentChild} \gets (\text{port 1 to } z)$. $a_1.\textsf{childPort} \gets (\text{port 1 to } z)$.
    $a_1.\textsf{siblingDetails} \gets \bot$.
    
    \item \textbf{Settle at $z$:}
    $a_1$ settles: $\psi(z) = a_1$. $a_1.state \gets \textit{settled}$.
    $a_1.\textsf{parent} \gets (a_2.\textsf{ID}, \text{port 1 at } y)$. $a_1.\textsf{parentPort} \gets (\text{port 1 at } z)$.
    $a_1.\textsf{sibling} \gets \bot$.
    $A_{unsettled} = \{\}$. $k=6$ agents are settled.
\end{enumerate}
    Construction phase ends as $A_{unsettled}$ is empty.
    Current agent states and logical locations (physical location of scouts is $z$):
    $v: \psi(v)=a_6 (\textit{settled})$
    $u: \psi(u)=a_5 (\textit{settledScout})$
    $w: \psi(w)=a_4 (\textit{settled})$
    $x: \psi(x)=a_3 (\textit{settledScout})$
    $y: \psi(y)=a_2 (\textit{settled})$
    $z: \psi(z)=a_1 (\textit{settled})$
    $A_{vacated} = \{a_3, a_5\}$ (sorted by ID). They are physically co-located at $z$.
    
\textbf{Retrace Phase:} $A_{vacated} = \{a_3, a_5\}$. Current location of $A_{vacated}$ group: $z$. The lead retrace agent $a_{min\_retrace}$ is $a_3$.
    Goal: $a_3$ to $x$, $a_5$ to $u$.
    \begin{enumerate}
        \item[a.] \textbf{At $z$ (settled agent $\psi(z)=a_1$):}
        $A_{vacated}=\{a_3, a_5\}$ is present. $a_{min}=a_3$. Node $z$ is occupied by $a_1 (\xi(z)=a_1)$.
        $\psi(z)=a_1$ has $a_1.\textsf{recentChild} = \bot$ (leaf in construction).
        The group moves to parent of $z$.
        $a_3.\textsf{nextAgentID} \gets \psi(z).\textsf{parent.ID}$ (i.e., $a_2.\textsf{ID}$).
        $a_3.\textsf{nextPort} \gets \psi(z).\textsf{parentPort}$ (port 1 at $z$).
        $a_3.\textsf{siblingDetails} \gets \psi(z).\textsf{sibling}$ ($\bot$).
        $A_{vacated}=\{a_3, a_5\}$ moves to $y$ via $z$'s port 1. $a_3.\textsf{arrivalPort}$ at $y$ becomes port 1.
    
        \item[b.] \textbf{At $y$ (settled agent $\psi(y)=a_2$):}
        $A_{vacated}=\{a_3, a_5\}$ arrives. $a_{min}=a_3$. Node $y$ is occupied by $a_2 (\xi(y)=a_2)$.
        $\psi(y)=a_2$ has $a_2.\textsf{recentChild} = (\text{port 1 to } z)$.
        $a_3.\textsf{arrivalPort}$ (port 1 at $y$) matches $\psi(y).\textsf{recentChild}$.
        $a_3.\textsf{siblingDetails}$ is $\bot$.
        $\psi(y).\textsf{recentChild} \gets \bot$.
        The group moves to parent of $y$.
        $a_3.\textsf{nextAgentID} \gets \psi(y).\textsf{parent.ID}$ (i.e., $a_4.\textsf{ID}$).
        $a_3.\textsf{nextPort} \gets \psi(y).\textsf{parentPort}$ (port 3 at $y$).
        $a_3.\textsf{siblingDetails} \gets \psi(y).\textsf{sibling}$ ($(a_3.\textsf{ID}, \text{port 2 at } w)$).
        $A_{vacated}=\{a_3, a_5\}$ moves to $w$ via $y$'s port 3. $a_3.\textsf{arrivalPort}$ at $w$ becomes port 3.
    
        \item[c.] \textbf{At $w$ (settled agent $\psi(w)=a_4$):}
        $A_{vacated}=\{a_3, a_5\}$ arrives. $a_{min}=a_3$. Node $w$ is occupied by $a_4 (\xi(w)=a_4)$.
        $\psi(w)=a_4$ has $a_4.\textsf{recentChild} = (\text{port 3 to } y)$.
        $a_3.\textsf{arrivalPort}$ (port 3 at $w$) matches $\psi(w).\textsf{recentChild}$.
        $a_3.\textsf{siblingDetails}$ is $(a_3.\textsf{ID}, \text{port 2 at } w)$, which is not $\bot$.
        The group moves to the sibling node $x$.
        $a_3.\textsf{nextAgentID} \gets a_3.\textsf{ID}$ (from siblingDetails).
        $a_3.\textsf{nextPort} \gets (\text{port 2 at } w)$ (from siblingDetails).
        $\psi(w).\textsf{recentChild} \gets (\text{port 2 at } w)$ (updates to current traversal direction).
        $a_3.\textsf{siblingDetails} \gets \bot$.
        $A_{vacated}=\{a_3, a_5\}$ moves to $x$ via $w$'s port 2. $a_3.\textsf{arrivalPort}$ at $x$ becomes port 2.
    
        \item[d.] \textbf{At $x$ (original node of $a_3$, currently \textsc{vacated}):}
        $A_{vacated}=\{a_3, a_5\}$ arrives. $a_{min}=a_3$. Node $x$ is \textsc{vacated} ($\xi(x)=\bot$).
        $a_3.\textsf{nextAgentID}$ is $a_3.\textsf{ID}$. Agent $a_3 \in A_{vacated}$ matches.
        $a_3.state \gets \textit{settled}$. $a_3$ occupies $x$. $\psi(x) \gets a_3$.
        $A_{vacated}$ becomes $\{a_5\}$. $a_{min}$ (for the remaining $A_{vacated}$) is now $a_5$.
        $\psi(x)=a_3$ has $a_3.\textsf{recentChild} = \bot$.
        The group (now just $\{a_5\}$) moves to parent of $x$.
        $a_5.\textsf{nextAgentID} \gets \psi(x).\textsf{parent.ID}$ (i.e., $a_4.\textsf{ID}$).
        $a_5.\textsf{nextPort} \gets \psi(x).\textsf{parentPort}$ (port 2 at $x$).
        $a_5.\textsf{siblingDetails} \gets \psi(x).\textsf{sibling} (\bot)$.
        $A_{vacated}=\{a_5\}$ moves to $w$ via $x$'s port 2. $a_5.\textsf{arrivalPort}$ at $w$ becomes port 2.
    
        \item[e.] \textbf{At $w$ (settled agent $\psi(w)=a_4$):}
        $A_{vacated}=\{a_5\}$ arrives. $a_{min}=a_5$. Node $w$ is occupied by $a_4 (\xi(w)=a_4)$.
        $\psi(w)=a_4$ has $a_4.\textsf{recentChild} = (\text{port 2 at } w)$ (updated in step c).
        $a_5.\textsf{arrivalPort}$ (port 2 at $w$) matches $\psi(w).\textsf{recentChild}$.
        $a_5.\textsf{siblingDetails}$ is $\bot$.
        $\psi(w).\textsf{recentChild} \gets \bot$.
        The group moves to parent of $w$.
        $a_5.\textsf{nextAgentID} \gets \psi(w).\textsf{parent.ID}$ (i.e., $a_5.\textsf{ID}$).
        $a_5.\textsf{nextPort} \gets \psi(w).\textsf{parentPort}$ (port 1 at $w$).
        $a_5.\textsf{siblingDetails} \gets \psi(w).\textsf{sibling} (\bot)$.
        $A_{vacated}=\{a_5\}$ moves to $u$ via $w$'s port 1. $a_5.\textsf{arrivalPort}$ at $u$ becomes port 2.
    
        \item[f.] \textbf{At $u$ (original node of $a_5$, currently \textsc{vacated}):}
        $A_{vacated}=\{a_5\}$ arrives. $a_{min}=a_5$. Node $u$ is \textsc{vacated} ($\xi(u)=\bot$).
        $a_5.\textsf{nextAgentID}$ is $a_5.\textsf{ID}$. Agent $a_5 \in A_{vacated}$ matches.
        $a_5.state \gets \textit{settled}$. $a_5$ occupies $u$. $\psi(u) \gets a_5$.
        $A_{vacated}$ becomes $\emptyset$.
        Retrace phase ends as $A_{vacated}$ is empty.
    \end{enumerate}
    
Final agent settlement: $\psi(v)=a_6, \psi(u)=a_5, \psi(w)=a_4, \psi(x)=a_3, \psi(y)=a_2, \psi(z)=a_1$. Dispersion is achieved.
\end{toappendix}
\begin{toappendix}
\section{Table of Variables}\label{sec:tablevariable}
\begin{longtable}{@{}>{\ttfamily}l p{0.7\textwidth}@{}}
\caption{Variables Used by Agents}\label{tab:variable}\\
\toprule
\textbf{Variable Name} & \textbf{Description} \\
\midrule
\endfirsthead
\multicolumn{2}{c}%
{{\tablename\ \thetable{} -- continued from previous page}} \\
\toprule
\textbf{Variable Name} & \textbf{Description} \\
\midrule
\endhead
\bottomrule
\multicolumn{2}{r}{{Continued on next page}} \\
\endfoot
\bottomrule
\endlastfoot

\multicolumn{2}{@{}l}{\textbf{Generic Agent Properties (applicable to any agent $a$)}} \\ \addlinespace
$a.\textsf{ID}$ & Unique identifier of the agent. \\
$a.\textsf{state}$ & Current operational state of the agent (e.g., \textit{unsettled}, \textit{settled}, \textit{settledScout}). \\
$a.\textsf{arrivalPort}$ & The port number through which agent $a$ arrived at its current node. \\
$a.\textsf{treeLabel}$ & For general dispersion: a tuple $\langle \texttt{leaderID}, \texttt{level}, \texttt{weight} \rangle$ identifying the P1Tree exploration the agent is part of. Contains the ID of the tree's root agent, the tree's merger level, and its current weight (number of agents). \\
\addlinespace

\multicolumn{2}{@{}l}{\textbf{Variables for Settled Agents (e.g., agent $a = \psi(v)$ at node $v$)}} \\ \addlinespace
$a.\textsf{nodeType}$ & The type of node $v$ where the agent is settled (e.g., \textbf{unvisited}, \textbf{partiallyVisited}, \textbf{fullyVisited}, \textbf{visited}). \\
$a.\textsf{parent}$ & A tuple: (ID of the agent settled at $v$'s parent node in the P1Tree, port number at the parent node leading to $v$). Is $\bot$ for the root agent. \\
$a.\textsf{parentPort}$ & The port number at node $v$ that leads to its parent in the P1Tree. Is $\bot$ for the root agent. \\
$a.\textsf{P1Neighbor}$ & ID of the agent settled at the port-1 neighbor of node $v$. Stores $\bot$ if the port-1 neighbor is \textsc{empty} or unvisited. \\
$a.\textsf{portAtP1Neighbor}$ & The port number at $v$'s port-1 neighbor (say $w$) that leads back to $v$. \\
$a.\textsf{vacatedNeighbor}$ & Boolean flag. True if a neighbor of $v$ (for which $v$ is a port-1 neighbor and would need to be \textsc{occupied} for that neighbor to be \textsc{vacated}) has itself become \textsc{vacated}. Used in Algorithm \texttt{Can\_Vacate}. \\
$a.\textsf{recentChild}$ & The port number at node $v$ that leads to the child most recently visited by the DFS traversal originating from $v$. \\
$a.\textsf{sibling}$ & A tuple: (ID of the agent at the previous sibling node in the DFS tree, port number at $v$'s parent leading to that sibling). Is $\bot$ if $v$ is the first child. \\
$a.\textsf{recentPort}$ & The port number most recently used by the scout agents to depart from node $v$ (either towards a child or back to the parent). \\
$a.\textsf{probeResult}$ & Stores the overall highest priority result (e.g., next edge to traverse) obtained from the \texttt{Parallel\_Probe} procedure executed at node $v$. \\
$a.\textsf{checked}$ & The count of incident ports at node $v$ that have already been explored during the \texttt{Parallel\_Probe} procedure. \\
\addlinespace

\multicolumn{2}{@{}l}{\textbf{Temporary Variables for Scouting Agents (agent $a \in A_{\textsc{scout}}$ during \texttt{Parallel\_Probe})}} \\ \addlinespace
$a.\textsf{scoutPort}$ & The port number at the current DFS head node that this scout agent $a$ is assigned to explore. \\
$a.\textsf{scoutEdgeType}$ & The type of the edge (e.g., \texttt{tp1}, \texttt{t11}) discovered by scout $a$ along its \texttt{scoutPort}. \\
$a.\textsf{scoutP1Neighbor}$ & (During probe of neighbor $y$) Stores ID of the agent at port-1 neighbor of $y$ (say $z$), or $\bot$. \\
$a.\textsf{scoutPortAtP1Neighbor}$ & (During probe of $y$) Stores port at $z$ leading to $y$. \\
$a.\textsf{scoutP1P1Neighbor}$ & (During probe of $y$'s P1N $z$) Stores ID of agent at port-1 neighbor of $z$ (say $w$), or $\bot$. \\
$a.\textsf{scoutPortAtP1P1Neighbor}$ & (During probe of $z$) Stores port at $w$ leading to $z$. \\
$a.\textsf{scoutResult}$ & A tuple $\langle p_{xy}, \texttt{edgeType}, \texttt{nodeType}_y, a' \rangle$ storing the individual result found by scout $a$ for its assigned port. \\
\addlinespace

\multicolumn{2}{@{}l}{\textbf{Context Variables for the Lead Scout Agent (e.g., $a = a_{\min}$)}} \\ \addlinespace
$a.\textsf{prevID}$ & ID of the agent settled at the node from which the DFS head (and scout group) just departed. \\
$a.\textsf{childPort}$ & Port at the current DFS head that will be taken to visit the next child. This info is used to set up the child's parent information. \\
$a.\textsf{siblingDetails}$ & A tuple carrying information about the current child's previous sibling, to be passed to the agent settling at the next child. Format: (Sibling Agent ID, Port at Parent to Sibling). \\
$a.\textsf{childDetails}$ & A tuple carrying information about the child node just exited during a backtrack operation, to be used by the parent. Format: (Child Agent ID, Port at Parent to Child). \\
$a.\textsf{nextAgentID}$ & (During Retrace phase) The ID of the agent whose original settled node the $A_{\textsc{vacated}}$ group is currently moving towards. \\
$a.\textsf{nextPort}$ & (During Retrace phase) The port number the $A_{\textsc{vacated}}$ group will take to reach the node associated with \texttt{nextAgentID}. \\

\end{longtable}
\end{toappendix}

\begin{toappendix}
\section{Pseudocodes of Algorithms}
\subsection{Pseudocode of Algorithm \ref{alg:portonecentral} \texttt{Centralized\_P1Tree}()}\label{app:centralized}
\begin{algorithm}[H]
\caption{\textsc{Centralized\_P1Tree}$(G)$}
\label{alg:portonecentral}
\KwIn{connected, port-labelled graph $G=(V,E)$}
\KwOut{a Port-One tree $\mathcal{T}$ of $G$}
$\mathcal{T} \leftarrow \emptyset$\;
mark all vertices \textbf{unvisited}\;
\While{there exists an \textbf{\em unvisited} vertex}{
    pick any \textbf{unvisited} vertex $v$, push $v$ on a stack $S$\;
    \While{$S \neq \emptyset$}{
        $u \leftarrow$ pop $S$\;
        \ForEach{incident edge $e=[u,p_{uv},p_{vu},v]$ of type \emph{\texttt{tp1}, \texttt{t11} \emph{or} \texttt{t1q}}}{
            \If{$v$ is \textbf{\em unvisited}}{
                $\mathcal{T} \leftarrow \mathcal{T} \cup \{e\}$\;
                push $v$ on $S$\;
                mark $v$ \textbf{visited}\;
            }
        }
    }
}
sort all edges of type \texttt{tpq} in lexicographical order\;
\ForEach{edge $e$ in sorted order}{
    \If{$\mathcal{T} \cup \{e\}$ is acyclic}{
        $\mathcal{T} \leftarrow \mathcal{T} \cup \{e\}$\;
        \If{$\mathcal{T}$ forms a single connected component}{
            \textbf{break}\;
        }
    }
}
\Return $\mathcal{T}$\;
\end{algorithm}
\vfill
\pagebreak
\subsection{Pseudocode of Algorithm \ref{alg:portoneDFS} \texttt{DFS\_P1Tree}()}\label{app:distributed}

\begin{algorithm}[H]
    \caption{\texttt{DFS\_P1Tree()}}\label{alg:portoneDFS}
    \KwIn{Root vertex $v_0$, port-labelled graph $G = (V,E)$}
    \KwOut{{\sc P1Tree} $\mathcal{T}$}
    
    edge priority: $\texttt{tp1} \succ \texttt{t11} \sim \texttt{t1q} \succ \texttt{tpq}$, smallest incident port number under each type\; \label{alg:portoneDFS:priority}
    
    initialize: $\mathcal{T} \leftarrow \emptyset$, mark all vertices \textbf{unvisited}, stack $S \gets \emptyset$\;
    $S.push(v_0)$\;
    $\texttt{type}(v_0) \gets \textbf{visited}$\;
    \While{$S \neq \emptyset$}{ \label{alg:portoneDFS:while}
        $u \leftarrow S.top()$\; \label{alg:portoneDFS:top}
        $e_{next} \gets \varnothing$\;
        $\mathcal{E} \gets $ sorted list of edges incident to $u$ in order of edge-priority\;
        \For{$e \in \mathcal{E}$ }{
            let $e = [u,p_{uv}, p_{vu}, v]$ be the edge\;
            \eIf{\em $\texttt{type}(v) = \textbf{unvisited}$}{
            $e_{next} \gets e$\;
            break\;
            }
            {
                \If{\em \texttt{type}$(v)$ = \textbf{partiallyVisited} and $\texttt{type}(\{u,v\}) \in \{\texttt{tp1}, \texttt{t11}\}$}
                {
                    $e_{next} \gets e$\;
                    break\;
                }
            }
        }
        \eIf{$e_{next} \neq \varnothing$}{ \label{alg:portoneDFS:ifEdge}
            let $e=[u,p_{uv},p_{vu},v]$ be the edge\; \label{alg:portoneDFS:letEdge}
            let $e_{\uparrow}=[w,p_{wu},p_{uw},u]$ be the parent edge of $u$\; \label{alg:portoneDFS:parentEdge}
            \eIf{$e,e_{\uparrow}$ are \texttt{tpq} and no incident edge at $u$ in $\mathcal{T}$ is of type $\langle$\texttt{tp1}, \texttt{t11}, \texttt{t1q}$\rangle$}{ \label{alg:portoneDFS:tpqCheck}
                \texttt{type}$(u) \gets$  \textbf{partiallyVisited}\; \label{alg:portoneDFS:markPartial}
                $S.pop()$\; \label{alg:portoneDFS:popPartial}
            }{
                $parent(v) \gets u$\; \label{alg:portoneDFS:addEdge}
                $S.push(v)$\; \label{alg:portoneDFS:pushV}
                \texttt{type}$(v) \gets$  \textbf{visited}\; \label{alg:portoneDFS:markVisited}
            }
        }{
            \texttt{type}$(u) \gets $\textbf{fullyVisited}\;
            $S.pop()$\; \label{alg:portoneDFS:popStack}
        }
    }
    \end{algorithm}
\vfill

\pagebreak
\subsection{Pseudocode of Algorithm \ref{alg:vacant} \texttt{Can\_Vacate}()}\label{app:vacant}
\begin{algorithm}[H]
    \caption{\texttt{Can\_Vacate()}}\label{alg:vacant}
    \KwIn{Agent $\psi(x)$ at node $x$}
    \KwOut{State of $\psi(x)$}
    \uIf{$\psi(x).\textsf{parentPort} = \bot$}{
        \Return \emph{settled}\;
    }
    \uElseIf{$\psi(x).\textsf{nodeType} = \textbf{visited}$}{
        visit port 1 neighbor $w$\;
        \If{$\xi(w) \neq \bot$}{
            $\psi(w) \gets \xi(w)$\;
            set $\psi(w).\textsf{vacatedNeighbor} = true$\;
            return to $x$\;
            \Return \emph{settledScout}\;
        }
        \Return \emph{settled}\;
    }
    \uElseIf{$\psi(x).\textsf{nodeType} =$ \textbf{fullyVisited} and $\psi(x).\textsf{vacatedNeighbor} = false$}{
        \Return \emph{settledScout}\;
    }
    \uElseIf{$\psi(x).\textsf{nodeType} = $ \textbf{partiallyVisited}}{
        \Return \emph{settledScout}\;
    }
    \ElseIf{$\psi(x).\textsf{portAtParent} = 1$}{
        Visit parent $z$ of $x$\;
        \eIf{$\psi(z).\textsf{vacatedNeighbor} = false$}{
            $\psi(z).\textsf{state} \gets $\emph{settledScout}\;
            $\psi(z)$ joins $A_{vacated}$\;
            return to $x$\;
            set $\psi(x).\textsf{vacatedNeighbor} = true$\;
            \Return \emph{settled}\;
        }{
            return to $x$\;
            \Return \emph{settled}\;
        }
    }
\end{algorithm}
\vfill

\pagebreak
\subsection{Pseudocode of Algorithm \ref{proc:probe} \texttt{Parallel\_Probe}()}\label{app:probe}
\begin{minipage}[t]{\textwidth}
\resizebox{!}{0.48\textheight}{%
\begin{minipage}{1.3\textwidth}
\begin{algorithm}[H]
  \caption{\texttt{Parallel\_Probe()}}\label{proc:probe}
  \KwIn{Current DFS-head $x$ with settled agent $\psi(x)$, and $A_{scout}$}
  \KwOut{Next port $p_{xy}$}
  $\psi(x).\textsf{probeResult} \gets \bot$;
  $\psi(x).\textsf{checked} \gets 0$\;
  \While{$\psi(x).\textsf{checked} < \delta_x$}{
    $A_{scout} = \{a_1, \dots, a_s\}$ in the increasing order ID\;
    $\Delta' \gets \min(s, \delta_x - \psi(x).\textsf{checked})$\;
    \For{$j \gets 1$ to $\Delta'$}{
    $a \gets$ next agent in $A_{scout}$\;
    \If{ $\psi(x).\textsf{parent}.\textsf{Port} = j + \psi(x).\textsf{checked}$}{
        $j \gets j+1$; $\Delta' \gets \min(s + 1, \delta_x - \psi(x).\textsf{checked})$\;
    }
    $a.\textsf{scoutPort} \gets j + \psi(x).\textsf{checked}$\;
    move via $a.\textsf{scoutPort}$ to reach $y$\;
    $a.\textsf{scoutEdgeType} \gets \textsf{type}(\{x,y\})$\;

    \eIf{$\xi(y)\neq\bot$}{
      $\psi(y)\gets\xi(y)$\;
      $a$ returns to $x$\;
    }{
      \eIf{$\xi(y)=\bot \land p_{yx}=1$}{
        $\psi(y)\gets\bot$\;
        $a$ returns to $x$\;
      }{
        $z \gets$ port-1 neighbor of $y$\;
        $a.\textsf{scoutP1Neighbor} \gets \xi(z)$\;
        $a.\textsf{scoutPortAtP1Neighbor} \gets p_{zy}$\;

        \eIf{$\xi(z)\neq\bot$}{
          $a$ returns to $x$\;
          check $\exists\,b\in A_{scout}: b.\textsf{scoutP1Neighbor}=\xi(z) \land b.\textsf{scoutPortAtP1Neighbor}=p_{zy}$\;
          \eIf{ $b$ found }{
            $\psi(y)\gets b$
          }{
            $\psi(y)\gets\bot$
          }
        }{
          \eIf{$\xi(z)=\bot \land p_{zy}=1$}{
            $\psi(y)\gets\bot$\;
            $a$ returns to $x$\;
          }{
            $w \gets$ port-1 neighbor of $z$\;
            $a.\textsf{scoutP1P1Neighbor} \gets \xi(w)$\;
            $a.\textsf{scoutPortAtP1P1Neighbor} \gets p_{wz}$\;

            \eIf{$\xi(w)=\bot$}{
              $\psi(y)\gets\bot$
            }{
              $a$ returns to $x$\;
              check $\exists\,c\in A_{scout}: c.\textsf{scoutP1Neighbor}=\xi(w) \land c.\textsf{scoutPortAtP1Neighbor}=p_{wz}$\;
              \eIf{$c$ found}{
              check $\exists\,b\in A_{scout}: b.\textsf{scoutP1Neighbor}=c \land  b.\textsf{scoutPortAtP1Neighbor}=p_{zy}$\;
              \eIf{$b$ found}{
                $\psi(y)\gets b$ 
              }{
              $\psi(y)\gets\bot$
              }
              }
              {
              $\psi(y)\gets\bot$
              }
            }
          }
        }
      }
    }
    $a.\textsf{scoutResult} \gets \langle p_{xy},\;a.\textsf{scoutEdgeType},\;\psi(y).\textsf{nodeType},\;
    \psi(y)\rangle$\;
  }
  $\psi(x).\textsf{checked} \gets \psi(x).\textsf{checked} + \Delta'$\;
  $\psi(x).\textsf{probeResult} \gets$ highest priority edge from $a\in A_{scout}$ based on $\psi(y).\textsf{nodeType}$\;
  }
  \Return $p_{xy}$ from $\psi(x).\textsf{probeResult}$\;
\end{algorithm}
\end{minipage}
}
\end{minipage}

\pagebreak
\subsection{Pseudocode of Algorithm \ref{alg:rooted} \texttt{RootedAsync}()}\label{app:rooted}
\begin{minipage}[t]{\textwidth}
\resizebox{!}{0.48\textheight}{%
\begin{algorithm}[H]
    \caption{\texttt{RootedAsync()}}
    \label{alg:rooted}
    \KwIn{A set of $k$ agents at root node $v_0$ in $G$}
    $A \gets$ set of agents\;
    For each $a \in A$, initialize all variables to $\bot$\;
    \For{$a \in A$}{
        $a.\textsf{state} \gets$ \emph{unsettled};
    }
    $A_{unsettled} \gets A$\;
    $A_{vacated} \gets \emptyset$\;
    \While{$A_{unsettled} \neq \emptyset$}{
        $v \gets$ current node\;
        $A_{scout} \gets A_{unsettled} \cup A_{vacated}$\;
        $a_{min} \gets $ Lowest ID agent in $A_{scout}$ \;
        \If{there is no settled agent in $v$}{
            $\psi(v) \gets$ agent with highest ID in $A_{unsettled}$\;
            $\psi(v).\textsf{state} \gets$ \emph{settled}\;
            $\psi(v).\textsf{parent} \gets ( a_{min}.\textsf{prevID},a_{min}.\textsf{childPort})$\;
            $a_{min}.\textsf{childPort} \gets \bot$\;
            $\psi(v).\textsf{parentPort} \gets a_{min}.\textsf{arrivalPort}$\;
            $A_{unsettled} \gets A_{unsettled} - \{\psi(v)\}$\;
            \If{$A_{unsettled} = \emptyset$}{
                break\;
            }
        }
        $a_{min}.\textsf{prevID} \gets \psi(v).\textsf{ID}$\;
        \If{$\delta_v \ge k-1$}{
            run \texttt{Parallel\_Probe}$(\psi(v), A_{scout})$ for $k-1$ ports\;
            send unsettled agents to empty neighbors\;
            break\;
        }
        $\psi(v).\textsf{sibling} \gets a_{min}.\textsf{siblingDetails}$\;
        $a_{min}.\textsf{siblingDetails} \gets \bot$\;
        nextPort $\gets$ \texttt{Parallel\_Probe}$(\psi(x), A_{scout})$\;
        $\psi(v).\textsf{state} \gets$ \texttt{Can\_Vacate()}\;
        \If{$\psi(v).\textsf{state} =${settledScout}}{
            $A_{vacated} \gets A_{vacated}\cup \{\psi(v)\}$\;
            $A_{scout} \gets A_{unsettled} \cup A_{vacated}$\;
        }
        \eIf{nextPort $\neq \bot$}{
            $\psi(v).\textsf{recentPort} \gets$ nextPort\;
            $a_{min}.\textsf{childPort} \gets$ nextPort\;
            \eIf{$\psi(v).\textsf{recentChild} = \bot$}{
                $\psi(v).\textsf{recentChild} \gets$ nextPort\;
                }{
                    $a_{min}.\textsf{siblingDetails} \gets a_{min}.\textsf{childDetails}$\;
                    $a_{min}.\textsf{childDetails} \gets \bot$\;
                    $\psi(v).\textsf{recentChild} \gets$ nextPort\;
                }
            All agents in $A_{scout}$ move through nextPort\;
        }
        {
            $a_{min}.\textsf{childDetails} \gets (\psi(v).\textsf{ID}, \psi(v).\textsf{portAtParent})$\;
            $a_{min}.\textsf{childPort} \gets \bot$\;
            $\psi(v).\textsf{recentPort} \gets \psi(v).\textsf{parentPort}$\;
            All agents in $A_{scout}$ move though $\psi(v).\textsf{parentPort}$\;
        }
    }
    \texttt{Retrace}($A_{vacated}$)\;
    \end{algorithm}
}
\end{minipage}
\vfill
\pagebreak
\subsection{Pseudocode of Algorithm \ref{alg:retrace} \texttt{Retrace}()}\label{app:retrace}
\begin{minipage}[t]{\textwidth}
\resizebox{\textwidth}{!}{%
\begin{algorithm}[H]
    \caption{\texttt{Retrace()}}
    \label{alg:retrace}
    \KwIn{$A_{vacated}$ - set of agents with state \emph{settledScout}}
    
    \While{$A_{vacated} \neq \emptyset$}{
        $v \gets$ current node\;
        $a_{min} \gets $ Lowest ID agent in $A_{vacated}$ \;
        \If{$\xi(v) = \bot$}{ \tcp{no settled agent present at $v$, it must be in $A_{vacated}$}
            find $a \in A_{vacated}$ with $a.\textsf{ID} = a_{min}.\textsf{nextAgentID}$ at $v$\;
                $a.\textsf{state} \gets$ \emph{settled}\;
                $A_{vacated} \gets A_{vacated} - \{a\}$\;
                $a_{min} \gets$ Lowest ID agent in $A_{vacated}$\;
                $\psi(v) \gets a$\;
        }

        \tcp{If all agents are settled, retrace is complete}
        \If{$A_{vacated} = \emptyset$}{
            break\;
        }
        
        \tcp{Determine next move in post-order traversal}
        \eIf{$\psi(v).\textsf{recentChild} \neq \bot$}{
            \eIf{$\psi(v).\textsf{recentChild} = a_{min}.\textsf{arrivalPort}$}{ 
                \eIf{$a_{min}.\textsf{siblingDetails} = \bot$}{
                    $\psi(v).\textsf{recentChild} \gets \bot$\;
                    $(a_{min}.\textsf{nextAgentID}, a_{min}.\textsf{nextPort}) \gets \psi(v).parent$\;
                    $a_{min}.\textsf{siblingDetails} \gets \psi(v).\textsf{sibling}$\;
                }{
                    $(a_{min}.\textsf{nextAgentID}, a_{min}.\textsf{nextPort}) \gets a_{min}.\textsf{siblingDetails}$\;
                    $a_{min}.\textsf{siblingDetails} \gets \bot$\;
                    $\psi(v).\textsf{recentChild} \gets a_{min}.\textsf{nextPort}$\;
                }
            }
            {
                $a_{min}.\textsf{nextPort} \gets \psi(v).\textsf{recentChild}$\;
                Check if $\exists a \in A_{vacated}: a.\textsf{parent} = (\psi(v).\textsf{ID}, \psi(v).\textsf{recentChild})$\;
                \If{$a$ found}{
                    $a_{min}.\textsf{nextAgentID} \gets a.\textsf{ID}$\;
                    $a_{min}.\textsf{nextPort} \gets \psi(v).\textsf{recentChild}$\;
                }
            }
        }{
            $(parentID, portAtParent) \gets \psi(v).\textsf{parent}$\;
            $a_{min}.\textsf{nextAgentID} \gets parentID$\;
            $a_{min}.\textsf{nextPort} \gets \psi(v).\textsf{parentPort}$\;
            $a_{min}.\textsf{siblingDetails} \gets \psi(v).\textsf{sibling}$\;
        }
        All agents in $A_{vacated}$ move through $a_{min}.\textsf{nextPort}$\;
    }
    \end{algorithm}
}
\end{minipage}
\vfill
\end{toappendix}

\end{document}